\newcommand{\leakEC}{\text{leak}_{\text{EC}}}
\newcommand{\epscor}{\varepsilon_{\text{cor}}}
\newcommand{\epshash}{\varepsilon_{\text{hash}}}
\newcommand{\epssec}{\varepsilon_{\text{sec}}}
\newcommand{\epsqkd}{\varepsilon_{\text{qkd}}}
\newcommand{\epspa}{\varepsilon_{\text{PA}}}
\newcommand{\epspe}{\varepsilon_{\text{PE}}}
\newcommand{\epsec}{\varepsilon_{\text{EC}}}
\newcommand{\epscoh}{\varepsilon_{\text{sec}}^{\text{coh}}}
\newcommand{\fEC}{f_{\text{EC}}}
\newcommand{\ppass}{p_{\text{pass}}}
\newcommand{\ratioPE}{\alpha_{\text{PE}}}
\newcommand{\nPE}{n_{\text{PE}}}
\newcommand*{\rom}[1]{\uppercase\expandafter{\romannumeral #1\relax}}
\renewcommand\bra[1]{{\langle{#1}|}}
\renewcommand\ket[1]{{|{#1}\rangle}}
\newtheorem{theorem}{Theorem}
\newtheorem{lemma}[theorem]{Lemma}
\theoremstyle{definition}
\newtheorem*{primal}{Primal problem}
\newtheorem*{dual}{Dual problem}
\numberwithin{theorem}{section}
\numberwithin{corollary}{section}
\newcommand{\herm}{\text{Herm}}
\DeclareMathOperator{\Tr}{Tr}
\DeclareMathOperator{\LB}{\text{LB}}
\DeclareMathOperator{\UB}{\text{UB}}
\DeclareMathOperator{\tin}{\text{in}}
\DeclareMathOperator{\tout}{\text{out}}
\DeclareMathOperator{\id}{\mathbb{I}}
\DeclarePairedDelimiter{\ceil}{\lceil}{\rceil}
\DeclarePairedDelimiter\abs{\lvert}{\rvert}
\DeclarePairedDelimiterX{\infdivx}[2]{(}{)}{%
  #1\;\delimsize\|\;#2%
}
\newcommand{\infdiv}{D\infdivx}
\newcommand{\kb}[1]{\ket{#1}\bra{#1}}
\newcommand{\matr}[1]{\mathbf{#1}} 
\begin{document}

\newcommand{\RLE}{\affiliation{Research Laboratory of Electronics, Massachusetts Institute of Technology, Cambridge, Massachusetts 02139, USA}}
\newcommand{\BBN}{\affiliation{Raytheon BBN Technologies, 10 Moulton St., Cambridge, Massachuseetts 02138, USA}}

\title{Numerical finite-key analysis of quantum key distribution}
\author{Darius Bunandar}
\email{dariusb@mit.edu}
\RLE
\author{Luke C.~G.~Govia}
\BBN
\author{Hari Krovi}
\BBN
\author{Dirk Englund}
\RLE
\date{July 2019}

\begin{abstract}
    Quantum key distribution (QKD) allows for secure communications safe against attacks by quantum computers. QKD protocols are performed by sending a sizeable, \emph{but finite}, number of quantum signals between the distant parties involved. Many QKD experiments however predict their achievable key rates using asymptotic formulas, which assume the transmission of an infinite number of signals, partly because QKD proofs with finite transmissions (and finite key lengths) can be difficult. Here we develop a robust numerical approach for calculating the key rates for QKD protocols in the finite-key regime in terms of two novel semi-definite programs (SDPs). The first uses the relation between smooth min-entropy and quantum relative entropy, and the second uses the relation between the smooth min-entropy and quantum fidelity. We then solve these SDPs using convex optimization solvers and obtain some of the first numerical calculations of finite key rates for several different protocols, such as BB84, B92, and twin-field QKD. Our numerical approach democratizes the composable security proofs for QKD protocols where the derived keys can be used as an input to another cryptosystem.
\end{abstract}
\maketitle

\section*{Introduction}

Quantum key distribution (QKD), until today, remains the only quantum-resistant method of sharing secret keys and transmitting future-proof secret information at a distance~\citep{pir2019advances}. Even after more than 30 years of development, QKD still does not see widespread adoption, primarily due to practical and theoretical difficulties. 

Practically, QKD as a means of encryption requires a dramatic change to the existing classical fiber-optical communication infrastructure. QKD systems typically require the use of specialized quantum optical devices. For example, some QKD protocols need single photon detectors and dark fiber-optical channels without any classical repeater device, e.g. erbium-doped fiber amplifiers. In other words, the need for significant change to the current telecommunication infrastructure presents a challenge to QKD's wide-use today. 

Theoretically, security proofs are typically complicated, and the key rate derived can be loose due to limited availability of analytical proof techniques. Validating a published security proof is an equally complicated task, and it is likely impractical to expect QKD users to be capable of verifying the security of a protocol.

The security of any QKD protocol is guaranteed when a detailed security analysis certifies that the protocol produces a non-zero secret key rate (in terms of either bits per second or bits per transmission). So far, development in key rate calculations have relied on analytical tools that can be limited in scope to specific protocols. In particular, oftentimes to simplify the analysis, the calculations invoke a high degree of symmetry. Indeed, for some protocols, such as the Bennett-Brassard 1984 (BB84) protocol~\citep{BB84} or the six-state protocol~\citep{Bruss1998}, analytical formulas for the key rates are known. However, in practical implementations of QKD, a lack of symmetry is the norm rather than the exception as experimental imperfections tend to break these symmetries~\citep{Gottesman2004a}. This motivates the need to develop a new method of analyzing the security of QKD protocols that may lack structure.

Recently, \cite{Coles2016} and \cite{Winick2017} proposed two numerical techniques to obtain reliable secret key rate bounds for an arbitrary unstructured QKD protocol. The original technique, described in~\citep{Coles2016}, formulates the problem of calculating the secret key rate in terms of a mathematical optimization problem. Unfortunately, this original formulation resulted in a non-convex problem. The method was improved in~\citep{Winick2017}, which formulates the key rate problem in terms of convex optimization. Commercially available convex optimization tools, such as \texttt{Mosek}~\citep{mosek}, \texttt{SeDuMi}~\citep{sedumi}, or \texttt{SDPT3}~\citep{sdpt3}, can therefore be used to reliably solve the problem. Nevertheless, the problems formulated so far still assumed that Alice and Bob have exchanged an infinite number of signals (and an infinite key length), which is practically impossible. In order to quantify the security of realistic QKD protocols, a new problem that includes the finite-key statistics of the QKD operations must be formulated.

Here, we formulate the key rate problem in terms of a novel semidefinite program (SDP) that considers the practical case of only a finite number of transmitted signals. The program takes as inputs the measured statistics from the parameter estimation step and outputs the key rate as a function of the security parameter of the protocol: $\epsqkd$. The SDP computes a reliable, achievable lower bound on the actual value of the secret key rate. As SDP is a convex optimization problem, we can solve the problem using commercial solvers that often are able to find the global optima. Our problem formulation is reliable, such that even if the solver fails to find the global optimum, the SDP is guaranteed to output an achievable secret key rate. Lastly, since the problem takes into consideration the finite number of signals exchanged, the secret key guaranteed by the method is composable, i.e. can be used as an input to another cryptosystem.

\section*{Results}

\subsection*{The key rate problem in the non-asymptotic regime}
\label{sec:the_key_rate_problem}
We describe the main steps of a typical QKD protocol and take note of the relevant security parameters at each step.
\begin{enumerate}
    \item \emph{Transmission.} A QKD protocol starts with the transmission of quantum signals. Let us assume that $N$ signals are successfully distributed from Alice to Bob. After this step, in the entanglement picture, they will share $N$ entangled quantum states, whose joint state can be described as $\rho_{\matr{A} \matr{B}}$. They will then apply measurements to their respective quantum states to obtain classical data. 
    
    \item \emph{Parameter estimation.} Next, Alice and Bob perform parameter estimation where they reveal a random sample of $m$ signals through the public classical communication channel to estimate the statistics of their data. Sifting is often (but not always) also performed, in which Alice and Bob discard those data where they have chosen a different measurement basis. At this step, they are left with $n \leq N - m$ number of signals, called the raw keys, from which they can eventually generate secret keys. Let us denote their raw keys by $\matr{Z_A}$ and $\matr{Z_B}$, both of which have the length $\abs{\matr{Z_A}} = \abs{\matr{Z_B}} = n$. There is a small probability $\epspe$ that the raw keys obtained are not compatible with the estimated parameters, and it is related to the number of signals $m$ that are used to estimate the relevant statistics of the overall data. (\emph{Relevant security parameter:} $\epspe$.)
    
    \item \emph{Reconciliation.} Alice and Bob then perform key reconciliation (sometimes also referred to as error correction). The key reconciliation step, in which they correct for any possible error between their raw keys, reveals $\leakEC$ number of bits. This error correction step is performed with a certain failure probability $\epsec$, which is the probability that one party computes the wrong guess of the other party's raw keys. (\emph{Relevant security parameter:} $\epsec$.)
    
    \item \emph{Error verification.} To ensure they have identical raw keys, they apply a two-universal hash function and publish $\ceil{\log_2 1/\epscor}$ bits of information. Here, $\epscor = \epshash$, which is the probability two non-identical raw secret keys generate the same hash value. (\emph{Relevant security parameter:} $\epscor$.)
    
    \item \emph{Privacy amplification.}  Next, they apply another two-universal hash function (of different resulting hash length than the previous one in the error verification step) to extract a shorter secret key pair $(\matr{S_A}, \matr{S_B})$ of length $\abs{\matr{S_A}} = \abs{\matr{S_B}} = \ell$. $\epspa$ measures how close the output of the hash function, i.e. the secret keys $\matr{S_A}$ and $\matr{S_B}$ are, from a uniform random bit string conditioned on the eavesdropper's, Eve's, knowledge. (\emph{Relevant security parameter:} $\epspa$.)

\end{enumerate}

In this nonasymptotic regime, we use a generalization of the von Neumann entropy called smooth min-entropy which was developed by~\cite{Renner2005a}. The main significance of smooth min-entropy comes from the fact that it characterizes the number of uniform bits that can be extracted in the privacy amplification step of a QKD protocol. Now, let us take $\matr{E'}$ to be the information that Eve gathered about Alice's raw key $\matr{Z_A}$ up to and including the error correction and verification steps. When Alice and Bob apply a two-universal hash function in the privacy amplification step, they can then extract a $\epssec$-secret key of length:
\begin{equation}
	\ell = H^{\bar{\varepsilon}}_{\min} (\matr{Z}_A| \matr{E'}) - 2 \log_2 \frac{1}{2\epspa},
\end{equation}
for $\bar{\varepsilon} + \epspa \leq \epssec$ (Proof is by the Quantum leftover hash lemma in Ref.~\cite{Tomamichel2011a}). $H^{\bar{\varepsilon}}_{\min} (\matr{Z}_A | \matr{E'})$ is the conditional smoothed min-entropy that quantifies the average probability that Eve guesses $\matr{Z}_A$ correctly using her optimal strategy based on her knowledge of the information $\matr{E'}$.

During the error correction step, a maximum of $\leakEC$ bits of information are revealed about $\matr{Z}_A$. Alice has to send a syndrome bit string of length $\leakEC$ to Bob over the public channel, so that Bob can correct his raw key to match Alice's. Furthermore, during the error verification step, $\ceil{\log_2 (1/\epscor)} \leq \log_2 (2/\epscor)$ bits of information are revealed. If we let $\matr{E}$ be the remaining quantum information Eve has on $\matr{Z}_A$, then
\begin{equation}
	 H^{\bar{\varepsilon}}_{\min} (\matr{Z}_A | \matr{E'}) \geq H^{\bar{\varepsilon}}_{\min} (\matr{Z}_A | \matr{E}) - \leakEC - \log_2 \frac{2}{\epscor}.
\end{equation}
The QKD protocol is said to be $\epsqkd \geq \epscor+\epssec$ secure if it is correct with a probability higher than $1-\epscor$ and is secret with a probability higher than $1-\epssec$.

The quantity $H^{\bar{\varepsilon}}_{\min} (\matr{Z}_A | \matr{E})$ can be simplified in the case of collective attacks, in which Alice and Bob share the state of the form $\rho_{\matr{A} \matr{B}} = \left( \rho_{AB} \right)^{\otimes N}$. In this case, we can also assume that $\rho_{\matr{Z}_A \matr{E}} = \left( \rho_{Z_A E} \right)^{\otimes n}$ since all purifications of $\rho_{\matr{A} \matr{B}}$ are equivalent under a local unitary operation by Eve, and there exists a purification with this property~\citep{Scarani2008}. In other words, after being presented with the tensor product states $\left( \rho_{AB} \right)^{\otimes N}$, Eve is free to choose how to purify this state. (She wants to purify this state because this gives her the most information.) One obvious choice is to purify each transmission such that she has $\ket{\Psi}_{\matr{A}\matr{B}\matr{E}} = \ket{\psi}_{A_1B_1E_1} \otimes \ket{\psi}_{A_2B_2E_2} \otimes \dots \otimes \ket{\psi}_{A_NB_NE_N}$. It is from such a purification, we obtain the tensor product structure of $\rho_{\matr{Z}_A \matr{E}} = \left( \rho_{Z_A E} \right)^{\otimes n}$. The observed statistics of relative detection frequencies, however, only gives some knowledge about the state $\rho_{Z_A E}$. Given that the state $\rho_{Z_A E}$ is contained within a set that contains all $\rho_{Z_A E}$ compatible with the observed statistics, except with probability $\epspe$, we have:
\begin{equation}
	H^{\bar{\varepsilon}}_{\min} (\matr{Z}_A | \matr{E}) \geq H^{\varepsilon}_{\min}(\rho_{Z_A E}^{\otimes n} | \rho_{E}^{\otimes n}),
\end{equation}
where $\bar{\varepsilon} = \varepsilon + \epspe$.

When the error correction step is performed with a failure probability of $\epsec$, i.e. the probability that Bob computes the wrong guess for $\matr{Z}_A$, we can bound the quantity $\leakEC$ with Corrollary 6.3.5 of~\citep{Renner2005a}:
\begin{equation}
	 \frac{1}{n} \leakEC \leq \fEC H(Z_A|Z_B) + \log_2(d + 3) \sqrt{\frac{3\log_2 (2/\epsec)}{n}},
\end{equation}
where $d$ is the number of possible symbols in $Z_A$, and $\fEC \geq 1$ characterizes the error correction (in)efficiency. Commonly, $\fEC$ is chosen to be $\sim 1.2$, which is based on the performance of real codes~\citep{Scarani2008}.

To compute the key rate under the assumption of collective attacks, we therefore have to minimize the quantity:
\begin{equation}
	\min_{\rho_{AB} \in \mathcal{C}_{\epspe}} H^{\varepsilon}_{\min}(\rho_{Z_A E }^{\otimes n} | \rho_E^{\otimes n})
\end{equation}
Here, the set $\mathcal{C}_{\epspe}$ is the set of all density operators $\rho_{AB}$ that are consistent with the statistics measured from the parameter estimation step, except with a probability $\epspe$. Let $\Gamma_i$ be the Hermitian observables for these measurements, then the average values of these operators are within the bounds: $\gamma_i^{LB} \leq \Tr(\rho_{AB} \Gamma_i) \leq \gamma_i^{\UB} $, for $i = 1, \dots, \nPE$, except with probability $\epspe$. Along with the constraint that $\rho_{AB}$ is a valid normalized density operator, i.e. $\rho_{AB} \succeq 0$ and $\Tr(\rho_{AB}) = 1$, then $\rho_{AB}$ is constrained to be in the set:
\begin{equation}
\begin{aligned}
	\mathcal{C}_{\epspe} \equiv \{ \rho_{AB}: \: & \rho_{AB} \succeq 0, \; \Tr(\rho_{AB}) = 1, \; \text{ and} \\ & \gamma_i^{\LB} \leq \Tr(\rho_{AB} \Gamma_i) \leq \gamma_i^{UB} \\ & \text{ for } i=1,\dots,\nPE  \},
\end{aligned}
\end{equation}
except with probability $\epspe$.

To understand how one can obtain the bounds on the average values $\gamma_i \equiv \Tr(\rho_{AB} \Gamma_i)$, consider the parameter estimation step in a typical QKD protocol. Alice and Bob perform the measurements using the POVMs $\set{M^a_A}$ and $\set{M^b_B}$ (in the entanglement-based picture) and use a fraction of their measurements to obtain $\gamma_{(a,b)} = \Tr(\rho_{AB} M^a_A \otimes M^b_B)$. Then, we can make the identification $\Gamma_i \equiv \Gamma_{(a,b)} = M^a_A \otimes M^b_B$ with $\gamma_i \equiv \gamma_{(a,b)}$. To find the relevant bounds, suppose that a total of $m_{i}$ signals have been used to estimate $\gamma_i$, then the deviation of the estimate $\gamma^{m_i}_i$ from the ideal estimate $\gamma^{\infty}_i$ can be quantified using the law of large numbers~\citep{Scarani2008,Cai2009}:
\begin{equation}
	\label{eq:large_numbers}
	\abs{\gamma^{m_i}_i - \gamma^{\infty}_i} \leq \Delta(m_i, d) = \frac{1}{2} \sqrt{\frac{2 \ln (1/\epspe^i) + d \ln(m_i+1)}{m_i}},
\end{equation}
except with a failure probability of $\epspe^i$. Here, $d$ is the number of outcomes of the POVM $\Gamma_i$ needed to estimate it (for error rates, $d=2$ since the outcomes are either Alice $=$ Bob or Alice $\neq$ Bob). The overall parameter estimation step fails with a probability of $\epspe = \sum_i \epspe^i$. We can then obtain the upper and lower bounds:
\begin{equation}
\begin{aligned}
	\gamma_i^{\UB} &= \min(\gamma_i + \Delta(m_i), 1), \\
	\gamma_i^{\LB} &= \max(\gamma_i - \Delta(m_i), 0),
\end{aligned}
\end{equation}
as $\gamma_i$ is a probability and must have values between 0 and 1. We note that the inequality~\eqref{eq:large_numbers} is not the only law of large numbers that can be used to find these bounds. Tighter (asymmetric) bounds can be achieved by applying both the Chernoff bound and the Hoeffding's inequality~\citep{Curty2014,Lim2014}. Recently, even tighter bounds were obtained with clever usage of the Chernoff bound alone~\citep{Zhang2017}.

The definition of $\Gamma_i$ and $\gamma_i$ above may be too fine-grained for a QKD protocol such that each individual $\epspe^i$ may be too small for a given value of $\epspe$. The security of a QKD protocol typically can be defined with only a few parameters; for example, the security of BB84 relies on only the bit error rates when both parties choose the $Z$-basis and the $X$-basis. Coarse-graining the constraints can be achieved by merging the constraints $\Gamma_i$ together, e.g. by summing a subset of or by taking an average value of the constraints and the observed statistics. Coarse-graining, from an optimization perspective, loosens a constraint such that the guaranteed key rate can be lower than the optimal value of the calculations with fine-grained constraints. However, coarse-graining can provide tighter bounds on $\gamma_i$'s for the same value of $\epspe$ that can result in a higher secret key rate.

We now use two methods to evaluate a reliable numerical lower bound on the quantity $H^{\varepsilon}_{\min}(\rho_{Z_A E}^{\otimes n} | \rho_{E}^{\otimes n})$ that will allow us to eventually quantify the key length $\ell$---hence the key rate $r$.

\subsubsection*{Key rate estimation using von Neumann entropy}
\label{sub:von_neumann_entropy}

The smooth min-entropy of an independent and identically distributed product state $\rho_{Z_A E}^{\otimes n}$ converges to the von Neumann entropy in the limit of large $n$:
\begin{equation}
	\lim_{n \rightarrow \infty} \left[ \frac{1}{n} H_{\min}^{\varepsilon} (\rho_{Z_A E}^{\otimes n} | \rho_E^{\otimes n} ) \right] = H(\rho_{Z_A E}|\rho_E) =  H(Z_A|E).
\end{equation}
For the case of finite number of signals $n$, these two entropic quantities are related via a correction factor obtained in Corrollary 3.3.7 of Ref.~\cite{Renner2005a}, i.e.
\begin{equation}
\label{eq:min_entropy_lower_bounded_by_vnm_entropy}
    \frac{1}{n} H_{\min}^{\varepsilon} (\rho_{Z_A E}^{\otimes n} | \rho_E^{\otimes n} ) \geq H(Z_A|E) - \delta(n, \varepsilon),
\end{equation}
where, $\delta(n,\epsilon) = (2d+3)  \sqrt{\log_2(2/\varepsilon)/n}$
is the correction factor. It is worth pointing out that the right hand side of Eq.~\eqref{eq:min_entropy_lower_bounded_by_vnm_entropy} is an achievable secure lower bound to estimate the key rate.
We apply this result to obtain an $\epsqkd$-secure finite-key QKD protocol that is $\epscor$-correct and $\epssec$-secret (with $\epscor + \epssec \leq \epsqkd$) at a secret key rate per transmission of:
\begin{equation}
\label{eq:keyrate_eq_vnm}
\begin{aligned}
    r_1 = \frac{\ell}{N} &= \frac{n}{N} \left\lfloor H(Z_A|E) - \delta(n, \varepsilon) - \frac{1}{n} \leakEC \right. \\ &\qquad \left. - \frac{2}{n} \log_2 \frac{1}{2 \epspa} - \frac{1}{n} \log_2 \frac{2}{\epscor} \right\rfloor,
\end{aligned}
\end{equation}
which is in terms of the von Neumann entropy instead of the smooth min-entropy. The protocol is secret up to a failure probability of $\epssec \geq \varepsilon + \epspa + \epspe + \epsec.$

In light of Eq.~\eqref{eq:keyrate_eq_vnm}, the optimization problem that we have to solve is $\min_{\rho_{AB} \in \mathcal{C}_{\epspe}} H(Z_A|E)$. Ref.~\cite{Coles2016} shows how to recast this as an optimization problem with the quantum relative entropy, rather than the von Neumann entropy, as the objective function. Ref.~\cite{Winick2017} further developed a two-step method to obtain a secure lower bound. Simply put, this two-step method consists of finding an approximate minimum (step one), and then solving a linearized version of the SDP around this approximate minimum to obtain a secure lower bound (step two). In this work we use the semidefinite approximation of the matrix logarithm and quantum relative entropy from Ref.~\cite{Fawzi2017} to peform step one, and then follow step two directly as described in Ref.~\cite{Winick2017}. The Methods section contains a review of the SDP developed in Ref.~\cite{Winick2017}, and a more detailed description of our approach to step one.

\subsubsection*{Key rate estimation using min-entropy}
\label{sub:min_entropy}

To compute the key rate via the min-entropy, we use the fact that the smooth min-entropy is a maximization of the min-entropy and is equal to the min-entropy when the smoothing parameter $\varepsilon = 0$, i.e.
	\begin{equation}
	    H^{\varepsilon}_{\min} (\rho_{Z_A E}^{\otimes n}| \rho_E^{\otimes n}) \xrightarrow{\varepsilon \rightarrow 0} H_{\min}(\rho_{Z_A E}^{\otimes n}| \rho_E^{\otimes n}).
	\end{equation}
 Then, using the additivity of min-entropy (derived in Lemma 3.1.6 of Ref.~\cite{Renner2005a}) we have that:
	\begin{equation}
	\begin{aligned}
		H^{\varepsilon}_{\min} (\rho_{Z_A E}^{\otimes n}| \rho_E^{\otimes n}) &\xrightarrow{\varepsilon \rightarrow 0} H_{\min}(\rho_{Z_A E}^{\otimes n}| \rho_E^{\otimes n}) \\&= n H_{\min}(\rho_{Z_A E}|\rho_E) \\&= n H_{\min}(Z_A|E),
	\end{aligned}
	\end{equation}
which gives a lower bound on the smooth min-entropy in terms of the single-transmission min-entropy. (The same result using different bounds of the smooth min-entropy is found in Ref.~\cite{bratzikMinentropyQuantumKey2011}.)

This approach guarantees an $\epsqkd$-secure QKD protocol that is $\epscor$-correct and $\epssec$-secret at a secret key rate per transmission of:
	\begin{equation}
	\label{eq:keyrate_eq_min_entr}
	\begin{aligned}
		r_2 = \frac{\ell}{N} &= \frac{n}{N} \left\lfloor H_{\min}(Z_A|E) - \frac{1}{n} \leakEC \right. \\ & \left.- \frac{2}{n} \log_2 \frac{1}{2 \epspa} - \frac{1}{n} \log_2 \frac{2}{\epscor}\right\rfloor.
	\end{aligned}
	\end{equation}
The secrecy of the protocol is found by composing the error terms $\epssec \geq \epspa + \epspe + \epsec$ (since $\varepsilon = 0$).

The optimization problem to be solved in this formulation is therefore $\min_{\rho_{AB} \in \mathcal{C}_{\epspe}} [H_{\min}(Z_A|E)]$. To solve this problem, we must show how the objective function $H_{\min}(Z_A|E)$ can be expressed in terms of an optimization problem that does not include Eve's state. We obtain the following relation by following a similar approach to Ref.~\cite{Coles2012} (further detailed in Methods):
\begin{equation}
    H_{\min}(Z_A|E) = - \log_2 \max_{\sigma_{AB}} F(\rho_{A B}, \sum_j Z_A^j \sigma_{AB} Z_A^j),
\end{equation}
where 
\begin{equation}
\label{eq:fidelity}
	F(\rho, \sigma) = \left( \Tr \sqrt{\sqrt{\rho} \sigma \sqrt{\rho}} \right)^2
\end{equation}
is the fidelity function and $\sigma_{AB}$ is a valid density matrix. Finally, using the linear SDP developed in Ref.~\cite{Watrous2009} for the fidelity, we obtain a SDP that can be solved for a secure lower bound to the min-entropy, and therefore for the finite key rate (see Methods for further details).

The lower bound to the key rate using the single-transmission min-entropy is typically not as tight as that using the von Neumann entropy. However, this formulation is computationally less expensive than the von Neumann approach, and is therefore useful for protocols with signal states that have a large Hilbert space.

\subsection*{Examples}
\label{sec:examples}

We now illustrate our numerical approach for obtaining reliable lower bounds on the QKD secret key rate by applying it to some well-known protocols. We consider the BB84 protocol~\cite{BB84}, the B92 protocol~\cite{Bennett1992}, and the novel Twin-Field QKD protocol~\cite{Lucamarini2018} (that is able to beat the fundamental capacity for direct quantum communication without any repeater~\citep{Pirandola2017}). We use BB84 as a benchmark, showing that our numerics exactly reproduce the known theoretical results based on analytical solutions to the key rate problem. For B92 and Twin-Field QKD, where analytical solutions are not known in general, we present novel results in the finite key regime using the approaches derived in the previous section. 

We supplement the finite key results for B92 and Twin-Field QKD with asymptotic results using the numerical approach of Ref.~\citep{Winick2017}, and find improved secret key rate lower bounds over those previously known. In the Supplementary Note~\ref{sec:other_examples}, we study protocols that lack symmetry, which have previously been analyzed numerically in the asymptotic regime~\citep{Winick2017}. In particular, we look at variations of BB84: one with detector-efficiency mismatch, and one with Trojan-horse attacks. For all results presented we used the \texttt{Mosek}~\citep{mosek} SDP solver, with the SDPs programmed within a disciplined convex programming framework: \texttt{cvxpy}~\citep{cvxpy,cvxpy_rewriting} in \texttt{Python} or \texttt{CVX}~\citep{cvx,cvx-2} in \texttt{MATLAB}.

\subsubsection*{BB84}
\label{sub:bb84}

Let us start by considering the idealized entanglement-based version of the BB84 protocol~\citep{BB84}. Alice and Bob each receive a qubit from a maximally entangled state $\ket{\Phi^+} = (\ket{00} + \ket{11})/\sqrt{2}$ and measure their qubit with probability $p_Z$ in the $Z$-basis $ = \{\ket{0}, \ket{1}\}$ or with probability $p_X = 1-p_Z$ in the $X$-basis $= \{\ket{+}, \ket{-}\}$. The $X$-basis states are defined as $\ket{\pm} = (\ket{0} \pm \ket{1})/\sqrt{2}$. 

Alice and Bob postselect for the cases when they both measure their qubits in the same basis, discarding the outcomes when they measure in different bases (see Supplementary Note~\ref{app:general_framework_for_postselection} for the postselection framework). They then generate a secret key using the results when they both measured in either the $Z$-basis or $X$-basis.

The maximally entangled state $\ket{\Phi^{+}}$ is generated in Alice's laboratory so that only one part of the state is transmitted through the channel to Bob. To model this transmission through the quantum channel, we consider the depolarizing channel with a depolarizing probability $p$ on Bob's qubit~\citep{nielsen2000quantum}:
\begin{equation}
	\mathcal{E}^{\text{dep}}(\rho) = (1-p) \rho + p \frac{\id}{2}. 
\end{equation}
Therefore, for this protocol, we consider the statistics given by the state:
\begin{equation}
\label{eq:depolarized_rho_AB_bb84}
	\rho'_{AB} = (\id_A \otimes \mathcal{E}^{\text{dep}}_B(p)) \left(\ket{\Phi^+}\bra{\Phi^+}_{AB} \right).
\end{equation}

Typically in QKD experiments the key rates are determined by the quantum bit error rates (QBERs). Therefore, we use these error rates to define coarse-grained constraints for the key rate SDPs. The error operators corresponding to the QBERs in the $Z$ and $X$ bases are:
\begin{equation}
\begin{aligned}
E_Z &= \ket{0}\bra{0}_A \otimes \ket{1}\bra{1}_B + \ket{1}\bra{1}_A \otimes \ket{0}\bra{0}_B,  \\
E_X &= \ket{+}\bra{+}_A \otimes \ket{-}\bra{-}_B + \ket{-}\bra{-}_A \otimes \ket{+}\bra{+}_B,
\end{aligned}
\end{equation}
whose expectation values (i.e.~the QBERs) are $Q_Z = \braket{E_Z} = \Tr(\rho'_{AB} E_Z)$ and $Q_X = \braket{E_X} = \Tr(\rho'_{AB} E_X)$. For the state defined in~\eqref{eq:depolarized_rho_AB_bb84}, one can show analytically that $Q_Z = Q_X = Q = 2p$.

The analytical solution for the von Neumann entropy optimization is
\begin{equation}
\label{eq:vnm_bb84}
	\min_{\rho_{AB} \in C_{\epspe}} \ppass H(Z_A|E) = (p_Z^2 + p_X^2) \left[1- h_2(Q^{\UB}) \right],
\end{equation}
where $h_2(x) \equiv -x \log_2(x) - (1-x) \log_2(1-x)$ is the binary entropy function. The factor $(p_Z^2 + p_X^2)$ is the probability Alice and Bob postselect for the same basis. Similarly, the solution for the min-entropy optimization is~\cite{bratzikMinentropyQuantumKey2011}:
\begin{equation}
\label{eq:min_bb84}
\begin{aligned}
	& \min_{\rho_{AB} \in C_{\epspe}} \ppass H_{\min}(Z_A|E) \\
	& = -(p_Z^2 + p_X^2) \left[ 1-\log_2\left(1+2\sqrt{Q^{\UB}(1-Q^{\UB})}\right) \right],
\end{aligned}
\end{equation}
where $Q^{\UB} = Q + \Delta(\epspe/2, m_i)$ with $\Delta > 0$ being the deviation that can be quantified using the law of large numbers Eq.~\eqref{eq:large_numbers}. We can compare the key rate predicted by the SDP with these analytical formulas (in the asymptotic regime) and apply the same estimation methods to both the numerical and the analytical key rates to compare them in the non-asymptotic regime.

To simulate a realistic QKD system, we assume that Alice and Bob uses $\ratioPE = 10\%$ of the signals (after postselection) for parameter estimation. We take the protocol to be correct up to $\epscor = 10^{-15}$ and to be secret up to $\epssec = 10^{-10}$. For simplicity, we assume equal security parameters of $\varepsilon'$ for $\epspa$, $\epsec$, and $\varepsilon$. For parameter estimation, we assume each constraint is estimated with a failure probability up to $\epspe^i = 2 \varepsilon'$. The values of the security parameters are tabulated in Tab.~\ref{tab:security_params}. Therefore, we have $\varepsilon' = \epssec/7$ for the calculation with von Neumann entropy (Eq.~\eqref{eq:keyrate_eq_vnm} with SDPs~\eqref{eq:key_rate_nonasymptotic_appx_primal} and~\eqref{eq:lower_bound_nonasymptotic}) and $\varepsilon' = \epssec/6$ for the calculation with min-entropy (Eq.~\eqref{eq:keyrate_eq_min_entr} with SDP~\eqref{eq:keyrate_Hmin_dual_problem}).

\begin{figure}[H]
	\centering
	\begin{subfigure}[b]{\columnwidth}
	    \includegraphics[width=\columnwidth]{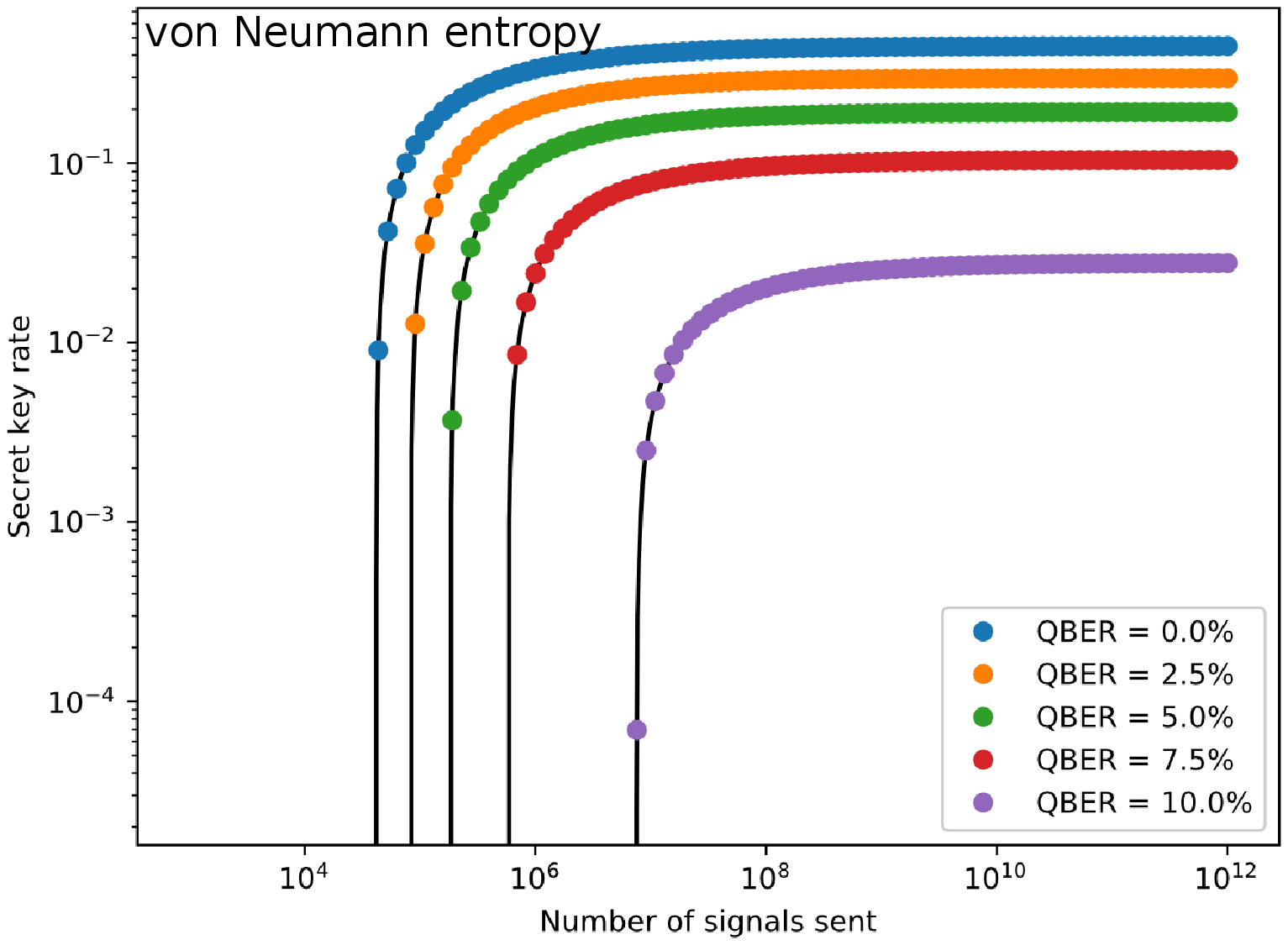}
	\end{subfigure}
	\begin{subfigure}[b]{\columnwidth}
	    \includegraphics[width=\columnwidth]{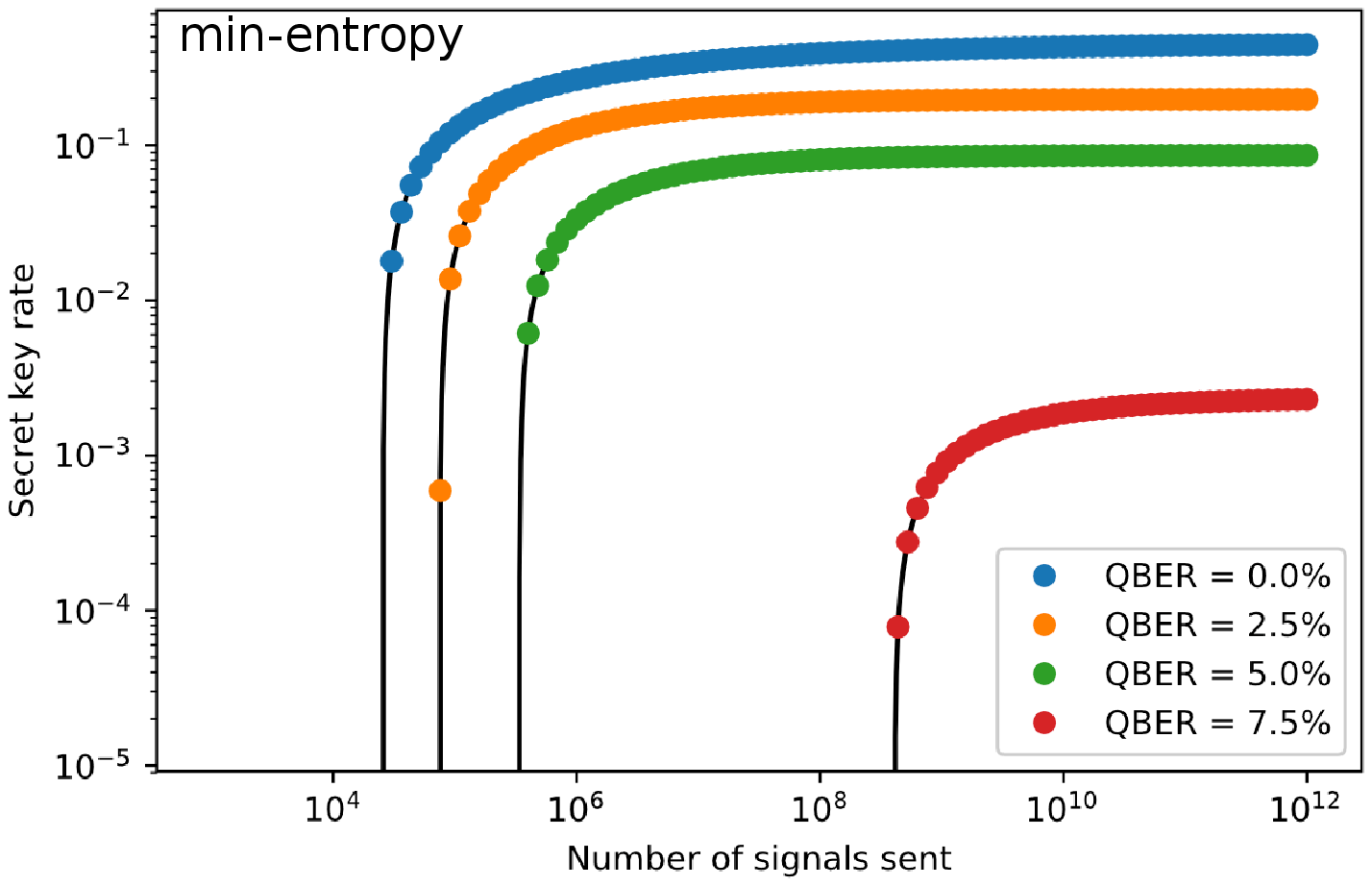}
	\end{subfigure}
	\caption{Nonasymptotic secret key rate per pulse for the BB84 protocol calculated using the von Neumann entropy (top) and the min-entropy (bottom) for different values of QBER. The line is a known theoretical curve calculated from Eqs.~\eqref{eq:vnm_bb84} and~\eqref{eq:min_bb84}, and the dots are reliable lower bounds on the key rate calculated numerically.}
    \label{fig:bb84_finite}
\end{figure}

Fig.~\ref{fig:bb84_finite} show the secret key rate per transmitted pulse as a function of the number of transmissions, $N$, for both numerical methods. We observe that for a QBER of 0\%, the bound from solving the min-entropy SDP is tighter and rises to a significant value at a lower number of transmitted signals. At higher error rates, solving the von Neumann entropy SDP provides a better bound. In fact, for a QBER $\gtrsim 7.58\%$, the bound from min-entropy predicts a zero key rate at any number of transmissions.

\begin{table}[H]
\centering
\begin{tabular}{c | c c}
	\hline
	 Parameter & von Neumann entropy & min-entropy \\
	  \hline
	  $\epsec$ & $\varepsilon'$ & $\varepsilon'$ \\
	  $\epspa$ & $\varepsilon'$ & $\varepsilon'$ \\
	  $\varepsilon$ & $\varepsilon'$ & $0$ \\
	  \hline
	  $\ratioPE$ & 10\% & 10\% \\
	  $\epspe^i$ & $2 \varepsilon'$ & $2 \varepsilon'$ \\
	  $\epspe$ & $2 \nPE \varepsilon'$ & $2 \nPE \varepsilon'$ \\
	  \hline
	  $\epssec$ & $10^{-10}$ & $10^{-10}$ \\
	  $\epscor$ & $10^{-15}$ & $10^{-15}$ \\
   	\hline
\end{tabular}
\caption{Values of security parameters and other relevant quantities for parameter estimation, assuming equal security parameters of $\varepsilon'$ for the two numerical bounds: one bound is calculated using von Neumann entropy and another bound using min-entropy. The parameters listed here are: $\epsec$: error-correction failure probability; $\epspa$: privacy-amplification failure probability; $\varepsilon$: smoothing parameter for smooth min-entropy; $\ratioPE$: fraction of signals used for parameter estimation; $\epspe^i$: failure probability of estimating parameter described by constraint $\Gamma_i$; $\epspe$: parameter-estimation total failure probability; $\nPE$: number of constraints to be quantified from the parameter estimation step; $\epssec$: secrecy failure probability; $\epscor$: probability that Alice and Bob's secret keys are not identical.}
\label{tab:security_params}
\end{table}

\subsubsection*{B92}
\label{sub:B92}

The B92 protocol~\citep{Bennett1992} is a simple QKD protocol that is highly asymmetric. In this protocol, Alice prepares one of two non-orthogonal states $\{\ket{\phi_0}, \ket{\phi_1}\}$, where $\braket{\phi_0 | \phi_1} = \cos(\theta/2)$---each with a probability $1/2$, which she will send to Bob. Alice therefore prepares the state:
\begin{equation}
	\ket{\psi}_{AB} = \sqrt{\frac{1}{2}} \ket{0}_A \ket{\phi_0}_B + \sqrt{\frac{1}{2}} \ket{1}_A \ket{\phi_1}_B.
\end{equation}

Upon receiving the signals from Alice, Bob randomly measures either in the $B_0 = \{\ket{\phi_0}, \ket{\bar{\phi_0}} \}$ or in the $B_1 = \{\ket{\phi_1}, \ket{\bar{\phi_1}}\}$ basis, where $\braket{\phi_0 | \bar{\phi_0}} = \braket{\phi_1 | \bar{\phi_1}} = 0$. Bob postselects for those instances where he measures $\ket{\bar{\phi_0}}$ or $\ket{\bar{\phi_1}}$, and publicly announces `pass'. He then assigns a bit value 1 or 0, respectively, to his key. If he measures $\ket{\phi_0}$ or $\ket{\phi_1}$, he will announce `failure', and both parties discard these signals. 

As constraints for the problem we use operators that describe (on the postselected state) both a successful outcome, where Bob's measurement bit makes Alice's prepared bit, and an unsuccessful one, where they do not agree:
\begin{equation}
\begin{aligned}
    \Gamma^{(=)} &= \kb{0}_A \otimes \kb{\bar{\phi_1}}_B + \kb{1}_A \otimes \kb{\bar{\phi_0}}_B, \\
	\Gamma^{(\neq)} &= \kb{0}_A \otimes \kb{\bar{\phi_0}}_B + \kb{1}_A \otimes \kb{\bar{\phi_1}}_B.
\end{aligned}
\end{equation}
In addition, since this is a prepare-and-measure protocol, we must add the constraints related to Alice's knowledge of $\rho_A$, i.e. we add the constraints $\Omega_A^j \otimes \id_B = \kb{\psi_j}_A \otimes \id_B$ obtained from the spectral decomposition of $\rho_A = \sum_j p_j \kb{\psi_j}_A$. 

To simulate the channel, we again consider that the signal undergoes a depolarizing channel with probability $p$ as it travels from Alice to Bob.
Figure~\ref{fig:b92_asymptotic} shows the results of our numerical method in the asymptotic limit. We plot the secret key rate per pulse against the angle $\theta$ and against the depolarizing probability $p$ after optimizing for the parameter $\theta$. The asymptotic formula is obtained by taking the von Neumann formulation with $N\rightarrow\infty$ (and $n\rightarrow\infty$), and replacing the constraints $\gamma_k^{\LB} \leq \Tr(\rho_{AB} \Gamma_k) \leq \gamma_k^{\LB}$ with tight constraints $\gamma_k = \Tr(\rho_{AB} \Gamma_k)$. This is a direct application of the formalism developed in Ref.~\cite{Winick2017}, and is not a development of this manuscript, though the results we show have not been reported elsewhere, and are a useful demonstration of the power of numerical QKD calculations.

Our results guarantee a non-zero key rate even up to $p = 0.15$ (with $r_1 = 0.00574$ at $\theta = {64.8}^\circ$), while previous analytical results predict a non-zero key rate only for 
$p \leq 0.065$~\citep{Tamaki2004,Renner2005a,Sasaki2015}. Furthermore, this numerical approach guarantees a higher-secret key rate when compared to a previous numerical QKD approach described in Ref.~\citep{Coles2016}, which predicts a non-zero key rate for $p \leq 0.053$. For noise levels where all methods guarantee finite key rates, our results show tighter secure lower bounds than previous approaches. For example, for a depolarizing noise of $p= 0.01$, the method of Ref.~\cite{Winick2017} predicts $r_1 = 0.248$, while the previous method of~\citep{Coles2016} only obtains $r_1 \approx 0.21$ per pulse. 

\begin{figure}[H]
    \includegraphics[width=\columnwidth]{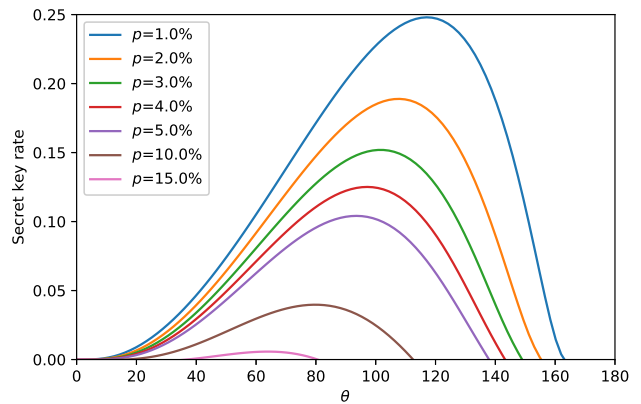}
    \caption{Secret key rate per pulse for the B92 protocol for different depolarizing probability $p$ in the asymptotic regime. The rate is plotted against the Bloch-sphere angle between the two signal states $\ket{\phi_0}$ and $\ket{\phi_1}$.}
    \label{fig:b92_asymptotic}
\end{figure}

Now, we consider the security of the B92 protocol in the nonasymptotic regime, using the finite key SDPs developed in the previous section. Fig.~\ref{fig:b92_finite} shows the secret key generation rate per pulse in terms of the number of signals that Alice has sent, for different values of $p$. For each curve, we choose the value of $\theta$ that maximizes the secret key rate. We consider the security parameters tabulated in Tab.~\ref{tab:security_params} and assume that the protocol is $\epssec = 10^{-10}$-secret and $\epscor = 10^{-15}$-correct. Our analysis shows that the B92 protocol is a simple way of exchanging random secret keys with composable security.

\begin{figure}[H]
    \centering
    \begin{subfigure}[b]{\columnwidth}
        \includegraphics[width=\columnwidth]{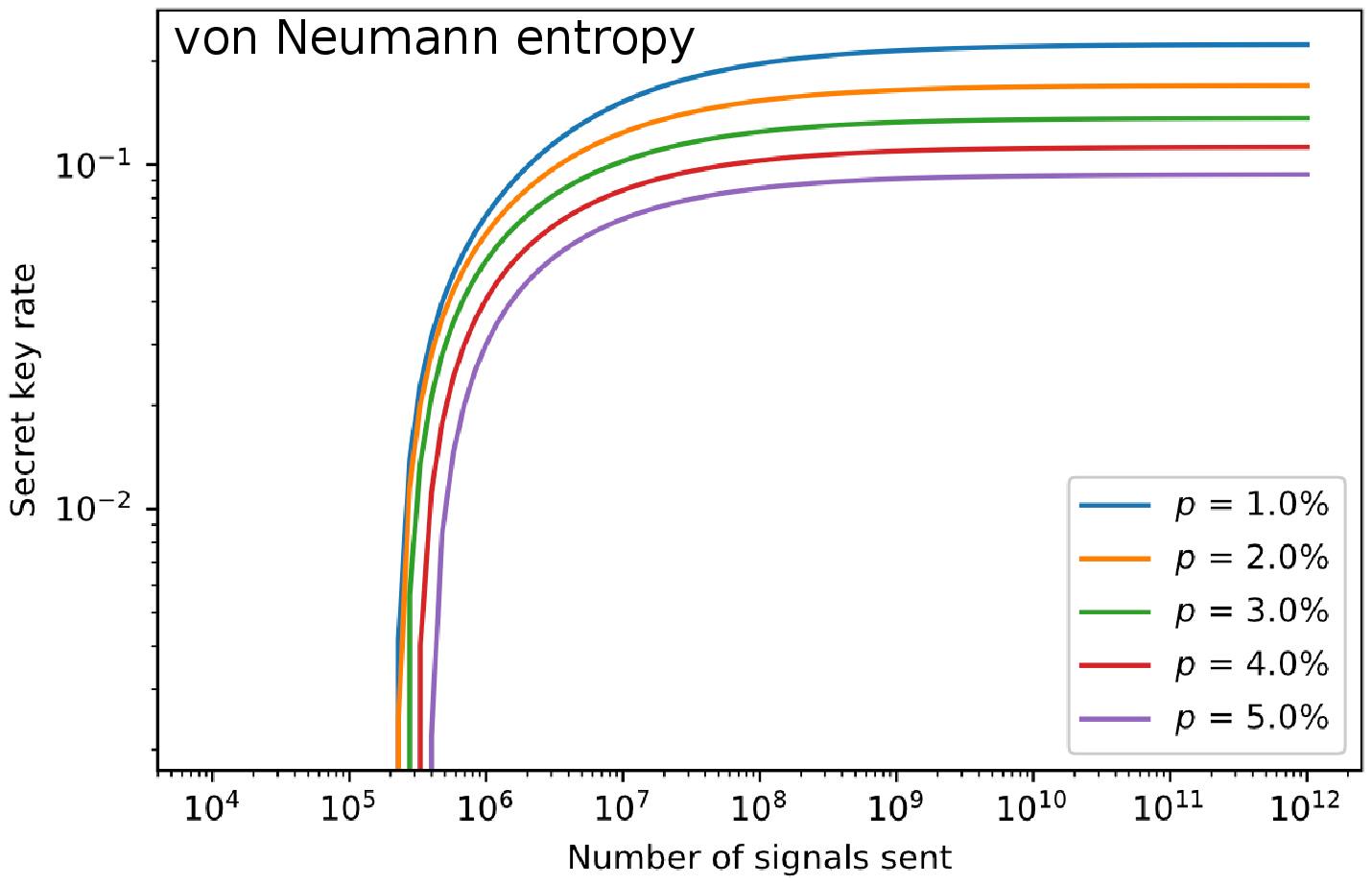}
    \end{subfigure}
    \begin{subfigure}[b]{\columnwidth}
        \includegraphics[width=\columnwidth]{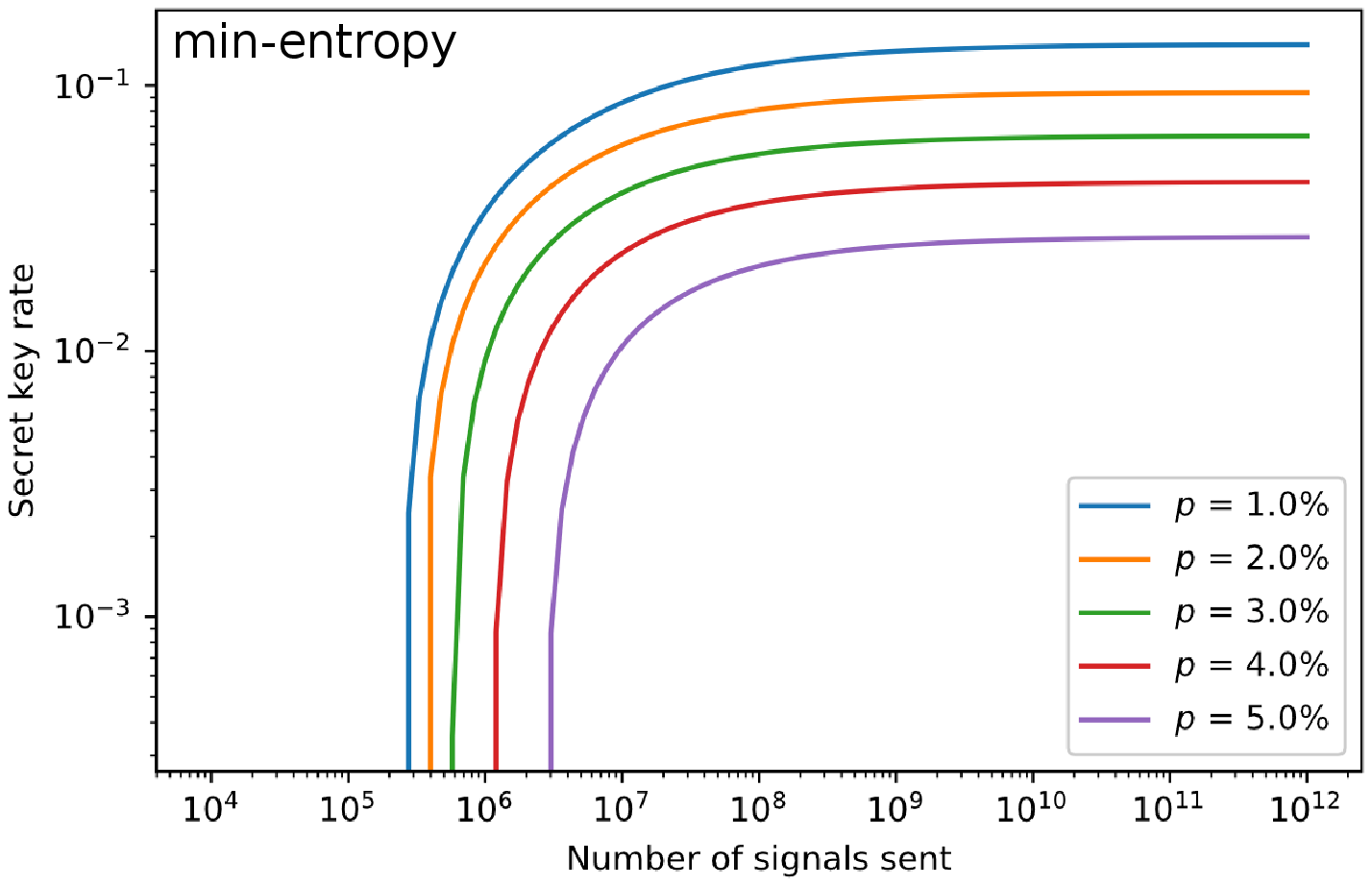}
    \end{subfigure}
    \caption{Secret key rate per pulse for the B92 protocol calculated using the von Neumann entropy (top) and the min-entropy (bottom) for different values of depolarizing probability $p$.}
    \label{fig:b92_finite}
\end{figure}

\subsubsection*{Twin-Field QKD}
\label{sub:twin_field}

Twin-Field QKD is a novel variation of the measurement-device-independent (MDI)-QKD protocol that enables two parties to communicate through an intermediate untrusted node~\cite{Lo2012}. The main difference is that TF-QKD uses single-photon interference (instead of two-photon interference in MDI-QKD), and achieves a key rate that is expected to beat the fundamental information capacity for the repeaterless quantum communication rate, typically at long distances~\citep{Lucamarini2018}. Because the protocol promises higher secret key rates than previous QKD protocols at long distances, it is a subject of extensive research both theoretically and experimentally. Theoretically, several security proofs of the protocol (and its variations) with asymptotic~\cite{tamakiInformationTheoreticSecurity2018,curtySimpleSecurityProof2019,maPhaseMatchingQuantumKey2018} and non-asymptotic~\cite{jiangUnconditionalSecuritySending2019a,he2019finitekey,lorenzo2019tightfinitekey} key rates have been proposed. Experimentally, the protocol has been demonstrated within the laboratory setting~\cite{minderExperimentalQuantumKey2019,wangBeatingFundamentalRateDistance2019,zhongProofofPrincipleExperimentalDemonstration2019,liuExperimentalTwinFieldQuantum2019}, and future field tests of the protocol are to be expected.

In the entanglement-based description of Twin-Field QKD presented in Ref.~\cite{curtySimpleSecurityProof2019}, Alice and Bob each prepare the entangled state $\ket{\Phi_q} = \sqrt{q} \ket{00} + \sqrt{1-q} \ket{11}$ for $0 \leq q \leq 1$, where $\ket{0}$ is the vacuum state (as opposed to the logical state 0) and $\ket{1}$ is the single photon state~\footnote{In Twin-Field QKD, the logical basis and the photon-number or Fock basis coincide.}. They then randomly choose to measure their qubits in the standard $Z = \{\ket{0}, \ket{1} \}$ basis with probability $p_Z$ or in the $X = \{\ket{+}, \ket{-}\}$ basis with probability $p_X = 1 - p_Z$. 

Alice and Bob send one part of their quantum signals ($A'$ for Alice and $B'$ for Bob) through optical channels with transmittance $\sqrt{\eta}$ to Charlie. The total optical transmittance from Alice to Bob is $\eta$. Charlie then performs a Bell-state measurement on the combined state $A'B'$ that he has received. One way to do so is to use a 50:50 beamsplitter to mix Alice and Bob's signals, and then route the outputs of this beamsplitter to two single-photon detectors. Charlie announces which of his two detectors fires, and Alice and Bob postselect for those events where one (and only one) detector fires. This is equivalent to postselecting for Charlie's state being one of the Bell states $\ket{\Psi^{\pm}}$. We can postselect to $\ket{\Psi^+}$ and $\ket{\Psi^-}$ independently, and we focus our discussion henceforth on postselection to the single state $\ket{\Psi^-}$ and the same arguments apply for postselection to the state $\ket{\Psi^+}$.

The measurement POVMs for analyzing the Twin-Field QKD protocol are
\begin{center}
\begin{tabular}{c c c}
	\hline
	 Alice's POVM & Basis & Bit-value \\
	 \hline	
	 $p_Z \ket{0}\bra{0}_A$ & $Z$ & 0 \\
	 $p_Z \ket{1}\bra{1}_A$ & $Z$ & 1 \\
	 $p_X \ket{+}\bra{+}_A$ & $X$ & 0 \\
	 $p_X \ket{-}\bra{-}_A$ & $X$ & 1 \\
   	\hline
\end{tabular}
\end{center}
for Alice, and
\begin{center}
\begin{tabular}{c c c}
	\hline
	 Bob's POVM & Basis & Bit-value \\
	 \hline	
	 $p_Z \ket{0}\bra{0}_B$ & $Z$ & 0 \\
	 $p_Z \ket{1}\bra{1}_B$ & $Z$ & 1 \\
	 $p_X \ket{+}\bra{+}_B$ & $X$ & 0 \\
	 $p_X \ket{-}\bra{-}_B$ & $X$ & 1 \\
   	\hline
\end{tabular}
\end{center}
for Bob. In the simulation, the channel that the transmitted signal goes through is a pure-loss channel (or an amplitude damping channel for the single photon case~\citep{nielsen2000quantum}). We can describe the pure-loss channel $\mathcal{E}^{\text{loss}}(\sqrt{\eta})$ using a beam-splitter transformation with the help of an additional Hilbert space $A_0$ starting in the vacuum state. For example, the photon creation operator for Alice's transmitted signal $\hat{a}_{A'}^{\dag}$ undergoes the following transformation for a channel with transmittance $\sqrt{\eta}$:
\begin{equation}
	\begin{pmatrix}
		\hat{a}_{A'}^{\dag} \\
		\hat{a}_{A_0}^{\dag}
	\end{pmatrix}
	\rightarrow
	\begin{pmatrix}
		\sqrt{\eta^{1/2}} & \sqrt{1-\eta^{1/2}} \\
		-\sqrt{1-\eta^{1/2}} & \sqrt{\eta^{1/2}}
	\end{pmatrix}
	\begin{pmatrix}
		\hat{a}_{A'}^{\dag} \\
		\hat{a}_{A_0}^{\dag}
	\end{pmatrix}.
\end{equation}
Here, $\hat{a}_{A_0}^{\dag}$ is the creation operator for the additional Hilbert space. In summary, to describe the transmitted state, Alice generates the entangled state:
\begin{equation}
	\ket{\psi}_{AA'A_0} = \ket{\Phi_q}_{AA'} \otimes \ket{0}_{A_0},
\end{equation}
which then undergoes a pure loss channel before being measured by Charlie. The state after the transmission is
\begin{equation}
	\rho'_{AA'} = \Tr_{A_0} \left[ (\id_A \otimes \mathcal{E}^{\text{loss}}_{A'A_0}(\sqrt{\eta})) \left(\kb{\psi}_{AA'A_0} \right) \right].
\end{equation}
We can also define a similar state for Bob:
\begin{equation}
	\rho'_{BB'} = \Tr_{B_0} \left[ (\id_B \otimes \mathcal{E}^{\text{loss}}_{B'B_0}(\sqrt{\eta})) \left(\kb{\psi}_{BB'B_0} \right) \right],
\end{equation}
and the overall state after both Alice and Bob's transmissions is: $\rho'_{AA'BB'} = \rho'_{AA'} \otimes \rho'_{BB'}$.

Charlie, equipped with only threshold detectors, cannot distinguish between the click due to a only single photon arriving at his first detector and the click due to two photons arriving. Therefore, he is projecting the signals he receives to
\begin{equation}
	\kb{\widetilde{\Psi}^-}_{A'B'} = \frac{1}{2} \left( \kb{\Psi^-}_{A'B'} + \kb{11}_{A'B'} \right).
\end{equation}
We further assume that Charlie's detectors have small, but non-negligible dark counts. Let $p_d$ be the dark count probability for each clock cycle. We can modify Charlie's projection operator above into the following POVM:
\begin{equation}
\begin{aligned}
    \kb{\widetilde{\Psi}^-_{\text{dark}}}_{A'B'} &= (1-p_d)^2 \kb{\widetilde{\Psi}^-}_{A'B'} \\
    & + p_d (1-p_d) \frac{\id_{A'B'}}{4}.
\end{aligned}
\end{equation}
Alice and Bob postselect those cases, where Alice and Bob measure in the same basis and Charlie successfully measures $\kb{\widetilde{\Psi}^-_{\text{dark}}}$.

The performance of the protocol in the asymptotic limit as a function of the overall loss between Alice and Bob is plotted in Fig.~\ref{fig:tf_asymptotic}. At each loss value, we optimize for the value of $q$ that gives the best key rate using the Brent's method~\citep{numerical_recipes,brent2013algorithms}. As shown in Fig.~\ref{fig:tf_q_asymptotic}, the value of $q$ increases monotonically to about $\sim 0.93$ at 40~dB loss, and saturates at this value for higher losses. The high value of $q$ suggests that a weakly pumped photon pair source, which uses spontaneous parametric down conversion or spontaneous four-wave mixing, would be ideal to generate the initial entangled states.

From Fig.~\ref{fig:tf_asymptotic}, it is clear that the Twin-Field QKD protocol---at sufficiently high losses (above $\sim40$~dB)---can perform better than the capacity of direct repeaterless quantum communications, which we dub as the PLOB bound after the original authors~\citep{Pirandola2017}. For a channel with a transmittance $\eta$, the bound which is an achievable rate is $-\log_2(1-\eta)$ and scales linearly as $\sim \eta$ at low transmittance. As the dark count rate increases, the region of losses at which the Twin-Field QKD protocol can beat the PLOB bound is reduced.

\begin{figure}[H]
    \centering
	\includegraphics[width=\columnwidth]{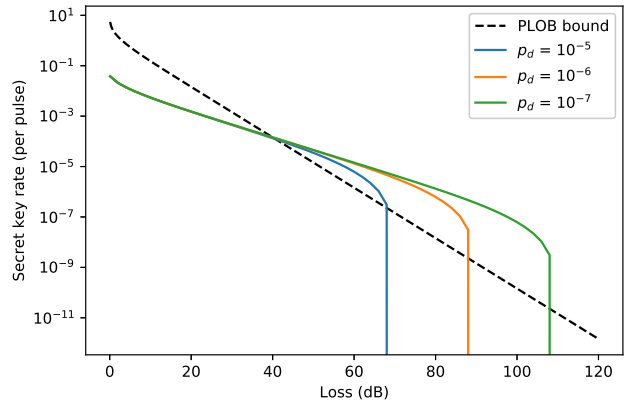}
	\caption{Secret key rate per pulse for the Twin-Field QKD protocol as a function of the overall loss between Alice and Bob. The different lines are for QKD operations with different dark count probability $p_d$. The black dashed line corresponds to PLOB bound: the fundamental bound for direct repeaterless communications, calculated with $\eta = 10^{-(\text{Loss in dB})/10}$.}
	\label{fig:tf_asymptotic}
\end{figure}

\begin{figure}[H]
	\includegraphics[width=\columnwidth]{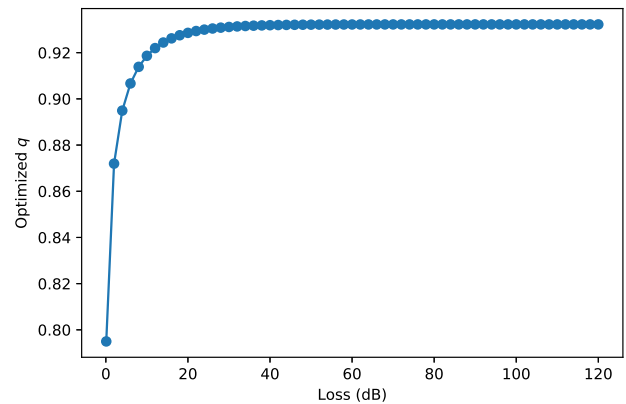}
	\caption{Optimized value of $q$ that gives the highest secret key rate at each loss value.}
    \label{fig:tf_q_asymptotic}
\end{figure}

For the nonasymptotic regime, we evaluate the security of a protocol that is $\epssec$-secret and $\epscor$-correct with $\epssec = 10^{-10}$ and $\epscor = 10^{-15}$. We consider the values for the security parameters as in Tab.~\ref{tab:security_params}. 

Fig.~\ref{fig:tf_finite} shows that the nonasymptotic bounds from von Neumann entropy can obtain better secret key rates than the PLOB bounds---even with the presence of dark counts. The plots also show that to faithfully demonstrate a better rate than the PLOB bound in a Twin-Field QKD experiment, both Alice and Bob must send a large number of transmissions to Charlie. For example, at 60~dB overall channel loss, they must send $N\sim 10^{10}$ transmissions which are $\sim 10^5$ higher than the number of transmissions needed to obtain a substantial secret key in a BB84 QKD protocol (see Fig.~\ref{fig:bb84_finite}). Interestingly, the bounds from min-entropy are unable to beat the PLOB bound, but they do guarantee a substantial secret key rate even at orders of magnitude fewer transmissions.

\begin{figure}[H]
    \begin{subfigure}[b]{\columnwidth}
	\includegraphics[width=\columnwidth]{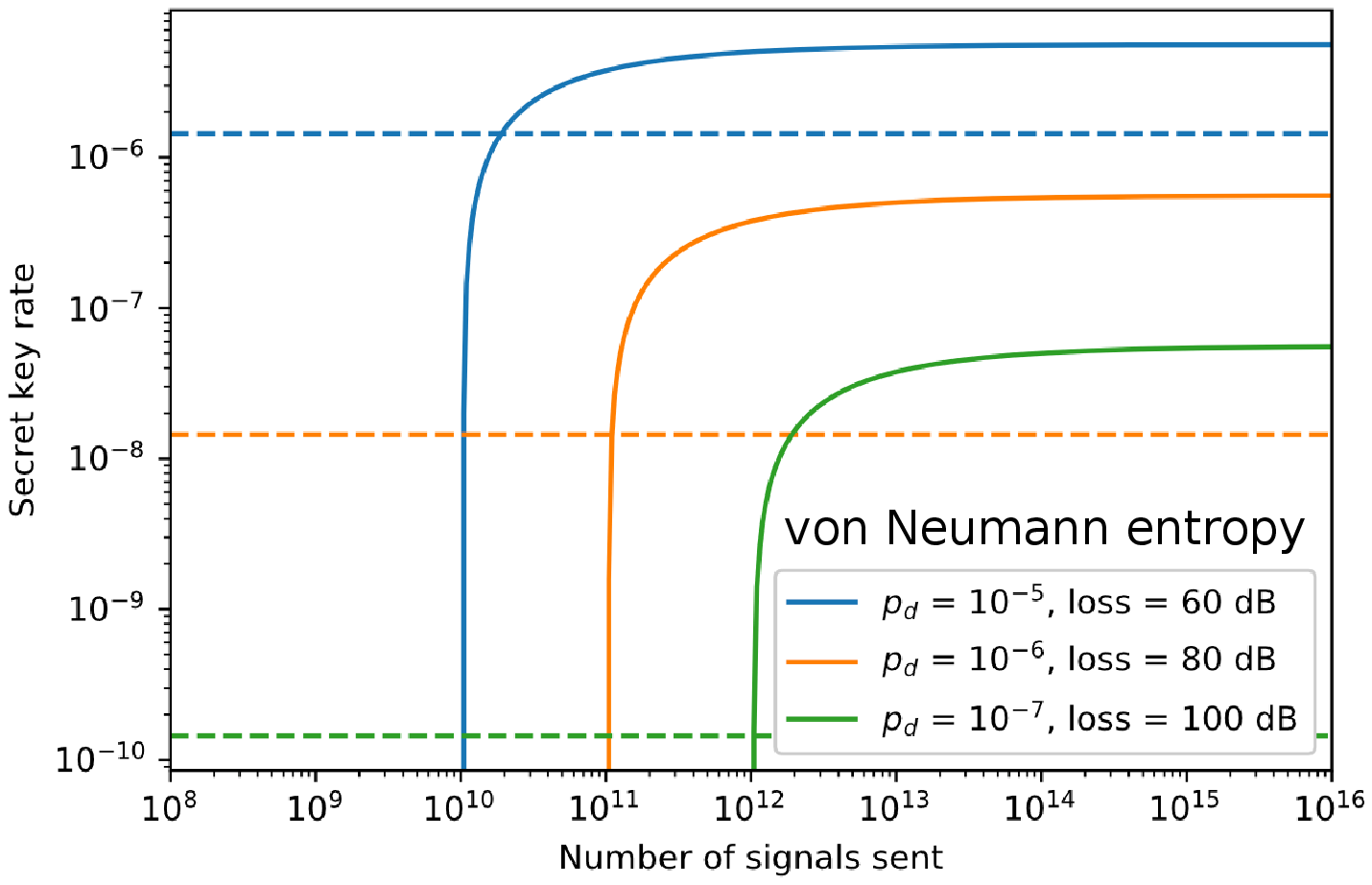}
	\end{subfigure}
	\begin{subfigure}[b]{\columnwidth}
	\includegraphics[width=\columnwidth]{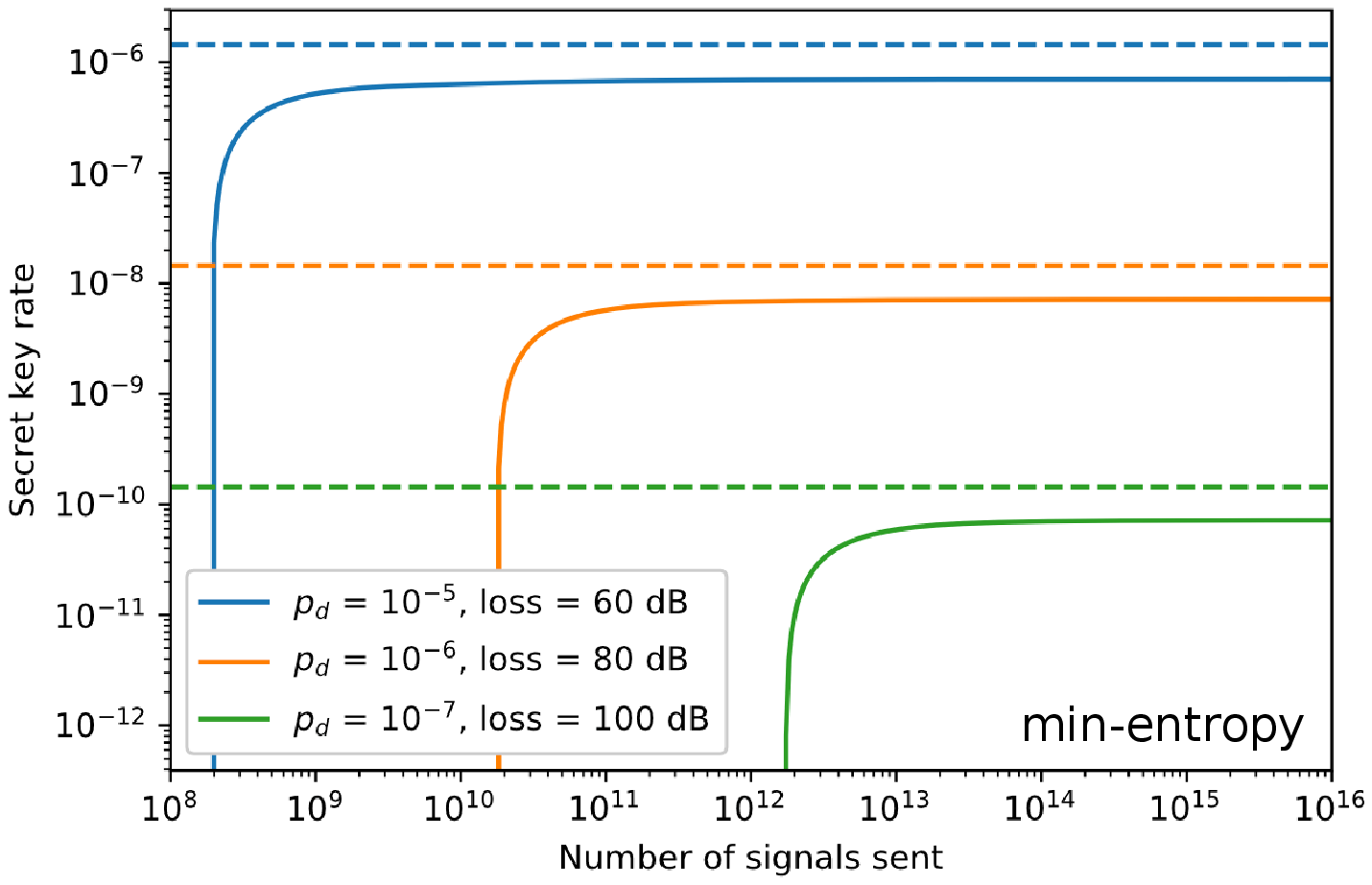}
	\end{subfigure}
    \caption{Secret key rate per pulse for the Twin-Field QKD protocol calculated at different values of dark count $p_d$ and overall channel loss. The key rates are obtained from von Neumann entropy (top) and from min-entropy (bottom). The dashed lines with the same color are the PLOB bound at the different loss values, i.e. blue: 60~dB loss, orange: 80~dB loss, green: 100~dB loss.}
    \label{fig:tf_finite}
\end{figure}

\section*{Discussion}
\label{sec:conclusions_and_future_work}

We have developed semidefinite programs for finding reliable lower bounds on the secret key rate of an arbitrary QKD protocol in the nonasymptotic regime. We presented two methods of calculating such bounds, one via an SDP for von Neumann entropy and one via an SDP for min-entropy. For some of the protocols we have considered, the bound from min-entropy provides a better key rate than the bound for von Neumann entropy at lower error rates and at lower numbers of transmissions. The computational advantage for solving the SDP for min-entropy is also clear since the problem is more tractable than that for von Neumann entropy. For a problem involving a density matrix between Alice and Bob of size $n \times n$, the SDP~\eqref{eq:keyrate_Hmin_dual_problem} for min-entropy only requires us to solve for $\mathcal{O}(n^2)$ parameters while the SDP~\eqref{eq:key_rate_nonasymptotic_appx_primal} for von Neumann entropy requires us to solve for $\mathcal{O}(n^4)$ parameters. Nevertheless, the nonasymptotic bound from von Neumann entropy guarantees a better secret key rate at higher numbers of transmission and, unlike the bound from min-entropy, can approach the asymptotic key rate. The supremum between these two methods should be considered as the tightest lower bound that our numerical approach offers.

So far, we have only considered security against collective attacks. Some protocols with high-symmetry have been found to have the same secret key rates under collective attacks and under the more general coherent attacks. Examples of these protocols include popular protocols such as BB84.

General methods for bounding the possible information advantage of coherent attacks over collective attacks has been outlined in multiple approaches. The first such approach uses the \emph{exponential de Finetti theorem}~\citep{Renner2007}, but the overhead obtained by this theorem turns out to be heavy making the finite-key bounds unrealistically pessimistic. The de Finetti theorem is tight if one compares the attacks signal-by-signal. Ref.~\cite{Christandl2009} found that it suffices to only consider the entire collection of states. This method, known as the \emph{postselection technique}, compares the distance between two maps: the map between the ideal protocol and an actual protocol under collective attacks \emph{and} the map between the ideal protocol and an actual protocol under coherent attacks.

Using the postselection technique, we can define a new secrecy parameter under a coherent attack $\epscoh$, which quantifies the probability the QKD protocol passes but is not secret to an eavesdropper with coherent-attack capabilities. $\epscoh$ is related to the secrecy parameter under collective attack $\epssec$ in the following manner:
\begin{equation}
	\epssec = \epscoh (N+1)^{-(d^4-1)}.
\end{equation}
For the above value of secrecy, the key rate under coherent attack $r^{\text{coh}}$ is related to the key rate under collective attack $r$ by the following relation:
\begin{equation}
	r^{\text{coh}} = r - 2(d^4-1)\frac{\log_2(N+1)}{N}.
\end{equation}

Our numerical method is reliable and robust for calculating key rates involving single photon transmissions. Most practical implementations of QKD however have relied on the use of weak coherent states made by highly attenuated laser pulses. We hope to eventually evaluate such protocols numerically in the future. However, two main issues must be addressed when doing so. First, the probability of multiphoton emissions from a highly attenuated coherent light source, although small, is not negligible. Multiphoton signals are inherently insecure due to a class of attacks called the photon number splitting attack. One solution to combat the photon number splitting attack is to implement the decoy state protocol. In the decoy state protocol, Alice prepares an additional set of states---the decoy states---that are used to detect the presence of eavesdropping~\citep{Wang2005,Lo2005}. Therefore, we plan to incorporate decoy state analysis to the numerical method in the future.

Second, Alice's coherent state transmission uses an infinite-dimensional Hilbert space. The calculation on this infinite-dimensional space is extremely challenging. For simple QKD protocols, there exist squashing maps that provide direct correlations between measurements in the infinite-dimensional optical implementation and measurements in the abstract low-dimensional protocol~\citep{Beaudry2009,Tsurumaru2008,Gittsovich2014}. Therefore, the numerical method (or the user) must also be able to determine the appropriate squashing map to reduce the size of the problem.

To conclude, our results extend the earlier numerical QKD approaches by presenting a general robust framework for calculating QKD key rates in nonasymptotic regimes. The numerical methods presented here will be useful for democratizing the QKD security proofs that are needed to estimate the amount of secret key generated in any QKD operation.

\emph{Note added}---We recently became aware of another proposal for calculating the finite-key rate of a QKD protocol numerically~\cite{george2019qcrypt}.

\section*{Methods}
\label{sec:methods}

\subsection*{Proof of the relationship between min-entropy and the fidelity function}
\label{app:min_entropy_with_fidelity}
We include the proof from Ref.~\cite{Coles2012} here for completeness. In this proof, we consider the pure state shared between Alice, Bob, and Eve: $\rho_{ABE}$, and will use the max-entropy, defined as:
\begin{equation}
	H_{\max} (X|Y)_{\rho} \equiv \max_{\sigma_Y} \log_2 F(\rho_{XY}, \id_X \otimes \sigma_Y).
\end{equation}
The max-entropy is dual to the min-entropy, i.e. $H_{\min}(X|Y) = -H_{\max}(X|Z)$ for any pure state $\rho_{XYZ}$. 

Now, consider the pure state $\tilde{\rho}_{Z_A ABE} = V_{Z_A} \rho_{ABE} V_{Z_A}^{\dag}$ with the isometry $V_{Z_A} = \sum_j \ket{j}_{Z_A} \otimes Z_A^j$ representing Alice's key map.  
We can derive the following series of equalities:
\begin{align}
	&\nonumber H_{\min}(Z_A |E)_{\rho}  \equiv H_{\min}(Z_A|E)_{\tilde{\rho}} = - H_{\max}(Z_A | A B) \\
	&\nonumber = - \log_2 \max_{\sigma_{AB}} F(\tilde{\rho}_{Z_A A B}, \id_{Z_A} \otimes \sigma_{AB}) \\
	&\nonumber = - \log_2 \max_{\sigma_{AB}} F(\tilde{\rho}_{Z_A A B}, V_{Z_A} V_{Z_A}^{\dag} (\id_{Z_A} \otimes \sigma_{AB}) V_{Z_A} V_{Z_A}^{\dag}) \\
	&\nonumber = - \log_2 \max_{\sigma_{AB}} F(\rho_{A B}, V_{Z_A}^{\dag} (\id_{Z_A} \otimes \sigma_{AB}) V_{Z_A} ) \\
	& = - \log_2 \max_{\sigma_{AB}} F(\rho_{A B}, \sum_j Z_A^j \sigma_{AB} Z_A^j), 
	\end{align}
which relates the min-entropy to a maximization of the fidelity. The third line is true because an isometry must satisfy $V_{Z_A} V_{Z_A}^{\dag} = \id$, and the fourth line uses the fact that fidelity is invariant under isometries.

\subsection*{SDP for quantum relative entropy}
\label{app:QRE_SDP}

Let us express the problem in terms of a convex optimization problem with quantum relative entropy as the objective function:
\begin{equation}
\label{eq:key_rate_nonasymptotic_convex_primal}
\begin{aligned}
		& \text{minimize} 
		& & \infdiv*{\rho_{AB}}{\sum_j Z_A^j \rho_{AB} Z_A^j} \\
		& \text{subject to}
		& & \Tr(\rho_{AB} \Gamma_i) \leq \gamma_i^{\UB} \text{ and } \Tr(\rho_{AB} \Gamma_i) \geq \gamma_i^{\LB}\\ 
		&&& \qquad \text{for } i = 1, \dots, \nPE, \\
		&&& \Tr(\rho_{AB}) = 1, \\
		&&& \rho_{AB} \succeq 0.
\end{aligned}
\end{equation}
Here $Z^j_A$ are projectors onto the signal-state basis of the Hilbert space of $A$.

The SDP for the $(m,k)$-approximation of the quantum relative entropy in this case is \cite{Fawzi2017}:
\begin{equation}
\label{eq:key_rate_nonasymptotic_appx_primal}
\begin{aligned}
	& \text{minimize}
	& & \tau \\
	& \text{subject to}
	& & \begin{pmatrix}
			M_i & M_{i+1} \\
		M_{i+1} & X
	\end{pmatrix} \succeq 0  \text{ for } i=0, \dots, k-1, \\
	&&& \begin{pmatrix}
			\braket{e|X|e} - s_j t_j/w_j & \bra{e} X \\
			X \ket{e} & X + s_j (Z-X)
		\end{pmatrix} \succeq 0  \\
	&&& \qquad \text{ for } j=1, \dots, m, \\
	&&& \sum_j t_j 2^{k} + \tau \geq 0, \\
	&&& Y = M_0, \; Z = M_{k}, \\
	&&& X = \rho_{AB} \otimes \id, \; Y = \id \otimes \sum_{j} Z_A^j \rho_{AB} Z_A^j, \\
	&&& \Tr(\rho_{AB} \Gamma_k) \leq \gamma_k^{\UB} \text{ and } \Tr(\rho_{AB} \Gamma_k) \geq \gamma_k^{\LB}  \\
	&&& \qquad \text{ for } k = 1, \dots, \nPE, \\
	&&& \Tr(\rho_{AB}) = 1, \\
	&&& \rho_{AB} \succeq 0,
\end{aligned}
\end{equation}
where $w_j$ and $s_j$ are the weights and nodes for the $m$-point Gauss-Legendre quadrature on interval $[0,1]$. Here, $\ket{e}$ is the vector obtained by vertically stacking the columns of an identity matrix.

Solving the approximate problem above only gives us a density matrix $\hat{\rho}_{AB}$ that is close to the optimal matrix $\rho^*_{AB}$. However, as it pointed out by Ref.~\cite{Winick2017}, we can use this close-to-optimal density matrix $\hat{\rho}_{AB}$ and find a secure lower bound through linearization of the convex objective function: $f(\rho) \equiv \infdiv{\rho}{Z_A^j \rho Z_A^j}$. Using the fact that this objective function is convex and differentiable, we have:
\begin{equation}
	\label{eq:lower_bound_nonasymptotic}
	\begin{aligned}
		f(\rho^*_{AB}) &\geq f(\hat{\rho}_{AB}) + \Tr\left(\nabla f(\hat{\rho}_{AB})^T (\rho^*_{AB} - \hat{\rho}_{AB}) \right) \\
		& \geq f(\hat{\rho}_{AB}) + \min_{\sigma \in \mathcal{C}} \Tr\left(\nabla f(\hat{\rho}_{AB})^T (\sigma - \hat{\rho}_{AB}) \right) \\
		& = f(\hat{\rho}_{AB}) - \Tr\left( \nabla f(\hat{\rho}_{AB})^T \hat{\rho}_{AB}\right) \\
		& \qquad + \min_{\sigma \in \mathcal{C}_{\epspe}} \Tr\left(\nabla f(\hat{\rho}_{AB})^T \sigma \right),
	\end{aligned}
\end{equation}
where
\begin{equation}
\label{eq:grad_f_rho}
	\nabla f(\rho_{AB})^T = \log_2 \rho_{AB} - \log_2 \left(\sum_j Z_A^j \rho_{AB} Z_A^j \right).
\end{equation}
The primal and dual SDPs for the last term in Eq.~\eqref{eq:lower_bound_nonasymptotic} are:
\begin{primal}
	\label{eq:convex_diff_primal_problem}
	\begin{equation}
	\begin{aligned}
		& \text{minimize}
		& & \Tr( \nabla f(\hat{\rho}_{AB})^T \sigma) \\
		& \text{subject to}
		& & \Tr(\sigma \Gamma_i) \leq \gamma_i^{\UB} \text{ and } \Tr(\sigma \Gamma_i) \geq \gamma_i^{\LB} \\
		&&& \qquad \text{ for } i = 1, \dots, \nPE,\\
		&&& \Tr(\sigma) = 1, \\
		&&& \sigma \succeq 0,
	\end{aligned}
	\end{equation}
\end{primal}
\noindent and
\begin{dual}
	\label{eq:convex_diff_dual_problem}
	\begin{equation}
	\begin{aligned}
		& \text{maximize}
		& & z + \sum_{i=1}^{\nPE} \left( x_i \gamma_i^{\LB} - y_i \gamma_i^{\UB} \right)\\
		& \text{subject to}
		& & z \id + \sum_{i=1}^{\nPE} \left( x_i - y_i \right) \Gamma_i \preceq \nabla f(\hat{\rho}_{AB})^T, \\
		&&& x_i \geq 0, \: y_i \geq 0, \text{ for } i = 1, \dots, \nPE.
	\end{aligned}
	\end{equation}
\end{dual}

Key rate problems for some of the more well-known QKD protocols, e.g. the BB84 protocol, can be solved efficiently (within a second on a personal computer) with Eq.~\eqref{eq:key_rate_nonasymptotic_appx_primal} using a commercial or an open-source SDP solver, e.g. \texttt{Mosek}~\citep{mosek} or \texttt{SeDuMi}~\citep{sedumi}. Some larger problems, such as the prepare-and-measure protocol or the measurement-device-independent QKD protocol, require simplification that makes use of the block diagonal structure of the density operator $\rho_{AB}$ to be efficiently solved, see Supplementary Note~\ref{app:simplification_postselection} and Refs.~\cite{Lin:2019aa,govia2019clifford}.

The main inefficiency in our formulation comes during step one. Suppose that $\rho_{AB}$ is an $n \times n$ matrix, then $X, Y, Z, $ and $M_i$ matrices are of size $n^2 \times n^2$. Therefore, the problem needs to solve a total of $k$ blocks of $2n^2 \times 2n^2$ positive semidefinite matrices along with another $m$ blocks of $(n^2+1) \times (n^2+1)$ positive semidefinite matrices. It is therefore desirable to find another approximation method that requires a smaller number of parameters.

\subsection*{SDP for min-entropy and quantum fidelity}
\label{app:SDP_min}

Ref.~\cite{Watrous2009} shows how the fidelity can be expressed in terms of a simple linear SDP. The primal and dual SDP problems for computing the $\sqrt{F(P,Q)}$ between two operators $P \succeq 0$ and $Q \succeq 0$ are as follows:
\begin{primal}
	\label{eq:fidelity_primal_problem}
	\begin{equation}
	\begin{aligned}
		& \text{minimize}
		& & \Tr(P Y_{11}) + \Tr(Q Y_{22}) \\
		& \text{subject to}
		& & \begin{pmatrix}
			Y_{11} & 0 \\
			0 & Y_{22}
		\end{pmatrix} \succeq \frac{1}{2}
		\begin{pmatrix}
			0 & \id \\
			\id & 0
		\end{pmatrix}. \\
		&&& Y_{11} \succeq 0 , \; Y_{22} \succeq 0.
	\end{aligned}
	\end{equation}
\end{primal}
\noindent and
\begin{dual}
	\label{eq:fidelity_dual_problem}
	\begin{equation}
	\begin{aligned}
		& \text{maximize}
		& & \frac{1}{2} \left(\Tr X_{12} + \Tr X_{12}^{\dag} \right)\\
		& \text{subject to}
		& & \begin{pmatrix}
			X_{11} & X_{12} \\
			X_{12}^{\dag} & X_{22}
		\end{pmatrix} \succeq 0, \\
		&&& X_{11} \preceq P, \; X_{22} \preceq Q.
	\end{aligned}
	\end{equation}
\end{dual}

We can therefore formulate the following optimization problem:
\begin{equation}
\begin{aligned}
	g(\mathcal{C}_{\epspe}) &\equiv \min_{\rho_{AB} \in \mathcal{C}_{\epspe}} H_{\min}(Z_A |E) \\
	&= \min_{\rho_{AB} \in \mathcal{C}_{\epspe}} \left[- \log_2 \max_{\sigma_{AB}} F(\rho_{A B}, \sum_j Z_A^j \sigma_{AB} Z_A^j)\right] \\
	&= - \log_2 \max_{\rho_{AB} \in \mathcal{C}_{\epspe}} \max_{\sigma_{AB}} F(\rho_{A B}, \sum_j Z_A^j \sigma_{AB} Z_A^j) \\
	&= - 2 \log_2 \max_{\rho_{AB} \in \mathcal{C}_{\epspe}} \max_{\sigma_{AB}} \sqrt{F(\rho_{A B}, \sum_j Z_A^j \sigma_{AB} Z_A^j)}.
\end{aligned}
\end{equation}
In particular, we can compute the following quantity:
\begin{equation}
	2^{-g(\mathcal{C}_{\epspe})/2} = \max_{\rho_{AB} \in \mathcal{C}_{\epspe}} \max_{\sigma_{AB}} \sqrt{F(\rho_{A B}, \sum_j Z_A^j \sigma_{AB} Z_A^j)}.
\end{equation}

The primal SDP problem for the quantity $2^{-g(\mathcal{C}_{\epspe})/2}$ is
\begin{primal}
	\begin{equation}
	\label{eq:keyrate_Hmin_primal_problem}
	\begin{aligned}
		& \text{maximize}
		& & \frac{1}{2} \left(\Tr X_{12} + \Tr X_{12}^{\dag} \right)\\
		& \text{subject to}
		& & X_{11} \preceq \rho_{AB}, \; X_{22} \preceq \sum_j Z_A^j \sigma_{AB} Z_A^j, \\
		&&& \Tr(\rho_{AB} \Gamma_i) \leq \gamma_i^{\UB} \text{ and } \Tr(\rho_{AB} \Gamma_i) \geq \gamma_i^{\LB} \\
		&&& \qquad \text{ for } i = 1, \dots, \nPE,\\
		&&& \Tr(\rho_{AB}) = 1, \; \Tr(\sigma_{AB}) = 1, \\
		&&& \begin{pmatrix}
			X_{11} & X_{12} \\
			X_{12}^{\dag} & X_{22}
		\end{pmatrix} \succeq 0, \\
		&&& \rho_{AB} \succeq 0, \; \sigma_{AB} \succeq 0,
	\end{aligned}
	\end{equation}
\end{primal}
\noindent that can be transformed into the following dual problem:
\begin{dual}
	\begin{equation}
	\label{eq:keyrate_Hmin_dual_problem}
	\begin{aligned}
		& \text{minimize}
		& & z + \bar{y} + \sum_{i=1}^{\nPE} \left( y_i \gamma_i^{\UB} - x_i \gamma_i^{\LB} \right)\\
		& \text{subject to}
		& & \sum_{i=1}^{\nPE} (y_i - x_i) \Gamma_i + \bar{y} \id \succeq Y_{11}, \\ 
		&&& z \id \succeq \sum_j Z_A^j Y_{22} Z_A^j, \\
		&&& \begin{pmatrix}
			Y_{11} & 0 \\
			0 & Y_{22}
		\end{pmatrix} \succeq \frac{1}{2}
		\begin{pmatrix}
			0 & \id \\
			\id & 0
		\end{pmatrix}. \\
		&&& Y_{11} \succeq 0 , \; Y_{22} \succeq 0, \; y_i \geq 0, \; x_i \geq 0.
	\end{aligned}
	\end{equation}
\end{dual}
\noindent Solving the dual problem~\eqref{eq:keyrate_Hmin_dual_problem} directly provides a reliable lower bound to the key rate. The min-entropy SDP derived here has a computational advantage over the von Neumann SDP, due to the fact that other than the positive real numbers $x_i$ and $y_i$, only two matrices $Y_{11}$ and $Y_{22}$---both of the same size as the density matrix $\rho_{AB}$---have to be computed.

\bibliography{References/pic,References/qkd,References/reviews,References/post_processing,References/tomo,References/references,References/OtherRefs}

\newcommand{\noopsort}[1]{}
\begin{thebibliography}{66}%
\makeatletter
\providecommand \@ifxundefined [1]{%
 \@ifx{#1\undefined}
}%
\providecommand \@ifnum [1]{%
 \ifnum #1\expandafter \@firstoftwo
 \else \expandafter \@secondoftwo
 \fi
}%
\providecommand \@ifx [1]{%
 \ifx #1\expandafter \@firstoftwo
 \else \expandafter \@secondoftwo
 \fi
}%
\providecommand \natexlab [1]{#1}%
\providecommand \enquote  [1]{``#1''}%
\providecommand \bibnamefont  [1]{#1}%
\providecommand \bibfnamefont [1]{#1}%
\providecommand \citenamefont [1]{#1}%
\providecommand \href@noop [0]{\@secondoftwo}%
\providecommand \href [0]{\begingroup \@sanitize@url \@href}%
\providecommand \@href[1]{\@@startlink{#1}\@@href}%
\providecommand \@@href[1]{\endgroup#1\@@endlink}%
\providecommand \@sanitize@url [0]{\catcode `\\12\catcode `\$12\catcode
  `\&12\catcode `\#12\catcode `\^12\catcode `\_12\catcode `\%12\relax}%
\providecommand \@@startlink[1]{}%
\providecommand \@@endlink[0]{}%
\providecommand \url  [0]{\begingroup\@sanitize@url \@url }%
\providecommand \@url [1]{\endgroup\@href {#1}{\urlprefix }}%
\providecommand \urlprefix  [0]{URL }%
\providecommand \Eprint [0]{\href }%
\providecommand \doibase [0]{http://dx.doi.org/}%
\providecommand \selectlanguage [0]{\@gobble}%
\providecommand \bibinfo  [0]{\@secondoftwo}%
\providecommand \bibfield  [0]{\@secondoftwo}%
\providecommand \translation [1]{[#1]}%
\providecommand \BibitemOpen [0]{}%
\providecommand \bibitemStop [0]{}%
\providecommand \bibitemNoStop [0]{.\EOS\space}%
\providecommand \EOS [0]{\spacefactor3000\relax}%
\providecommand \BibitemShut  [1]{\csname bibitem#1\endcsname}%
\let\auto@bib@innerbib\@empty
\bibitem [{\citenamefont {Pirandola}\ \emph {et~al.}(2019)\citenamefont
  {Pirandola}, \citenamefont {Andersen}, \citenamefont {Banchi}, \citenamefont
  {Berta}, \citenamefont {Bunandar}, \citenamefont {Colbeck}, \citenamefont
  {Englund}, \citenamefont {Gehring}, \citenamefont {Lupo}, \citenamefont
  {Ottaviani}, \citenamefont {Pereira}, \citenamefont {Razavi}, \citenamefont
  {Shaari}, \citenamefont {Tomamichel}, \citenamefont {Usenko}, \citenamefont
  {Vallone}, \citenamefont {Villoresi},\ and\ \citenamefont
  {Wallden}}]{pir2019advances}%
  \BibitemOpen
  \bibfield  {author} {\bibinfo {author} {\bibfnamefont {S.}~\bibnamefont
  {Pirandola}}, \bibinfo {author} {\bibfnamefont {U.~L.}\ \bibnamefont
  {Andersen}}, \bibinfo {author} {\bibfnamefont {L.}~\bibnamefont {Banchi}},
  \bibinfo {author} {\bibfnamefont {M.}~\bibnamefont {Berta}}, \bibinfo
  {author} {\bibfnamefont {D.}~\bibnamefont {Bunandar}}, \bibinfo {author}
  {\bibfnamefont {R.}~\bibnamefont {Colbeck}}, \bibinfo {author} {\bibfnamefont
  {D.}~\bibnamefont {Englund}}, \bibinfo {author} {\bibfnamefont
  {T.}~\bibnamefont {Gehring}}, \bibinfo {author} {\bibfnamefont
  {C.}~\bibnamefont {Lupo}}, \bibinfo {author} {\bibfnamefont {C.}~\bibnamefont
  {Ottaviani}}, \bibinfo {author} {\bibfnamefont {J.}~\bibnamefont {Pereira}},
  \bibinfo {author} {\bibfnamefont {M.}~\bibnamefont {Razavi}}, \bibinfo
  {author} {\bibfnamefont {J.~S.}\ \bibnamefont {Shaari}}, \bibinfo {author}
  {\bibfnamefont {M.}~\bibnamefont {Tomamichel}}, \bibinfo {author}
  {\bibfnamefont {V.~C.}\ \bibnamefont {Usenko}}, \bibinfo {author}
  {\bibfnamefont {G.}~\bibnamefont {Vallone}}, \bibinfo {author} {\bibfnamefont
  {P.}~\bibnamefont {Villoresi}}, \ and\ \bibinfo {author} {\bibfnamefont
  {P.}~\bibnamefont {Wallden}},\ }\href@noop {} {\enquote {\bibinfo {title}
  {Advances in quantum cryptography},}\ } (\bibinfo {year} {2019}),\ \Eprint
  {http://arxiv.org/abs/1906.01645} {arXiv:1906.01645 [quant-ph]} \BibitemShut
  {NoStop}%
\bibitem [{\citenamefont {Bennett}\ and\ \citenamefont
  {Brassard}(1984)}]{BB84}%
  \BibitemOpen
  \bibfield  {author} {\bibinfo {author} {\bibfnamefont {C.~H.}\ \bibnamefont
  {Bennett}}\ and\ \bibinfo {author} {\bibfnamefont {G.}~\bibnamefont
  {Brassard}},\ }in\ \href@noop {} {\emph {\bibinfo {booktitle} {Proceedings of
  IEEE International Conference on Computers, Systems, and Signal
  Processing}}}\ (\bibinfo  {publisher} {IEEE},\ \bibinfo {year} {1984})\ pp.\
  \bibinfo {pages} {175--179}\BibitemShut {NoStop}%
\bibitem [{\citenamefont {Bru{\ss}}(1998)}]{Bruss1998}%
  \BibitemOpen
  \bibfield  {author} {\bibinfo {author} {\bibfnamefont {D.}~\bibnamefont
  {Bru{\ss}}},\ }\href {\doibase 10.1103/PhysRevLett.81.3018} {\bibfield
  {journal} {\bibinfo  {journal} {Physical Review Letters}\ }\textbf {\bibinfo
  {volume} {81}},\ \bibinfo {pages} {3018} (\bibinfo {year}
  {1998})}\BibitemShut {NoStop}%
\bibitem [{\citenamefont {Gottesman}\ \emph {et~al.}(2004)\citenamefont
  {Gottesman}, \citenamefont {Lo}, \citenamefont {L{\"u}tkenhaus},\ and\
  \citenamefont {Preskill}}]{Gottesman2004a}%
  \BibitemOpen
  \bibfield  {author} {\bibinfo {author} {\bibfnamefont {D.}~\bibnamefont
  {Gottesman}}, \bibinfo {author} {\bibfnamefont {H.-k.}\ \bibnamefont {Lo}},
  \bibinfo {author} {\bibfnamefont {N.}~\bibnamefont {L{\"u}tkenhaus}}, \ and\
  \bibinfo {author} {\bibfnamefont {J.}~\bibnamefont {Preskill}},\ }\href
  {http://arxiv.org/abs/quant-ph/0212066} {\bibfield  {journal} {\bibinfo
  {journal} {Quantum Information \& Computation}\ }\textbf {\bibinfo {volume}
  {4}},\ \bibinfo {pages} {325} (\bibinfo {year} {2004})}\BibitemShut {NoStop}%
\bibitem [{\citenamefont {Coles}\ \emph {et~al.}(2016)\citenamefont {Coles},
  \citenamefont {Metodiev},\ and\ \citenamefont {L{\"u}tkenhaus}}]{Coles2016}%
  \BibitemOpen
  \bibfield  {author} {\bibinfo {author} {\bibfnamefont {P.~J.}\ \bibnamefont
  {Coles}}, \bibinfo {author} {\bibfnamefont {E.~M.}\ \bibnamefont {Metodiev}},
  \ and\ \bibinfo {author} {\bibfnamefont {N.}~\bibnamefont {L{\"u}tkenhaus}},\
  }\href {\doibase 10.1038/ncomms11712} {\bibfield  {journal} {\bibinfo
  {journal} {Nature Communications}\ }\textbf {\bibinfo {volume} {7}},\
  \bibinfo {pages} {11712} (\bibinfo {year} {2016})}\BibitemShut {NoStop}%
\bibitem [{\citenamefont {Winick}\ \emph {et~al.}(2018)\citenamefont {Winick},
  \citenamefont {L{\"u}tkenhaus},\ and\ \citenamefont {Coles}}]{Winick2017}%
  \BibitemOpen
  \bibfield  {author} {\bibinfo {author} {\bibfnamefont {A.}~\bibnamefont
  {Winick}}, \bibinfo {author} {\bibfnamefont {N.}~\bibnamefont
  {L{\"u}tkenhaus}}, \ and\ \bibinfo {author} {\bibfnamefont {P.~J.}\
  \bibnamefont {Coles}},\ }\href {\doibase 10.22331/q-2018-07-26-77} {\bibfield
   {journal} {\bibinfo  {journal} {Quantum}\ }\textbf {\bibinfo {volume} {2}},\
  \bibinfo {pages} {77} (\bibinfo {year} {2018})}\BibitemShut {NoStop}%
\bibitem [{\citenamefont {ApS}(2017)}]{mosek}%
  \BibitemOpen
  \bibfield  {author} {\bibinfo {author} {\bibfnamefont {M.}~\bibnamefont
  {ApS}},\ }\href {http://docs.mosek.com/8.1/toolbox/index.html} {\emph
  {\bibinfo {title} {The MOSEK optimization toolbox for MATLAB.}}} (\bibinfo
  {year} {2017})\BibitemShut {NoStop}%
\bibitem [{\citenamefont {Sturm}(1999)}]{sedumi}%
  \BibitemOpen
  \bibfield  {author} {\bibinfo {author} {\bibfnamefont {J.~F.}\ \bibnamefont
  {Sturm}},\ }\href {\doibase 10.1080/10556789908805766} {\bibfield  {journal}
  {\bibinfo  {journal} {Optim. Method. Softw.}\ }\textbf {\bibinfo {volume}
  {11}},\ \bibinfo {pages} {625} (\bibinfo {year} {1999})}\BibitemShut
  {NoStop}%
\bibitem [{\citenamefont {T{\"u}t{\"u}nc{\"u}}\ \emph
  {et~al.}(2003)\citenamefont {T{\"u}t{\"u}nc{\"u}}, \citenamefont {Toh},\ and\
  \citenamefont {Todd}}]{sdpt3}%
  \BibitemOpen
  \bibfield  {author} {\bibinfo {author} {\bibfnamefont {R.~H.}\ \bibnamefont
  {T{\"u}t{\"u}nc{\"u}}}, \bibinfo {author} {\bibfnamefont {K.~C.}\
  \bibnamefont {Toh}}, \ and\ \bibinfo {author} {\bibfnamefont {M.~J.}\
  \bibnamefont {Todd}},\ }\href {\doibase 10.1007/s10107-002-0347-5} {\bibfield
   {journal} {\bibinfo  {journal} {Mathematical Programming}\ }\textbf
  {\bibinfo {volume} {95}},\ \bibinfo {pages} {189} (\bibinfo {year}
  {2003})}\BibitemShut {NoStop}%
\bibitem [{\citenamefont {Renner}(2008)}]{Renner2005a}%
  \BibitemOpen
  \bibfield  {author} {\bibinfo {author} {\bibfnamefont {R.}~\bibnamefont
  {Renner}},\ }\href {\doibase 10.1142/S0219749908003256} {\bibfield  {journal}
  {\bibinfo  {journal} {Int. J. Quantum Inf.}\ }\textbf {\bibinfo {volume}
  {06}},\ \bibinfo {pages} {1} (\bibinfo {year} {2008})}\BibitemShut {NoStop}%
\bibitem [{\citenamefont {Tomamichel}\ \emph {et~al.}(2011)\citenamefont
  {Tomamichel}, \citenamefont {Schaffner}, \citenamefont {Smith},\ and\
  \citenamefont {Renner}}]{Tomamichel2011a}%
  \BibitemOpen
  \bibfield  {author} {\bibinfo {author} {\bibfnamefont {M.}~\bibnamefont
  {Tomamichel}}, \bibinfo {author} {\bibfnamefont {C.}~\bibnamefont
  {Schaffner}}, \bibinfo {author} {\bibfnamefont {A.}~\bibnamefont {Smith}}, \
  and\ \bibinfo {author} {\bibfnamefont {R.}~\bibnamefont {Renner}},\ }\href
  {\doibase 10.1109/TIT.2011.2158473} {\bibfield  {journal} {\bibinfo
  {journal} {IEEE Trans. Inform. Theory}\ }\textbf {\bibinfo {volume} {57}},\
  \bibinfo {pages} {5524} (\bibinfo {year} {2011})},\ \Eprint
  {http://arxiv.org/abs/1002.2436} {arXiv:1002.2436} \BibitemShut {NoStop}%
\bibitem [{\citenamefont {Scarani}\ and\ \citenamefont
  {Renner}(2008)}]{Scarani2008}%
  \BibitemOpen
  \bibfield  {author} {\bibinfo {author} {\bibfnamefont {V.}~\bibnamefont
  {Scarani}}\ and\ \bibinfo {author} {\bibfnamefont {R.}~\bibnamefont
  {Renner}},\ }\href {\doibase 10.1103/PhysRevLett.100.200501} {\bibfield
  {journal} {\bibinfo  {journal} {Physical Review Letters}\ }\textbf {\bibinfo
  {volume} {100}},\ \bibinfo {pages} {200501} (\bibinfo {year}
  {2008})}\BibitemShut {NoStop}%
\bibitem [{\citenamefont {Cai}\ and\ \citenamefont {Scarani}(2009)}]{Cai2009}%
  \BibitemOpen
  \bibfield  {author} {\bibinfo {author} {\bibfnamefont {R.~Y.~Q.}\
  \bibnamefont {Cai}}\ and\ \bibinfo {author} {\bibfnamefont {V.}~\bibnamefont
  {Scarani}},\ }\href {\doibase 10.1088/1367-2630/11/4/045024} {\bibfield
  {journal} {\bibinfo  {journal} {New Journal of Physics}\ }\textbf {\bibinfo
  {volume} {11}},\ \bibinfo {pages} {045024} (\bibinfo {year}
  {2009})}\BibitemShut {NoStop}%
\bibitem [{\citenamefont {Curty}\ \emph {et~al.}(2014)\citenamefont {Curty},
  \citenamefont {Xu}, \citenamefont {Cui}, \citenamefont {Lim}, \citenamefont
  {Tamaki},\ and\ \citenamefont {Lo}}]{Curty2014}%
  \BibitemOpen
  \bibfield  {author} {\bibinfo {author} {\bibfnamefont {M.}~\bibnamefont
  {Curty}}, \bibinfo {author} {\bibfnamefont {F.}~\bibnamefont {Xu}}, \bibinfo
  {author} {\bibfnamefont {W.}~\bibnamefont {Cui}}, \bibinfo {author}
  {\bibfnamefont {C.~C.~W.}\ \bibnamefont {Lim}}, \bibinfo {author}
  {\bibfnamefont {K.}~\bibnamefont {Tamaki}}, \ and\ \bibinfo {author}
  {\bibfnamefont {H.-K.}\ \bibnamefont {Lo}},\ }\href {\doibase
  10.1038/ncomms4732} {\bibfield  {journal} {\bibinfo  {journal} {Nature
  Communications}\ }\textbf {\bibinfo {volume} {5}},\ \bibinfo {pages} {3732}
  (\bibinfo {year} {2014})}\BibitemShut {NoStop}%
\bibitem [{\citenamefont {Lim}\ \emph {et~al.}(2014)\citenamefont {Lim},
  \citenamefont {Curty}, \citenamefont {Walenta}, \citenamefont {Xu},\ and\
  \citenamefont {Zbinden}}]{Lim2014}%
  \BibitemOpen
  \bibfield  {author} {\bibinfo {author} {\bibfnamefont {C.~C.~W.}\
  \bibnamefont {Lim}}, \bibinfo {author} {\bibfnamefont {M.}~\bibnamefont
  {Curty}}, \bibinfo {author} {\bibfnamefont {N.}~\bibnamefont {Walenta}},
  \bibinfo {author} {\bibfnamefont {F.}~\bibnamefont {Xu}}, \ and\ \bibinfo
  {author} {\bibfnamefont {H.}~\bibnamefont {Zbinden}},\ }\href {\doibase
  10.1103/PhysRevA.89.022307} {\bibfield  {journal} {\bibinfo  {journal}
  {Physical Review A}\ }\textbf {\bibinfo {volume} {89}},\ \bibinfo {pages}
  {022307} (\bibinfo {year} {2014})}\BibitemShut {NoStop}%
\bibitem [{\citenamefont {Zhang}\ \emph {et~al.}(2017)\citenamefont {Zhang},
  \citenamefont {Zhao}, \citenamefont {Razavi},\ and\ \citenamefont
  {Ma}}]{Zhang2017}%
  \BibitemOpen
  \bibfield  {author} {\bibinfo {author} {\bibfnamefont {Z.}~\bibnamefont
  {Zhang}}, \bibinfo {author} {\bibfnamefont {Q.}~\bibnamefont {Zhao}},
  \bibinfo {author} {\bibfnamefont {M.}~\bibnamefont {Razavi}}, \ and\ \bibinfo
  {author} {\bibfnamefont {X.}~\bibnamefont {Ma}},\ }\href {\doibase
  10.1103/PhysRevA.95.012333} {\bibfield  {journal} {\bibinfo  {journal}
  {Physical Review A}\ }\textbf {\bibinfo {volume} {95}},\ \bibinfo {pages}
  {012333} (\bibinfo {year} {2017})}\BibitemShut {NoStop}%
\bibitem [{\citenamefont {Fawzi}\ \emph {et~al.}(2018)\citenamefont {Fawzi},
  \citenamefont {Saunderson},\ and\ \citenamefont {Parrilo}}]{Fawzi2017}%
  \BibitemOpen
  \bibfield  {author} {\bibinfo {author} {\bibfnamefont {H.}~\bibnamefont
  {Fawzi}}, \bibinfo {author} {\bibfnamefont {J.}~\bibnamefont {Saunderson}}, \
  and\ \bibinfo {author} {\bibfnamefont {P.~A.}\ \bibnamefont {Parrilo}},\
  }\href {http://arxiv.org/abs/1705.00812} {\bibfield  {journal} {\bibinfo
  {journal} {Foundations of Computational Mathematics}\ } (\bibinfo {year}
  {2018})}\BibitemShut {NoStop}%
\bibitem [{\citenamefont {Bratzik}\ \emph {et~al.}(2011)\citenamefont
  {Bratzik}, \citenamefont {Mertz}, \citenamefont {Kampermann},\ and\
  \citenamefont {Bru{\ss}}}]{bratzikMinentropyQuantumKey2011}%
  \BibitemOpen
  \bibfield  {author} {\bibinfo {author} {\bibfnamefont {S.}~\bibnamefont
  {Bratzik}}, \bibinfo {author} {\bibfnamefont {M.}~\bibnamefont {Mertz}},
  \bibinfo {author} {\bibfnamefont {H.}~\bibnamefont {Kampermann}}, \ and\
  \bibinfo {author} {\bibfnamefont {D.}~\bibnamefont {Bru{\ss}}},\ }\href
  {\doibase 10.1103/PhysRevA.83.022330} {\bibfield  {journal} {\bibinfo
  {journal} {Physical Review A}\ }\textbf {\bibinfo {volume} {83}},\ \bibinfo
  {pages} {022330} (\bibinfo {year} {2011})}\BibitemShut {NoStop}%
\bibitem [{\citenamefont {Coles}(2012)}]{Coles2012}%
  \BibitemOpen
  \bibfield  {author} {\bibinfo {author} {\bibfnamefont {P.~J.}\ \bibnamefont
  {Coles}},\ }\href {\doibase 10.1103/PhysRevA.85.042103} {\bibfield  {journal}
  {\bibinfo  {journal} {Physical Review A}\ }\textbf {\bibinfo {volume} {85}},\
  \bibinfo {pages} {042103} (\bibinfo {year} {2012})}\BibitemShut {NoStop}%
\bibitem [{\citenamefont {Watrous}(2012)}]{Watrous2009}%
  \BibitemOpen
  \bibfield  {author} {\bibinfo {author} {\bibfnamefont {J.}~\bibnamefont
  {Watrous}},\ }\href {\doibase 10.4086/toc.2009.v005a011} {\bibfield
  {journal} {\bibinfo  {journal} {Theory of Computing}\ }\textbf {\bibinfo
  {volume} {5}},\ \bibinfo {pages} {217} (\bibinfo {year} {2012})}\BibitemShut
  {NoStop}%
\bibitem [{\citenamefont {Bennett}\ \emph {et~al.}(1992)\citenamefont
  {Bennett}, \citenamefont {Bessette}, \citenamefont {Brassard}, \citenamefont
  {Salvail},\ and\ \citenamefont {Smolin}}]{Bennett1992}%
  \BibitemOpen
  \bibfield  {author} {\bibinfo {author} {\bibfnamefont {C.~H.}\ \bibnamefont
  {Bennett}}, \bibinfo {author} {\bibfnamefont {F.}~\bibnamefont {Bessette}},
  \bibinfo {author} {\bibfnamefont {G.}~\bibnamefont {Brassard}}, \bibinfo
  {author} {\bibfnamefont {L.}~\bibnamefont {Salvail}}, \ and\ \bibinfo
  {author} {\bibfnamefont {J.}~\bibnamefont {Smolin}},\ }\href {\doibase
  10.1007/BF00191318} {\bibfield  {journal} {\bibinfo  {journal} {Journal of
  Cryptology}\ }\textbf {\bibinfo {volume} {5}},\ \bibinfo {pages} {3}
  (\bibinfo {year} {1992})}\BibitemShut {NoStop}%
\bibitem [{\citenamefont {Lucamarini}\ \emph {et~al.}(2018)\citenamefont
  {Lucamarini}, \citenamefont {Yuan}, \citenamefont {Dynes},\ and\
  \citenamefont {Shields}}]{Lucamarini2018}%
  \BibitemOpen
  \bibfield  {author} {\bibinfo {author} {\bibfnamefont {M.}~\bibnamefont
  {Lucamarini}}, \bibinfo {author} {\bibfnamefont {Z.~L.}\ \bibnamefont
  {Yuan}}, \bibinfo {author} {\bibfnamefont {J.~F.}\ \bibnamefont {Dynes}}, \
  and\ \bibinfo {author} {\bibfnamefont {A.~J.}\ \bibnamefont {Shields}},\
  }\href {\doibase 10.1038/s41586-018-0066-6} {\bibfield  {journal} {\bibinfo
  {journal} {Nature}\ }\textbf {\bibinfo {volume} {557}},\ \bibinfo {pages}
  {400} (\bibinfo {year} {2018})}\BibitemShut {NoStop}%
\bibitem [{\citenamefont {Pirandola}\ \emph {et~al.}(2017)\citenamefont
  {Pirandola}, \citenamefont {Laurenza}, \citenamefont {Ottaviani},\ and\
  \citenamefont {Banchi}}]{Pirandola2017}%
  \BibitemOpen
  \bibfield  {author} {\bibinfo {author} {\bibfnamefont {S.}~\bibnamefont
  {Pirandola}}, \bibinfo {author} {\bibfnamefont {R.}~\bibnamefont {Laurenza}},
  \bibinfo {author} {\bibfnamefont {C.}~\bibnamefont {Ottaviani}}, \ and\
  \bibinfo {author} {\bibfnamefont {L.}~\bibnamefont {Banchi}},\ }\href
  {\doibase 10.1038/ncomms15043} {\bibfield  {journal} {\bibinfo  {journal}
  {Nature Communications}\ }\textbf {\bibinfo {volume} {8}},\ \bibinfo {pages}
  {15043} (\bibinfo {year} {2017})}\BibitemShut {NoStop}%
\bibitem [{\citenamefont {Diamond}\ and\ \citenamefont {Boyd}(2016)}]{cvxpy}%
  \BibitemOpen
  \bibfield  {author} {\bibinfo {author} {\bibfnamefont {S.}~\bibnamefont
  {Diamond}}\ and\ \bibinfo {author} {\bibfnamefont {S.}~\bibnamefont {Boyd}},\
  }\href@noop {} {\bibfield  {journal} {\bibinfo  {journal} {Journal of Machine
  Learning Research}\ }\textbf {\bibinfo {volume} {17}},\ \bibinfo {pages} {1}
  (\bibinfo {year} {2016})}\BibitemShut {NoStop}%
\bibitem [{\citenamefont {Akshay~Agrawal}\ and\ \citenamefont
  {Boyd}(2018)}]{cvxpy_rewriting}%
  \BibitemOpen
  \bibfield  {author} {\bibinfo {author} {\bibfnamefont {S.~D.}\ \bibnamefont
  {Akshay~Agrawal}, \bibfnamefont {Robin~Verschueren}}\ and\ \bibinfo {author}
  {\bibfnamefont {S.}~\bibnamefont {Boyd}},\ }\href@noop {} {\bibfield
  {journal} {\bibinfo  {journal} {Journal of Control and Decision}\ }\textbf
  {\bibinfo {volume} {5}},\ \bibinfo {pages} {42} (\bibinfo {year}
  {2018})}\BibitemShut {NoStop}%
\bibitem [{\citenamefont {CVX~Research}(2012)}]{cvx}%
  \BibitemOpen
  \bibfield  {author} {\bibinfo {author} {\bibfnamefont {I.}~\bibnamefont
  {CVX~Research}},\ }\href@noop {} {\enquote {\bibinfo {title} {{CVX}: Matlab
  software for disciplined convex programming, version 2.0},}\ }\bibinfo
  {howpublished} {\url{http://cvxr.com/cvx}} (\bibinfo {year}
  {2012})\BibitemShut {NoStop}%
\bibitem [{\citenamefont {Grant}\ and\ \citenamefont {Boyd}(2008)}]{cvx-2}%
  \BibitemOpen
  \bibfield  {author} {\bibinfo {author} {\bibfnamefont {M.}~\bibnamefont
  {Grant}}\ and\ \bibinfo {author} {\bibfnamefont {S.}~\bibnamefont {Boyd}},\
  }in\ \href@noop {} {\emph {\bibinfo {booktitle} {Recent Advances in Learning
  and Control}}},\ \bibinfo {series and number} {Lecture Notes in Control and
  Information Sciences},\ \bibinfo {editor} {edited by\ \bibinfo {editor}
  {\bibfnamefont {V.}~\bibnamefont {Blondel}}, \bibinfo {editor} {\bibfnamefont
  {S.}~\bibnamefont {Boyd}}, \ and\ \bibinfo {editor} {\bibfnamefont
  {H.}~\bibnamefont {Kimura}}}\ (\bibinfo  {publisher} {Springer-Verlag
  Limited},\ \bibinfo {year} {2008})\ pp.\ \bibinfo {pages} {95--110},\
  \bibinfo {note} {\url{http://stanford.edu/~boyd/graph_dcp.html}}\BibitemShut
  {NoStop}%
\bibitem [{\citenamefont {Nielsen}\ and\ \citenamefont
  {Chuang}(2000)}]{nielsen2000quantum}%
  \BibitemOpen
  \bibfield  {author} {\bibinfo {author} {\bibfnamefont {M.}~\bibnamefont
  {Nielsen}}\ and\ \bibinfo {author} {\bibfnamefont {I.}~\bibnamefont
  {Chuang}},\ }\href {https://books.google.com/books?id=65FqEKQOfP8C} {\emph
  {\bibinfo {title} {Quantum Computation and Quantum Information}}},\ Cambridge
  Series on Information and the Natural Sciences\ (\bibinfo  {publisher}
  {Cambridge University Press},\ \bibinfo {year} {2000})\BibitemShut {NoStop}%
\bibitem [{\citenamefont {Tamaki}\ and\ \citenamefont
  {L{\"u}tkenhaus}(2004)}]{Tamaki2004}%
  \BibitemOpen
  \bibfield  {author} {\bibinfo {author} {\bibfnamefont {K.}~\bibnamefont
  {Tamaki}}\ and\ \bibinfo {author} {\bibfnamefont {N.}~\bibnamefont
  {L{\"u}tkenhaus}},\ }\href {\doibase 10.1103/PhysRevA.69.032316} {\bibfield
  {journal} {\bibinfo  {journal} {Physical Review A}\ }\textbf {\bibinfo
  {volume} {69}},\ \bibinfo {pages} {032316} (\bibinfo {year}
  {2004})}\BibitemShut {NoStop}%
\bibitem [{\citenamefont {Sasaki}\ \emph {et~al.}(2015)\citenamefont {Sasaki},
  \citenamefont {Matsumoto},\ and\ \citenamefont {Uyematsu}}]{Sasaki2015}%
  \BibitemOpen
  \bibfield  {author} {\bibinfo {author} {\bibfnamefont {H.}~\bibnamefont
  {Sasaki}}, \bibinfo {author} {\bibfnamefont {R.}~\bibnamefont {Matsumoto}}, \
  and\ \bibinfo {author} {\bibfnamefont {T.}~\bibnamefont {Uyematsu}},\ }in\
  \href {\doibase 10.1109/ISIT.2015.7282544} {\emph {\bibinfo {booktitle} {2015
  {{IEEE International Symposium}} on {{Information Theory}}}}},\ Vol.\
  \bibinfo {volume} {2015-June}\ (\bibinfo  {publisher} {{IEEE}},\ \bibinfo
  {year} {2015})\ pp.\ \bibinfo {pages} {696--699}\BibitemShut {NoStop}%
\bibitem [{\citenamefont {Lo}\ \emph {et~al.}(2012)\citenamefont {Lo},
  \citenamefont {Curty},\ and\ \citenamefont {Qi}}]{Lo2012}%
  \BibitemOpen
  \bibfield  {author} {\bibinfo {author} {\bibfnamefont {H.-K.}\ \bibnamefont
  {Lo}}, \bibinfo {author} {\bibfnamefont {M.}~\bibnamefont {Curty}}, \ and\
  \bibinfo {author} {\bibfnamefont {B.}~\bibnamefont {Qi}},\ }\href {\doibase
  10.1103/PhysRevLett.108.130503} {\bibfield  {journal} {\bibinfo  {journal}
  {Physical Review Letters}\ }\textbf {\bibinfo {volume} {108}},\ \bibinfo
  {pages} {130503} (\bibinfo {year} {2012})}\BibitemShut {NoStop}%
\bibitem [{\citenamefont {Tamaki}\ \emph {et~al.}(2018)\citenamefont {Tamaki},
  \citenamefont {Lo}, \citenamefont {Wang},\ and\ \citenamefont
  {Lucamarini}}]{tamakiInformationTheoreticSecurity2018}%
  \BibitemOpen
  \bibfield  {author} {\bibinfo {author} {\bibfnamefont {K.}~\bibnamefont
  {Tamaki}}, \bibinfo {author} {\bibfnamefont {H.-K.}\ \bibnamefont {Lo}},
  \bibinfo {author} {\bibfnamefont {W.}~\bibnamefont {Wang}}, \ and\ \bibinfo
  {author} {\bibfnamefont {M.}~\bibnamefont {Lucamarini}},\ }\href
  {http://arxiv.org/abs/1805.05511} {\bibfield  {journal} {\bibinfo  {journal}
  {arXiv:1805.05511 [quant-ph]}\ } (\bibinfo {year} {2018})},\ \Eprint
  {http://arxiv.org/abs/1805.05511} {arXiv:1805.05511 [quant-ph]} \BibitemShut
  {NoStop}%
\bibitem [{\citenamefont {Curty}\ \emph {et~al.}(2019)\citenamefont {Curty},
  \citenamefont {Azuma},\ and\ \citenamefont
  {Lo}}]{curtySimpleSecurityProof2019}%
  \BibitemOpen
  \bibfield  {author} {\bibinfo {author} {\bibfnamefont {M.}~\bibnamefont
  {Curty}}, \bibinfo {author} {\bibfnamefont {K.}~\bibnamefont {Azuma}}, \ and\
  \bibinfo {author} {\bibfnamefont {H.-K.}\ \bibnamefont {Lo}},\ }\href
  {\doibase 10.1038/s41534-019-0175-6} {\bibfield  {journal} {\bibinfo
  {journal} {npj Quantum Information}\ }\textbf {\bibinfo {volume} {5}},\
  \bibinfo {pages} {64} (\bibinfo {year} {2019})}\BibitemShut {NoStop}%
\bibitem [{\citenamefont {Ma}\ \emph {et~al.}(2018)\citenamefont {Ma},
  \citenamefont {Zeng},\ and\ \citenamefont
  {Zhou}}]{maPhaseMatchingQuantumKey2018}%
  \BibitemOpen
  \bibfield  {author} {\bibinfo {author} {\bibfnamefont {X.}~\bibnamefont
  {Ma}}, \bibinfo {author} {\bibfnamefont {P.}~\bibnamefont {Zeng}}, \ and\
  \bibinfo {author} {\bibfnamefont {H.}~\bibnamefont {Zhou}},\ }\href {\doibase
  10.1103/PhysRevX.8.031043} {\bibfield  {journal} {\bibinfo  {journal}
  {Physical Review X}\ }\textbf {\bibinfo {volume} {8}},\ \bibinfo {pages}
  {031043} (\bibinfo {year} {2018})}\BibitemShut {NoStop}%
\bibitem [{\citenamefont {Jiang}\ \emph {et~al.}(2019)\citenamefont {Jiang},
  \citenamefont {Yu}, \citenamefont {Hu},\ and\ \citenamefont
  {Wang}}]{jiangUnconditionalSecuritySending2019a}%
  \BibitemOpen
  \bibfield  {author} {\bibinfo {author} {\bibfnamefont {C.}~\bibnamefont
  {Jiang}}, \bibinfo {author} {\bibfnamefont {Z.-W.}\ \bibnamefont {Yu}},
  \bibinfo {author} {\bibfnamefont {X.-L.}\ \bibnamefont {Hu}}, \ and\ \bibinfo
  {author} {\bibfnamefont {X.-B.}\ \bibnamefont {Wang}},\ }\href {\doibase
  10.1103/PhysRevApplied.12.024061} {\bibfield  {journal} {\bibinfo  {journal}
  {Physical Review Applied}\ }\textbf {\bibinfo {volume} {12}},\ \bibinfo
  {pages} {024061} (\bibinfo {year} {2019})}\BibitemShut {NoStop}%
\bibitem [{\citenamefont {He}\ \emph {et~al.}(2019)\citenamefont {He},
  \citenamefont {Wang}, \citenamefont {Li},\ and\ \citenamefont
  {Bao}}]{he2019finitekey}%
  \BibitemOpen
  \bibfield  {author} {\bibinfo {author} {\bibfnamefont {S.-F.}\ \bibnamefont
  {He}}, \bibinfo {author} {\bibfnamefont {Y.}~\bibnamefont {Wang}}, \bibinfo
  {author} {\bibfnamefont {H.-W.}\ \bibnamefont {Li}}, \ and\ \bibinfo {author}
  {\bibfnamefont {W.-S.}\ \bibnamefont {Bao}},\ }\href@noop {} {\enquote
  {\bibinfo {title} {Finite-key analysis for a practical decoy-state twin-field
  quantum key distribution},}\ } (\bibinfo {year} {2019}),\ \Eprint
  {http://arxiv.org/abs/1910.12416} {arXiv:1910.12416 [quant-ph]} \BibitemShut
  {NoStop}%
\bibitem [{\citenamefont {Lorenzo}\ \emph {et~al.}(2019)\citenamefont
  {Lorenzo}, \citenamefont {Navarrete}, \citenamefont {Azuma}, \citenamefont
  {Curty},\ and\ \citenamefont {Razavi}}]{lorenzo2019tightfinitekey}%
  \BibitemOpen
  \bibfield  {author} {\bibinfo {author} {\bibfnamefont {G.~C.}\ \bibnamefont
  {Lorenzo}}, \bibinfo {author} {\bibfnamefont {A.}~\bibnamefont {Navarrete}},
  \bibinfo {author} {\bibfnamefont {K.}~\bibnamefont {Azuma}}, \bibinfo
  {author} {\bibfnamefont {M.}~\bibnamefont {Curty}}, \ and\ \bibinfo {author}
  {\bibfnamefont {M.}~\bibnamefont {Razavi}},\ }\href@noop {} {\enquote
  {\bibinfo {title} {Tight finite-key security for twin-field quantum key
  distribution},}\ } (\bibinfo {year} {2019}),\ \Eprint
  {http://arxiv.org/abs/1910.11407} {arXiv:1910.11407 [quant-ph]} \BibitemShut
  {NoStop}%
\bibitem [{\citenamefont {Minder}\ \emph {et~al.}(2019)\citenamefont {Minder},
  \citenamefont {Pittaluga}, \citenamefont {Roberts}, \citenamefont
  {Lucamarini}, \citenamefont {Dynes}, \citenamefont {Yuan},\ and\
  \citenamefont {Shields}}]{minderExperimentalQuantumKey2019}%
  \BibitemOpen
  \bibfield  {author} {\bibinfo {author} {\bibfnamefont {M.}~\bibnamefont
  {Minder}}, \bibinfo {author} {\bibfnamefont {M.}~\bibnamefont {Pittaluga}},
  \bibinfo {author} {\bibfnamefont {G.~L.}\ \bibnamefont {Roberts}}, \bibinfo
  {author} {\bibfnamefont {M.}~\bibnamefont {Lucamarini}}, \bibinfo {author}
  {\bibfnamefont {J.~F.}\ \bibnamefont {Dynes}}, \bibinfo {author}
  {\bibfnamefont {Z.~L.}\ \bibnamefont {Yuan}}, \ and\ \bibinfo {author}
  {\bibfnamefont {A.~J.}\ \bibnamefont {Shields}},\ }\href {\doibase
  10.1038/s41566-019-0377-7} {\bibfield  {journal} {\bibinfo  {journal} {Nature
  Photonics}\ }\textbf {\bibinfo {volume} {13}},\ \bibinfo {pages} {334}
  (\bibinfo {year} {2019})}\BibitemShut {NoStop}%
\bibitem [{\citenamefont {Wang}\ \emph {et~al.}(2019)\citenamefont {Wang},
  \citenamefont {He}, \citenamefont {Yin}, \citenamefont {Lu}, \citenamefont
  {Cui}, \citenamefont {Chen}, \citenamefont {Zhou}, \citenamefont {Guo},\ and\
  \citenamefont {Han}}]{wangBeatingFundamentalRateDistance2019}%
  \BibitemOpen
  \bibfield  {author} {\bibinfo {author} {\bibfnamefont {S.}~\bibnamefont
  {Wang}}, \bibinfo {author} {\bibfnamefont {D.-Y.}\ \bibnamefont {He}},
  \bibinfo {author} {\bibfnamefont {Z.-Q.}\ \bibnamefont {Yin}}, \bibinfo
  {author} {\bibfnamefont {F.-Y.}\ \bibnamefont {Lu}}, \bibinfo {author}
  {\bibfnamefont {C.-H.}\ \bibnamefont {Cui}}, \bibinfo {author} {\bibfnamefont
  {W.}~\bibnamefont {Chen}}, \bibinfo {author} {\bibfnamefont {Z.}~\bibnamefont
  {Zhou}}, \bibinfo {author} {\bibfnamefont {G.-C.}\ \bibnamefont {Guo}}, \
  and\ \bibinfo {author} {\bibfnamefont {Z.-F.}\ \bibnamefont {Han}},\ }\href
  {\doibase 10.1103/PhysRevX.9.021046} {\bibfield  {journal} {\bibinfo
  {journal} {Physical Review X}\ }\textbf {\bibinfo {volume} {9}},\ \bibinfo
  {pages} {021046} (\bibinfo {year} {2019})}\BibitemShut {NoStop}%
\bibitem [{\citenamefont {Zhong}\ \emph {et~al.}(2019)\citenamefont {Zhong},
  \citenamefont {Hu}, \citenamefont {Curty}, \citenamefont {Qian},\ and\
  \citenamefont {Lo}}]{zhongProofofPrincipleExperimentalDemonstration2019}%
  \BibitemOpen
  \bibfield  {author} {\bibinfo {author} {\bibfnamefont {X.}~\bibnamefont
  {Zhong}}, \bibinfo {author} {\bibfnamefont {J.}~\bibnamefont {Hu}}, \bibinfo
  {author} {\bibfnamefont {M.}~\bibnamefont {Curty}}, \bibinfo {author}
  {\bibfnamefont {L.}~\bibnamefont {Qian}}, \ and\ \bibinfo {author}
  {\bibfnamefont {H.-K.}\ \bibnamefont {Lo}},\ }\href {\doibase
  10.1103/PhysRevLett.123.100506} {\bibfield  {journal} {\bibinfo  {journal}
  {Physical Review Letters}\ }\textbf {\bibinfo {volume} {123}},\ \bibinfo
  {pages} {100506} (\bibinfo {year} {2019})}\BibitemShut {NoStop}%
\bibitem [{\citenamefont {Liu}\ \emph {et~al.}(2019)\citenamefont {Liu},
  \citenamefont {Yu}, \citenamefont {Zhang}, \citenamefont {Guan},
  \citenamefont {Chen}, \citenamefont {Zhang}, \citenamefont {Hu},
  \citenamefont {Li}, \citenamefont {Jiang}, \citenamefont {Lin}, \citenamefont
  {Chen}, \citenamefont {You}, \citenamefont {Wang}, \citenamefont {Wang},
  \citenamefont {Zhang},\ and\ \citenamefont
  {Pan}}]{liuExperimentalTwinFieldQuantum2019}%
  \BibitemOpen
  \bibfield  {author} {\bibinfo {author} {\bibfnamefont {Y.}~\bibnamefont
  {Liu}}, \bibinfo {author} {\bibfnamefont {Z.-W.}\ \bibnamefont {Yu}},
  \bibinfo {author} {\bibfnamefont {W.}~\bibnamefont {Zhang}}, \bibinfo
  {author} {\bibfnamefont {J.-Y.}\ \bibnamefont {Guan}}, \bibinfo {author}
  {\bibfnamefont {J.-P.}\ \bibnamefont {Chen}}, \bibinfo {author}
  {\bibfnamefont {C.}~\bibnamefont {Zhang}}, \bibinfo {author} {\bibfnamefont
  {X.-L.}\ \bibnamefont {Hu}}, \bibinfo {author} {\bibfnamefont
  {H.}~\bibnamefont {Li}}, \bibinfo {author} {\bibfnamefont {C.}~\bibnamefont
  {Jiang}}, \bibinfo {author} {\bibfnamefont {J.}~\bibnamefont {Lin}}, \bibinfo
  {author} {\bibfnamefont {T.-Y.}\ \bibnamefont {Chen}}, \bibinfo {author}
  {\bibfnamefont {L.}~\bibnamefont {You}}, \bibinfo {author} {\bibfnamefont
  {Z.}~\bibnamefont {Wang}}, \bibinfo {author} {\bibfnamefont {X.-B.}\
  \bibnamefont {Wang}}, \bibinfo {author} {\bibfnamefont {Q.}~\bibnamefont
  {Zhang}}, \ and\ \bibinfo {author} {\bibfnamefont {J.-W.}\ \bibnamefont
  {Pan}},\ }\href {\doibase 10.1103/PhysRevLett.123.100505} {\bibfield
  {journal} {\bibinfo  {journal} {Physical Review Letters}\ }\textbf {\bibinfo
  {volume} {123}},\ \bibinfo {pages} {100505} (\bibinfo {year}
  {2019})}\BibitemShut {NoStop}%
\bibitem [{Note1()}]{Note1}%
  \BibitemOpen
  \bibinfo {note} {In Twin-Field QKD, the logical basis and the photon-number
  or Fock basis coincide.}\BibitemShut {Stop}%
\bibitem [{\citenamefont {Press}\ \emph {et~al.}(1993)\citenamefont {Press},
  \citenamefont {Teukolsky}, \citenamefont {Vetterling},\ and\ \citenamefont
  {Flannery}}]{numerical_recipes}%
  \BibitemOpen
  \bibfield  {author} {\bibinfo {author} {\bibfnamefont {W.~H.}\ \bibnamefont
  {Press}}, \bibinfo {author} {\bibfnamefont {S.~A.}\ \bibnamefont
  {Teukolsky}}, \bibinfo {author} {\bibfnamefont {W.~T.}\ \bibnamefont
  {Vetterling}}, \ and\ \bibinfo {author} {\bibfnamefont {B.~P.}\ \bibnamefont
  {Flannery}},\ }\href@noop {} {\emph {\bibinfo {title} {Numerical Recipes in
  FORTRAN; The Art of Scientific Computing}}},\ \bibinfo {edition} {2nd}\ ed.\
  (\bibinfo  {publisher} {Cambridge University Press},\ \bibinfo {address} {New
  York, NY, USA},\ \bibinfo {year} {1993})\BibitemShut {NoStop}%
\bibitem [{\citenamefont {Brent}(2013)}]{brent2013algorithms}%
  \BibitemOpen
  \bibfield  {author} {\bibinfo {author} {\bibfnamefont {R.}~\bibnamefont
  {Brent}},\ }\href {https://books.google.com/books?id=AITCAgAAQBAJ} {\emph
  {\bibinfo {title} {Algorithms for Minimization Without Derivatives}}},\ Dover
  Books on Mathematics\ (\bibinfo  {publisher} {Dover Publications},\ \bibinfo
  {year} {2013})\BibitemShut {NoStop}%
\bibitem [{\citenamefont {Renner}(2007)}]{Renner2007}%
  \BibitemOpen
  \bibfield  {author} {\bibinfo {author} {\bibfnamefont {R.}~\bibnamefont
  {Renner}},\ }\href {\doibase 10.1038/nphys684} {\bibfield  {journal}
  {\bibinfo  {journal} {Nature Physics}\ }\textbf {\bibinfo {volume} {3}},\
  \bibinfo {pages} {645} (\bibinfo {year} {2007})}\BibitemShut {NoStop}%
\bibitem [{\citenamefont {Christandl}\ \emph {et~al.}(2009)\citenamefont
  {Christandl}, \citenamefont {K{\"o}nig},\ and\ \citenamefont
  {Renner}}]{Christandl2009}%
  \BibitemOpen
  \bibfield  {author} {\bibinfo {author} {\bibfnamefont {M.}~\bibnamefont
  {Christandl}}, \bibinfo {author} {\bibfnamefont {R.}~\bibnamefont
  {K{\"o}nig}}, \ and\ \bibinfo {author} {\bibfnamefont {R.}~\bibnamefont
  {Renner}},\ }\href {\doibase 10.1103/PhysRevLett.102.020504} {\bibfield
  {journal} {\bibinfo  {journal} {Physical Review Letters}\ }\textbf {\bibinfo
  {volume} {102}},\ \bibinfo {pages} {020504} (\bibinfo {year}
  {2009})}\BibitemShut {NoStop}%
\bibitem [{\citenamefont {Wang}(2005)}]{Wang2005}%
  \BibitemOpen
  \bibfield  {author} {\bibinfo {author} {\bibfnamefont {X.-B.}\ \bibnamefont
  {Wang}},\ }\href {\doibase 10.1103/PhysRevLett.94.230503} {\bibfield
  {journal} {\bibinfo  {journal} {Physical Review Letters}\ }\textbf {\bibinfo
  {volume} {94}},\ \bibinfo {pages} {230503} (\bibinfo {year}
  {2005})}\BibitemShut {NoStop}%
\bibitem [{\citenamefont {Lo}\ \emph {et~al.}(2005)\citenamefont {Lo},
  \citenamefont {Ma},\ and\ \citenamefont {Chen}}]{Lo2005}%
  \BibitemOpen
  \bibfield  {author} {\bibinfo {author} {\bibfnamefont {H.-K.}\ \bibnamefont
  {Lo}}, \bibinfo {author} {\bibfnamefont {X.}~\bibnamefont {Ma}}, \ and\
  \bibinfo {author} {\bibfnamefont {K.}~\bibnamefont {Chen}},\ }\href {\doibase
  10.1103/PhysRevLett.94.230504} {\bibfield  {journal} {\bibinfo  {journal}
  {Physical Review Letters}\ }\textbf {\bibinfo {volume} {94}},\ \bibinfo
  {pages} {230504} (\bibinfo {year} {2005})}\BibitemShut {NoStop}%
\bibitem [{\citenamefont {Beaudry}\ \emph {et~al.}(2008)\citenamefont
  {Beaudry}, \citenamefont {Moroder},\ and\ \citenamefont
  {L{\"u}tkenhaus}}]{Beaudry2009}%
  \BibitemOpen
  \bibfield  {author} {\bibinfo {author} {\bibfnamefont {N.~J.}\ \bibnamefont
  {Beaudry}}, \bibinfo {author} {\bibfnamefont {T.}~\bibnamefont {Moroder}}, \
  and\ \bibinfo {author} {\bibfnamefont {N.}~\bibnamefont {L{\"u}tkenhaus}},\
  }\href {\doibase 10.1103/PhysRevLett.101.093601} {\bibfield  {journal}
  {\bibinfo  {journal} {Physical Review Letters}\ }\textbf {\bibinfo {volume}
  {101}},\ \bibinfo {pages} {093601} (\bibinfo {year} {2008})}\BibitemShut
  {NoStop}%
\bibitem [{\citenamefont {Tsurumaru}\ and\ \citenamefont
  {Tamaki}(2008)}]{Tsurumaru2008}%
  \BibitemOpen
  \bibfield  {author} {\bibinfo {author} {\bibfnamefont {T.}~\bibnamefont
  {Tsurumaru}}\ and\ \bibinfo {author} {\bibfnamefont {K.}~\bibnamefont
  {Tamaki}},\ }\href {\doibase 10.1103/PhysRevA.78.032302} {\bibfield
  {journal} {\bibinfo  {journal} {Physical Review A}\ }\textbf {\bibinfo
  {volume} {78}},\ \bibinfo {pages} {032302} (\bibinfo {year}
  {2008})}\BibitemShut {NoStop}%
\bibitem [{\citenamefont {Gittsovich}\ \emph {et~al.}(2014)\citenamefont
  {Gittsovich}, \citenamefont {Beaudry}, \citenamefont {Narasimhachar},
  \citenamefont {Alvarez}, \citenamefont {Moroder},\ and\ \citenamefont
  {L{\"u}tkenhaus}}]{Gittsovich2014}%
  \BibitemOpen
  \bibfield  {author} {\bibinfo {author} {\bibfnamefont {O.}~\bibnamefont
  {Gittsovich}}, \bibinfo {author} {\bibfnamefont {N.~J.}\ \bibnamefont
  {Beaudry}}, \bibinfo {author} {\bibfnamefont {V.}~\bibnamefont
  {Narasimhachar}}, \bibinfo {author} {\bibfnamefont {R.~R.}\ \bibnamefont
  {Alvarez}}, \bibinfo {author} {\bibfnamefont {T.}~\bibnamefont {Moroder}}, \
  and\ \bibinfo {author} {\bibfnamefont {N.}~\bibnamefont {L{\"u}tkenhaus}},\
  }\href {\doibase 10.1103/PhysRevA.89.012325} {\bibfield  {journal} {\bibinfo
  {journal} {Physical Review A}\ }\textbf {\bibinfo {volume} {89}},\ \bibinfo
  {pages} {012325} (\bibinfo {year} {2014})}\BibitemShut {NoStop}%
\bibitem [{\citenamefont {George}\ and\ \citenamefont
  {L{\"u}tkenhaus}(2019)}]{george2019qcrypt}%
  \BibitemOpen
  \bibfield  {author} {\bibinfo {author} {\bibfnamefont {I.}~\bibnamefont
  {George}}\ and\ \bibinfo {author} {\bibfnamefont {N.}~\bibnamefont
  {L{\"u}tkenhaus}},\ }in\ \href@noop {} {\emph {\bibinfo {booktitle} {9th
  International Conference on Quantum Cryptography}}}\ (\bibinfo {year}
  {2019})\BibitemShut {NoStop}%
\bibitem [{\citenamefont {Lin}\ \emph {et~al.}(2019)\citenamefont {Lin},
  \citenamefont {Upadhyaya},\ and\ \citenamefont
  {L{\"u}tkenhaus}}]{Lin:2019aa}%
  \BibitemOpen
  \bibfield  {author} {\bibinfo {author} {\bibfnamefont {J.}~\bibnamefont
  {Lin}}, \bibinfo {author} {\bibfnamefont {T.}~\bibnamefont {Upadhyaya}}, \
  and\ \bibinfo {author} {\bibfnamefont {N.}~\bibnamefont {L{\"u}tkenhaus}},\
  }\href@noop {} {\enquote {\bibinfo {title} {Asymptotic security analysis of
  discrete-modulated continuous-variable quantum key distribution},}\ }
  (\bibinfo {year} {2019}),\ \Eprint {http://arxiv.org/abs/arXiv:1905.10896}
  {arXiv:1905.10896} \BibitemShut {NoStop}%
\bibitem [{\citenamefont {Govia}\ \emph {et~al.}(2019)\citenamefont {Govia},
  \citenamefont {Bunandar}, \citenamefont {Lin}, \citenamefont {Englund},
  \citenamefont {Lütkenhaus},\ and\ \citenamefont
  {Krovi}}]{govia2019clifford}%
  \BibitemOpen
  \bibfield  {author} {\bibinfo {author} {\bibfnamefont {L.~C.~G.}\
  \bibnamefont {Govia}}, \bibinfo {author} {\bibfnamefont {D.}~\bibnamefont
  {Bunandar}}, \bibinfo {author} {\bibfnamefont {J.}~\bibnamefont {Lin}},
  \bibinfo {author} {\bibfnamefont {D.}~\bibnamefont {Englund}}, \bibinfo
  {author} {\bibfnamefont {N.}~\bibnamefont {Lütkenhaus}}, \ and\ \bibinfo
  {author} {\bibfnamefont {H.}~\bibnamefont {Krovi}},\ }\href@noop {} {\enquote
  {\bibinfo {title} {Clifford group restricted eavesdroppers in quantum key
  distribution},}\ } (\bibinfo {year} {2019}),\ \Eprint
  {http://arxiv.org/abs/1910.10564} {arXiv:1910.10564 [quant-ph]} \BibitemShut
  {NoStop}%
\bibitem [{\citenamefont {Boyd}\ and\ \citenamefont
  {Vandenberghe}(2004)}]{Boyd:2004}%
  \BibitemOpen
  \bibfield  {author} {\bibinfo {author} {\bibfnamefont {S.}~\bibnamefont
  {Boyd}}\ and\ \bibinfo {author} {\bibfnamefont {L.}~\bibnamefont
  {Vandenberghe}},\ }\href@noop {} {\emph {\bibinfo {title} {Convex
  Optimization}}}\ (\bibinfo  {publisher} {Cambridge University Press},\
  \bibinfo {address} {New York, NY, USA},\ \bibinfo {year} {2004})\BibitemShut
  {NoStop}%
\bibitem [{\citenamefont {Watrous}(2011)}]{Watrous2011a}%
  \BibitemOpen
  \bibfield  {author} {\bibinfo {author} {\bibfnamefont {J.}~\bibnamefont
  {Watrous}},\ }\href@noop {} {\bibfield  {journal} {\bibinfo  {journal} {John
  Watrous Lecture Notes}\ } (\bibinfo {year} {2011})}\BibitemShut {NoStop}%
\bibitem [{Note2()}]{Note2}%
  \BibitemOpen
  \bibinfo {note} {We have changed the labels $a$ and $b$ from the definition
  in the Results section into tuples $(a_b, a_v)$ and $(b_b, b_v)$, where the
  subscript $b$ denotes the basis (or any other possible announcement) of the
  measurement and the subscript $v$ denotes the classical outcome of the
  measurement.}\BibitemShut {Stop}%
\bibitem [{Note3()}]{Note3}%
  \BibitemOpen
  \bibinfo {note} {$p_{\protect \text {pass}}$ is not included in the
  optimization as it can be immediately measured from the
  experiment.}\BibitemShut {Stop}%
\bibitem [{Note4()}]{Note4}%
  \BibitemOpen
  \bibinfo {note} {More accurately, a density matrix $\rho $ is a positive
  semidefinite operator. We can consider the density matrix after a
  depolarizing channel with small depolarizing error/probability $\epsilon \geq
  0$, then the resulting density operator $\rho \succ 0$. Calculating the key
  rate using this density matrix incurs some errors that can be quantified,
  see~\protect \citep {Winick2017} for the complete details.}\BibitemShut
  {Stop}%
\bibitem [{\citenamefont {Fung}\ \emph {et~al.}(2009)\citenamefont {Fung},
  \citenamefont {Tamaki}, \citenamefont {Qi}, \citenamefont {Lo},\ and\
  \citenamefont {Ma}}]{Fung2009}%
  \BibitemOpen
  \bibfield  {author} {\bibinfo {author} {\bibfnamefont {C.~C.-H.~F.}\
  \bibnamefont {Fung}}, \bibinfo {author} {\bibfnamefont {K.}~\bibnamefont
  {Tamaki}}, \bibinfo {author} {\bibfnamefont {B.}~\bibnamefont {Qi}}, \bibinfo
  {author} {\bibfnamefont {H.-K.~H.}\ \bibnamefont {Lo}}, \ and\ \bibinfo
  {author} {\bibfnamefont {X.}~\bibnamefont {Ma}},\ }\href
  {http://dl.acm.org/citation.cfm?id=2021264} {\bibfield  {journal} {\bibinfo
  {journal} {Quantum Inf.\textasciitilde{}Comput.}\ }\textbf {\bibinfo {volume}
  {9}},\ \bibinfo {pages} {0131} (\bibinfo {year} {2009})}\BibitemShut
  {NoStop}%
\bibitem [{\citenamefont {Jain}\ \emph {et~al.}(2014)\citenamefont {Jain},
  \citenamefont {Anisimova}, \citenamefont {Khan}, \citenamefont {Makarov},
  \citenamefont {Marquardt},\ and\ \citenamefont {Leuchs}}]{Jain2014}%
  \BibitemOpen
  \bibfield  {author} {\bibinfo {author} {\bibfnamefont {N.}~\bibnamefont
  {Jain}}, \bibinfo {author} {\bibfnamefont {E.}~\bibnamefont {Anisimova}},
  \bibinfo {author} {\bibfnamefont {I.}~\bibnamefont {Khan}}, \bibinfo {author}
  {\bibfnamefont {V.}~\bibnamefont {Makarov}}, \bibinfo {author} {\bibfnamefont
  {C.}~\bibnamefont {Marquardt}}, \ and\ \bibinfo {author} {\bibfnamefont
  {G.}~\bibnamefont {Leuchs}},\ }\href {\doibase
  10.1088/1367-2630/16/12/123030} {\bibfield  {journal} {\bibinfo  {journal}
  {New Journal of Physics}\ }\textbf {\bibinfo {volume} {16}},\ \bibinfo
  {pages} {123030} (\bibinfo {year} {2014})}\BibitemShut {NoStop}%
\bibitem [{\citenamefont {Gisin}\ \emph {et~al.}(2006)\citenamefont {Gisin},
  \citenamefont {Fasel}, \citenamefont {Kraus}, \citenamefont {Zbinden},\ and\
  \citenamefont {Ribordy}}]{Gisin2006}%
  \BibitemOpen
  \bibfield  {author} {\bibinfo {author} {\bibfnamefont {N.}~\bibnamefont
  {Gisin}}, \bibinfo {author} {\bibfnamefont {S.}~\bibnamefont {Fasel}},
  \bibinfo {author} {\bibfnamefont {B.}~\bibnamefont {Kraus}}, \bibinfo
  {author} {\bibfnamefont {H.}~\bibnamefont {Zbinden}}, \ and\ \bibinfo
  {author} {\bibfnamefont {G.}~\bibnamefont {Ribordy}},\ }\href {\doibase
  10.1103/PhysRevA.73.022320} {\bibfield  {journal} {\bibinfo  {journal}
  {Physical Review A}\ }\textbf {\bibinfo {volume} {73}},\ \bibinfo {pages}
  {022320} (\bibinfo {year} {2006})}\BibitemShut {NoStop}%
\bibitem [{\citenamefont {Zhao}\ \emph {et~al.}(2007)\citenamefont {Zhao},
  \citenamefont {Qi},\ and\ \citenamefont {Lo}}]{Zhao2007}%
  \BibitemOpen
  \bibfield  {author} {\bibinfo {author} {\bibfnamefont {Y.}~\bibnamefont
  {Zhao}}, \bibinfo {author} {\bibfnamefont {B.}~\bibnamefont {Qi}}, \ and\
  \bibinfo {author} {\bibfnamefont {H.-K.}\ \bibnamefont {Lo}},\ }\href
  {\doibase 10.1063/1.2432296} {\bibfield  {journal} {\bibinfo  {journal}
  {Applied Physics Letters}\ }\textbf {\bibinfo {volume} {90}},\ \bibinfo
  {pages} {044106} (\bibinfo {year} {2007})}\BibitemShut {NoStop}%
\bibitem [{\citenamefont {Muller}\ \emph {et~al.}(1997)\citenamefont {Muller},
  \citenamefont {Herzog}, \citenamefont {Huttner}, \citenamefont {Tittel},
  \citenamefont {Zbinden},\ and\ \citenamefont {Gisin}}]{Muller1997}%
  \BibitemOpen
  \bibfield  {author} {\bibinfo {author} {\bibfnamefont {A.}~\bibnamefont
  {Muller}}, \bibinfo {author} {\bibfnamefont {T.}~\bibnamefont {Herzog}},
  \bibinfo {author} {\bibfnamefont {B.}~\bibnamefont {Huttner}}, \bibinfo
  {author} {\bibfnamefont {W.}~\bibnamefont {Tittel}}, \bibinfo {author}
  {\bibfnamefont {H.}~\bibnamefont {Zbinden}}, \ and\ \bibinfo {author}
  {\bibfnamefont {N.}~\bibnamefont {Gisin}},\ }\href {\doibase
  10.1063/1.118224} {\bibfield  {journal} {\bibinfo  {journal} {Applied Physics
  Letters}\ }\textbf {\bibinfo {volume} {70}},\ \bibinfo {pages} {793}
  (\bibinfo {year} {1997})}\BibitemShut {NoStop}%
\bibitem [{\citenamefont {Stucki}\ \emph {et~al.}(2002)\citenamefont {Stucki},
  \citenamefont {Gisin}, \citenamefont {Guinnard}, \citenamefont {Ribordy},\
  and\ \citenamefont {Zbinden}}]{Stucki2002}%
  \BibitemOpen
  \bibfield  {author} {\bibinfo {author} {\bibfnamefont {D.}~\bibnamefont
  {Stucki}}, \bibinfo {author} {\bibfnamefont {N.}~\bibnamefont {Gisin}},
  \bibinfo {author} {\bibfnamefont {O.}~\bibnamefont {Guinnard}}, \bibinfo
  {author} {\bibfnamefont {G.}~\bibnamefont {Ribordy}}, \ and\ \bibinfo
  {author} {\bibfnamefont {H.}~\bibnamefont {Zbinden}},\ }\href {\doibase
  10.1088/1367-2630/4/1/341} {\bibfield  {journal} {\bibinfo  {journal} {New
  Journal of Physics}\ }\textbf {\bibinfo {volume} {4}},\ \bibinfo {pages}
  {341} (\bibinfo {year} {2002})}\BibitemShut {NoStop}%
\bibitem [{\citenamefont {Lucamarini}\ \emph {et~al.}(2015)\citenamefont
  {Lucamarini}, \citenamefont {Choi}, \citenamefont {Ward}, \citenamefont
  {Dynes}, \citenamefont {Yuan},\ and\ \citenamefont
  {Shields}}]{Lucamarini2015b}%
  \BibitemOpen
  \bibfield  {author} {\bibinfo {author} {\bibfnamefont {M.}~\bibnamefont
  {Lucamarini}}, \bibinfo {author} {\bibfnamefont {I.}~\bibnamefont {Choi}},
  \bibinfo {author} {\bibfnamefont {M.~B.}\ \bibnamefont {Ward}}, \bibinfo
  {author} {\bibfnamefont {J.~F.}\ \bibnamefont {Dynes}}, \bibinfo {author}
  {\bibfnamefont {Z.~L.}\ \bibnamefont {Yuan}}, \ and\ \bibinfo {author}
  {\bibfnamefont {A.~J.}\ \bibnamefont {Shields}},\ }\href {\doibase
  10.1103/PhysRevX.5.031030} {\bibfield  {journal} {\bibinfo  {journal}
  {Physical Review X}\ }\textbf {\bibinfo {volume} {5}},\ \bibinfo {pages}
  {031030} (\bibinfo {year} {2015})}\BibitemShut {NoStop}%
\end{thebibliography}%

\section*{Acknowledgments}
\label{sec:acknowledgments}
We would like to thank Norbert L\"{u}tkenhaus (Univ. of Waterloo) and Jie Lin (Univ. of Waterloo) for their helpful feedback and their suggestions to include the analysis against coherent attacks. We acknowledge the support from the Office of Naval Research CONQUEST program N00014-16-C-2069. We thank members of the CONQUEST Team for their helpful inputs: Saikat Guha (Univ. of Arizona), Jeffrey Shapiro (MIT), Mark Wilde (Louisiana State Univ.), Franco Wong (MIT). D.B. and D.E. also acknowledge the support from the Air Force Office of Scientific Research program FA9550-16-1-0391, supervised by Gernot Pomrenke. 

\section*{Author Contributions}
D.B. and D.E. contributed to the initial conception of the ideas. D.B. provided the initial security proofs, and L.G. and H.K. assisted with the simplifications and extensions to the proofs. D.B. and L.G. wrote the source code, and D.B. implemented the code for the different examples. All authors contributed to writing the manuscript.

\newpage
\appendix

\section{Convex optimization}
	\label{app:convex_optimization}

	Problems in quantum information can often be formulated as an optimization problem. In particular, the secret key rate problem can be expressed as a convex optimization problem, specifically a semidefinite program. There is in general no analytical formula for the solution of convex optimization problems, but there are efficient methods for solving them, such as the interior point methods~\citep{Boyd:2004}.

	A convex optimization problem is an optimization problem of the form:
	\begin{primal}
	\begin{equation}
	\label{eq:convex_primal}
	\begin{aligned}
		& \text{minimize} 
		& & f_0 (x) \\
		& \text{subject to}
		& & f_i(x) \leq 0, \: i = 1, \dots, m \\
		&&& a_i^T x = b_i, \: i = 1, \dots, p,
	\end{aligned}
	\end{equation}
	\end{primal}
	\noindent where $f_0, \dots, f_m$ are convex functions, i.e. $f_i (p x_1 + (1-p)x_2) \leq p f_i(x_1) + (1-p) f_i(x_2)$ for any $x_1, x_2$ and $0 \leq p \leq 1$. Let us call the set of $x$ values that satisfies the constraints as the feasible set, denoted as $\mathcal{P}$. We refer to this problem as the primal problem. 

	By rewriting the equality constraint as $h_i(x) = a_i^T x -b_i$ and require $h_i(x) = 0$, we can define the Lagrangian associated with Prob.~\eqref{eq:convex_primal} as
	\begin{equation}
		\mathcal{L}(x, \lambda, \nu) = f_0 (x) + \sum_{i=1}^m \lambda_i f_i(x) + \sum_{i=1}^p \nu_i h_i(x),
	\end{equation}
	where $\lambda_i$ and $\nu_i$ are Lagrange multipliers associated with the problem.

	For each primal problem, there exists an associated dual problem:
	\begin{dual}
	\begin{equation}
	\label{eq:convex_dual}
	\begin{aligned}
		& \text{maximize} 
		& & g(\lambda, \nu) \\
		& \text{subject to}
		& & \nu \geq 0,
	\end{aligned}
	\end{equation}
	\end{dual}
	\noindent where $g(\lambda, \nu) \equiv \inf_{x \in \mathcal{P}} \mathcal{L}(x, \nu, \lambda)$. The significance of this dual problem is as follows. The optimal value of the dual problem (Prob.~\eqref{eq:convex_dual}) $d^{*}$ is, by definition, the best lower bound on the optimal value of the primal problem (Prob.~\eqref{eq:convex_primal}) $p^{*}$. In particular, we have an important relation $d^{*} \leq p^{*}$ called weak duality, which always holds even when the problem is not convex.

	If the gap between $d^{*}$ and $p^{*}$ is 0, then we say that strong duality holds. For convex optimization problems, the strong duality holds if Slater's condition is satisfied: if there exists a point $x \in \mathcal{P}$ such that all the inequality constraints $f_i(x)$ is strictly less than zero and all the equality constraints are satisfied~\citep{Boyd:2004}.

	An important class of convex optimization problems that are often encountered in quantum information processing is the semidefinite program (SDP). Here we define a semidefinite program in the standard form proposed by~\cite{Watrous2009} that is more directly applicable to working with quantum density matrices.

	Let us first define several mathematical terms to help our discussion. Given a complex vector space $\mathcal{X} = \mathbb{C}^n$, we call an $n \times n$ linear operator $X$ Hermitian if $X = X^{\dag}$; let us denote the set of such operators with $\herm(\mathcal{X})$. An operator $X$ is called positive semidefinite if it is Hermitian and all of its eigenvalues are nonnegative. We use the notation $X \succeq 0$ to indicate that $X$ is positive semidefinite. More generally, $Y \succeq X$ indicates that $Y-X \succeq 0$ for Hermitian operators $X$ and $Y$. We also use the notation $X \succ 0$ to indicate that the operator $X$ is positive definite: Hermitian and all its eigenvalues are strictly positive.

	An SDP can be defined using a few parameters:
	\begin{itemize}
		\item $\Phi$ which is a Hermiticity-preserving linear map, and
		\item $A$ and $B$ which are Hermitian operators.
	\end{itemize}

	We define the primal of an SDP problem to be:
	\begin{primal}
	\begin{equation}
	\label{eq:sdp_primal}
	\begin{aligned}
		& \text{minimize}
		& & \braket{A, X} \\
		& \text{subject to}
		& & \Phi(X) \succeq B, \\
		&&& X \succeq 0,
	\end{aligned}
	\end{equation}
	\end{primal}
	\noindent where the inner product $\braket{A, B} \equiv \Tr(A^{\dag} B)$. The Lagrange dual problem to the SDP above is
	\begin{dual}
	\begin{equation}
	\label{eq:sdp_dual}
	\begin{aligned}
		& \text{maximize} 
		& & \braket{B, Y} \\
		& \text{subject to}
		& & \Phi^{\dag}(Y) \preceq A, \\
		&&& Y \succeq 0.
	\end{aligned}
	\end{equation}
	\end{dual}
	\noindent The mapping $\Phi^{\dag}$ is a unique mapping that can be defined from the following equation:
	\begin{equation}
		\braket{Y, \Phi(X)} = \braket{\Phi^{\dag}(Y), X}.
	\end{equation}

	An operator $X \succeq 0$ satisfying $\Phi(X) \succeq B$ is called primal feasible, and an operator $Y \succeq 0$ satisfying $\Phi^{\dag}(Y) \preceq A$ is called dual feasible. We denote the sets of primal and dual feasible operators with $\mathcal{P}$ and $\mathcal{D}$, respectively.

	Weak duality holds for any SDP, that is the optimal value of the primal problem $p^*$ and the optimal value of the dual problem $d^*$ are always related by $p^* \geq d^*$. Strong duality holds if either of the following Slater's conditions are satisfied~\citep{Watrous2011a}:
	\begin{enumerate}
	 	\item If $\mathcal{P}$ is nonempty and there exists an operator $Y \succ 0$ such that $\Phi^{\dag}(Y) \prec A$, then there exists a primal feasible operator $X$ for which $\braket{A, X} = p^*$ and $p^* = d^*$,
	 	\item If $\mathcal{D}$ is nonempty and there exists an operator $X \succ 0$ such that $\Phi(X) \succ B$, then there exists a dual feasible operator $Y$ for which $\braket{B, Y} = d^*$ and $p^* = d^*$.
	 \end{enumerate} 

\section{General framework for postselection}
	\label{app:general_framework_for_postselection}

	Typically in a QKD protocol, Alice and Bob make public announcements during sifting in which they postselect for certain basis choices. During the quantum transmission stage, Alice and Bob measure their respective POVMs $\set{M^{(a_b, a_v)}_A}$ and $\set{M^{(b_b, b_v)}_B}$~\footnote{We have changed the labels $a$ and $b$ from the definition in the Results section into tuples $(a_b, a_v)$ and $(b_b, b_v)$, where the subscript $b$ denotes the basis (or any other possible announcement) of the measurement and the subscript $v$ denotes the classical outcome of the measurement.}. We follow the approach previously established in several papers (see e.g.~\cite{Winick2017}) and introduce extra classical registers $A_b$ and $A_v$ for Alice and $B_b$ and $B_v$ for Bob to store the basis and value information respectively. The idea is that Alice and Bob will keep most of the registers $A_v$ and $B_v$ to themselves (releasing some information for error correction), while they will eventually make public the registers $A_b$ and $B_b$. Alice's measurements and announcements can be described by a quantum channel with Kraus operators
	\begin{equation}
		K_A^{a_b} = \sum_{a_v} \sqrt{M^{(a_b, a_v)}_A} \otimes \ket{a_b}_{A_b} \ket{a_v}_{A_v},
	\end{equation}
	and, similarly, Bob's can be described by another set of Kraus operators
	\begin{equation}
		K_B^{b_b} = \sum_{b_v} \sqrt{M^{(b_b, b_v)}_B} \otimes \ket{b_b}_{B_b} \ket{b_v}_{B_v}.
	\end{equation}
	The quantum state after the announcement can be obtained through a completely positive trace-preserving (CPTP) map $\mathcal{A}$ involving the Kraus operators above, i.e.
	\begin{equation}
	\begin{aligned}
		\rho_{AA_vA_bBB_vB_b}^{\text{ann}} &= \mathcal{A}(\rho_{AB}) \\
		&= \sum_{a_b,b_b} (K_A^{a_b} \otimes K_B^{b_b}) \rho_{AB} (K_A^{a_b} \otimes K_B^{b_b})^{\dag}.
	\end{aligned}
	\end{equation}

	Next, Alice and Bob will postselect/sift to decide which parts of the data they will keep. Let $\mathcal{B}_{\text{keep}}$ be the set of basis measurements they will keep. For example, they may choose to keep only measurements in the same basis. Then, we can define a projector:
	\begin{equation}
	 	\Pi = \sum_{(a_b,b_b) \in \mathcal{B}_{\text{keep}} } \ket{a_b}\bra{a_b}_{A_b} \otimes \ket{b_b}\bra{b_b}_{B_b}.
	\end{equation} 
	The postselected state can then be modeled by using this projector:
	\begin{equation}
		\rho_{AA_vA_bBB_vB_b}^{\text{sift}} = \frac{\Pi \rho_{AA_vA_bBB_vB_b}^{\text{ann}} \Pi}{\ppass},
	\end{equation}
	with $\ppass = \Tr(\Pi \rho_{AA_vA_bBB_vB_b}^{\text{ann}})$ is the probability of passing the postselection filter~\footnote{$\ppass$ is not included in the optimization as it can be immediately measured from the experiment.}. We therefore can define a completely positive trace non-increasing map $\mathcal{S}$ for sifting, such that:
	\begin{equation}
		\mathcal{S}(\rho_{AB}) = \Pi \mathcal{A}(\rho_{AB}) \Pi = \ppass \; \rho_{AA_vA_bBB_vB_b}^{\text{sift}}.
	\end{equation}

	Following the derivation in~\cite{Coles2012}, we define another isometry $V_{Z_A} = \sum_j \ket{j}_{Z_A} \otimes Z_{A_v}^j$ to store the raw key information in the register $Z_A$. Applying this isometry to $\rho^{\text{sift}}$ gives us:
	\begin{equation}
		\tilde{\rho}_{Z_A AA_vA_bBB_vB_b}^{\text{sift}} = V_{Z_A} \rho_{AA_vA_bBB_vB_b}^{\text{sift}} V_{Z_A}^{\dag}.
	\end{equation}

We then take Eve's system to purify the state $\rho^{\text{sift}}$ (and $\tilde{\rho}^{\text{sift}}$) such that she's able to obtain the maximum amount of information from not only $A$ and $B$, but also $A_v$, $A_b$, $B_v$, and $B_b$. The key-rate problems that are solved by using von Neumann entropy are therefore modified from $\min_{\rho_{AB}} H(Z_A|E)_{\rho}$ to $\min_{\rho_{AB}} \left[ p_{\text{pass}} H(Z_A|E)_{\tilde{\rho}^{\text{sift}}} \right]$. Using similar arguments as outlined in the main manuscript, we obtain:
\begin{equation}
\begin{aligned}
	H(Z_A|E )_{\tilde{\rho}^{\text{sift}}} &= H(\tilde{\rho}^{\text{sift}}_{Z_A E}) - H(\rho^{\text{sift}}_{E}) \\
	&= H(\tilde{\rho}^{\text{sift}}_{A A_v A_b B B_v B_b}) - H(\rho^{\text{sift}}_{A A_v A_b B B_v B_b}) \\
	&= \infdiv{\rho^{\text{sift}}_{A A_v A_b B B_v B_b}}{\tilde{\rho}^{\text{sift}}_{A A_v A_b B B_v B_b}} \\
	&= \frac{1}{p_{\text{pass}}} \infdiv*{\mathcal{S}(\rho_{AB})}{\sum_j Z_{A_v}^j \mathcal{S}(\rho_{AB}) Z_{A_v}^j},
\end{aligned}	
\end{equation}
where the last line has been derived using the property that $\infdiv{c \rho}{c \sigma} = c \infdiv{\rho}{\sigma}$ for any constant $c > 0$. Furthermore, when linearizing this key rate problem to obtain a dual solution, we must update the gradient $\nabla f(\rho)^T$ defined in Eq.~\eqref{eq:grad_f_rho} to:
	\begin{equation}
	\label{eq:grad_f_rho_sifted}
	\begin{aligned}
		\nabla f(\rho_{AB})^T &= \mathcal{S}^{\dag} \left( \log_2 \mathcal{S} (\rho_{AB}) \right) \\ 
		& \qquad - \mathcal{S}^{\dag} \left(\log_2 \sum_j Z_{A_v}^j \mathcal{S}(\rho_{AB}) Z_{A_v}^j \right),
	\end{aligned}
	\end{equation}
	where $\mathcal{S}^{\dag}$ is the adjoint map of $\mathcal{S}$ that can be found from the fact that:
	\begin{equation}
		\Tr[\mathcal{S}(\rho) \sigma] = \Tr[\rho \mathcal{S}^{\dag}(\sigma)].
	\end{equation}
	Explicitly, since
	\begin{equation}
		\mathcal{S}(\rho) = \sum_{a_b,b_b} \Pi (K_A^{a_b} \otimes K_B^{b_b}) \rho (K_A^{a_b} \otimes K_B^{b_b})^{\dag} \Pi,
	\end{equation}
	then the adjoint map is
	\begin{equation}
		\mathcal{S}^{\dag}(\rho^{\text{sift}}) = \sum_{a_b,b_b} (K_A^{a_b} \otimes K_B^{b_b})^{\dag} \Pi \rho^{\text{sift}} \Pi (K_A^{a_b} \otimes K_B^{b_b}).
	\end{equation}

Similarly, we modify the key-rate problems that are solved using min-entropy from $\min_{\rho_{AB}} H_{\min}(Z_A|E)_{\rho}$ to $\min_{\rho_{AB}} \left[ p_{\text{pass}} H_{\min}(Z_A|E)_{\tilde{\rho}^{\text{sift}}} \right]$. Using similar arguments to the ones in the main manuscript, we obtain:
\begin{equation}
	\begin{aligned}
	&H_{\min}(Z_A|E)_{\tilde{\rho}^{\text{sift}}} \\
	&= -H_{\max}(Z_A|A A_v A_b B B_v B_b) \\
	&= -\log_2 \max_{\sigma_{AB}} F\left(\tilde{\rho}^{\text{sift}}_{Z_A A A_v A_b B B_v B_b}, \id_{Z_A} \otimes \; \sigma_{A A_v A_b B B_v B_b}\right) \\
	&= -\log_2 \max_{\sigma_{AB}} F\left(\mathcal{S}(\rho_{AB})/p_{\text{pass}}, \sum_j Z_{A_v}^j \sigma_{A A_v A_b B B_v B_b} Z_{A_v}^j\right) \\
	&= -\log_2 \max_{\sigma_{AB}} F\left(\mathcal{S}(\rho_{AB}), \sum_j Z_{A_v}^j \sigma_{A A_v A_b B B_v B_b} Z_{A_v}^j\right) \\
	& \qquad + \log_2 p_{\text{pass}},
	\end{aligned}
\end{equation}
	where the last line is found by noticing that $F(c \sigma, \rho) = F(\sigma, c \rho) = c F(\sigma, \rho)$ for any constant $c > 0$.

	When solving the numerical key rate problems, one can use the fine-grained constraints $\Gamma_i = M^{(a_b,a_v)}_A \otimes M^{(b_b,b_v)}_B$ for all values of $\{(a_b,a_v), (b_b, b_v)\}$. However, with measurement bases being well-defined in this framework, we can find general coarse-grained constraints where Alice and Bob obtain the same or different classical measurement values within each basis they postselect for. In other words, for $(a_b, b_b) \in \mathcal{B}_{\text{keep}}$, we can find such constraints:
	\begin{equation}
		\Gamma_{(a_b, b_b)}^{(=)} = \sum_{a_v = b_v} M^{(a_b,a_v)}_A \otimes M^{(b_b,b_v)}_B,
	\end{equation}
	and
	\begin{equation}
		\Gamma_{(a_b, b_b)}^{(\neq)} = \sum_{a_v \neq b_v} M^{(a_b,a_v)}_A \otimes M^{(b_b,b_v)}_B.
	\end{equation}
	Although generally using coarse-grained constraints leads to lower key rates, the key rates obtained in the more symmetric protocols we consider show no noticeable difference when compared to the key rates obtained using fine-grained constraints. In fact, when using only the coarse-grained constraints, the amount of information that must be communicated classically between Alice and Bob is reduced.

	Notice that this framework for postselection generally increases the size of the computation as it dilates the Hilbert space needed from just $AB$ to include extra registers $A_vA_bB_vB_b$. In particular, the SDP for solving the approximate problem involving the quantum relative entropy can become too large for a typical personal computer to handle. We are, however, able to simplify the postselection procedure for some protocols without needing to introduce many extra registers.

\section{Simplification to the postselection procedure}
\label{app:simplification_postselection}

Whenever postselection is performed---even for the simplest postselected BB84 protocol---direct calculation of the approximate SDP for quantum relative entropy can become a bottleneck (see SDP~\eqref{eq:key_rate_nonasymptotic_appx_primal}). For a density matrix $\rho$ of size $n \times n$, solving the approximate SDP problem at order $(m,k)$ involves solving for a total of $k$ blocks of $2n^2 \times 2n^2$ positive semidefinite matrices and $m$ blocks of $(n^2+1) \times (n^2+1)$ positive semidefinite matrices. For the postselected BB84 protocol, $n = (\dim A) \times (\dim A_v) \times (\dim A_b) \times (\dim B) \times (\dim B_v) \times (\dim B_b) = 64$ which results in an extremely large SDP to solve. We see a slowdown in the SDP for the fidelity function (see SDP~\eqref{eq:keyrate_Hmin_dual_problem}), but the problem is still small enough for our numerical solvers to find a solution within a reasonable amount of time.

We outline simplification steps that allows us to dramatically increase the calculation speed for the examples that we explore in this manuscript.

\subsection{BB84}
\label{sec:bb84}

The Kraus operators related to Alice and Bob's announcements are:
\begin{equation}
\label{eq:kraus_AB}
\begin{aligned}
	K^0_A &= \sqrt{p_Z} \left[ \kb{0}_A \otimes \ket{0}_{A_b} \otimes \ket{0}_{A_v} \right. \\
	& \qquad \left. + \kb{1}_A \otimes \ket{0}_{A_b} \otimes \ket{1}_{A_v} \right] \\
	K^1_A &= \sqrt{p_X} \left[ \kb{+}_A \otimes \ket{1}_{A_b} \otimes \ket{0}_{A_v} \right. \\
	& \qquad \left. + \kb{-}_A \otimes \ket{1}_{A_b} \otimes \ket{1}_{A_v} \right] \\
	K^0_B &= \sqrt{p_Z} \left[ \kb{0}_B \otimes \ket{0}_{B_b} + \kb{1}_B \otimes \ket{0}_{B_b}\right] \\
	&= \sqrt{p_Z} \id_{B} \otimes \ket{0}_{B_b} \\
	K^1_B &= \sqrt{p_X} \left[ \kb{+}_B \otimes \ket{1}_{B_b} + \kb{-}_B \otimes \ket{1}_{B_b}\right] \\
	&= \sqrt{p_X} \id_{B} \otimes \ket{1}_{B_b},
\end{aligned}
\end{equation}
where there is no need to keep track of $B_v$ because the key map is only applied to Alice's value register.

With the postselection operator:
\begin{equation}
	\Pi = \kb{0}_{A_b} \otimes \kb{0}_{B_b} + \kb{1}_{A_b} \otimes \kb{1}_{B_b},
\end{equation}
we can fully define the action of the sifting map:
\begin{equation}
\label{eq:sift_blocked}
\begin{aligned}
	\mathcal{S}(\rho_{AB}) &= \Pi \left[ \sum_{a_b, b_b \in \set{0,1}} (K_A^{a_b} K_B^{b_b}) \rho_{AB} (K_A^{a_b} K_B^{b_b})^{\dag} \right] \Pi \\
	&= \left( K_A^0 K_B^0 \right) \rho_{AB} \left( K_A^0 K_B^0 \right)^{\dag} \\
	& \qquad + \left( K_A^1 K_B^1 \right) \rho_{AB} \left( K_A^1 K_B^1 \right)^{\dag} \\
	&= p_Z^2 \kb{0}_{A_b} \otimes \kb{0}_{B_b} \otimes \rho_{z} \\
	& \qquad + p_X^2 \kb{1}_{A_b} \otimes \kb{1}_{B_b} \otimes \rho_{x}.
\end{aligned}
\end{equation}
Notice that the expression above is block diagonal so we can write:
\begin{equation}
	\mathcal{S}(\rho_{AB}) = p_Z^2 \rho_z \oplus p_X^2 \rho_x.
\end{equation}

\subsubsection{The key rate SDP}
\label{sub:the_key_rate_sdp_bb84}

The goal here is to simplify the key rate problem (in the von Neumann entropy formalism) by separating out the two blocks and thereby proving that
\begin{equation}
\label{eq:bb84_simplified}
\begin{aligned}
	\infdiv*{\mathcal{S}(\rho_{AB})}{\mathcal{Z}_{A_v}^{Z} (\mathcal{S}(\rho_{AB}))} &= p_Z^2 \infdiv{\rho_{AB}}{\mathcal{Z}_{A}^{Z} (\rho_{AB})} \\
	& + p_X^2 \infdiv{\rho_{AB}}{\mathcal{Z}_{A}^{X} (\rho_{AB})},
\end{aligned}
\end{equation}
where we have defined the notation $\mathcal{Z}_{\mathcal{H}}^{\mathcal{B}}$ which is the pinching channel in the $\mathcal{B}$-basis acting on Hilbert space $\mathcal{H}$. In particular:
\begin{equation}
	\mathcal{Z}_{A}^{Z} (\rho_{AB})) = \sum_{j\in\{0,1\}} \kb{j}_A \rho_{AB} \kb{j}_A,
\end{equation}
and
\begin{equation}
	\mathcal{Z}_{A}^{X} (\rho_{AB})) = \sum_{j\in\{+,-\}} \kb{j}_A \rho_{AB} \kb{j}_A.
\end{equation}

First, we state the following useful lemma:
\begin{lemma}
\label{lemma:block_diagonal}
Given $M = A \oplus B$ and $M' = A' \oplus B'$, where $\dim A = \dim A'$ and $\dim B = \dim B'$, we have
\begin{equation}
	\Tr[M \log_2 M'] = \Tr[A \log_2 A'] + \Tr[B \log_2 B'].
\end{equation}
\end{lemma}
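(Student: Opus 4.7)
The plan is to reduce the lemma to two elementary facts about direct sums: first, that any analytic function of a block-diagonal matrix is itself block-diagonal with the function applied to each block; and second, that the trace of a block-diagonal matrix is the sum of the traces of its blocks. With these two facts in hand the result follows by direct computation.

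First I would argue that $\log_2 M' = \log_2 A' \oplus \log_2 B'$. The cleanest way is via the spectral decomposition: writing $A' = \sum_i \lambda_i P_i$ and $B' = \sum_j \mu_j Q_j$ (assuming $A', B' \succeq 0$ so that the logarithm is well-defined in the natural sense used in the paper), the spectrum of $M'$ is the union of the spectra, with projectors $P_i \oplus 0$ and $0 \oplus Q_j$ respectively. Applying $\log_2$ to each eigenvalue separately yields
\begin{equation}
\log_2 M' = \Bigl(\sum_i (\log_2 \lambda_i) P_i\Bigr) \oplus \Bigl(\sum_j (\log_2 \mu_j) Q_j\Bigr) = \log_2 A' \oplus \log_2 B'.
\end{equation}
Alternatively one can invoke the power series definition of $\log_2$ and use the fact that $(A'\oplus B')^k = (A')^k \oplus (B')^k$ for every integer $k$, so that the series splits blockwise.

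Next I would use the fact that direct sums respect multiplication, namely $(A\oplus B)(X\oplus Y) = (AX) \oplus (BY)$ whenever the block sizes match. Applying this with $X = \log_2 A'$ and $Y = \log_2 B'$ gives
\begin{equation}
M \log_2 M' = (A \log_2 A') \oplus (B \log_2 B').
\end{equation}
Taking the trace of a block-diagonal matrix picks out the sum of the diagonal entries of each block, so
\begin{equation}
\Tr[M \log_2 M'] = \Tr[A \log_2 A'] + \Tr[B \log_2 B'],
\end{equation}
which is the claim.

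There is no real obstacle here; the only subtlety is to ensure that $\log_2 M'$ is defined on the support of $M$, which is automatic under the implicit assumption that $A'$ and $B'$ are positive (semi)definite, consistent with how this lemma is invoked for $\mathcal{S}(\rho_{AB})$ and the pinched state in Eq.~\eqref{eq:bb84_simplified}. The argument extends verbatim to any finite number of blocks, which is what one needs in practice for coarse-grained postselection protocols.
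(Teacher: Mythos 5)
Your proof is correct and follows essentially the same route as the paper: establish that $\log_2 M' = \log_2 A' \oplus \log_2 B'$, use blockwise multiplication to get $M \log_2 M' = (A\log_2 A') \oplus (B\log_2 B')$, and take the trace. The only difference is that you explicitly justify the blockwise logarithm via spectral decomposition, a step the paper simply asserts, so your writeup is a slightly more detailed version of the same argument.
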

\begin{proof}
The proof can be obtained by direct computation. Since we have
\begin{equation}
	M = \begin{pmatrix}
		A & 0\\
		0 & B
	\end{pmatrix}, \:
	M' = \begin{pmatrix}
		A' & 0 \\
		0 & B'
	\end{pmatrix},
\end{equation}
then the term
\begin{equation}
	\log_2 M' = 
	\begin{pmatrix}
		\log_2 A' & 0 \\
		0 & \log_2 B'
	\end{pmatrix}
\end{equation}.

Therefore,
\begin{equation}
	M \log_2 M' = \begin{pmatrix}
		A \log_2 A' & 0 \\
		0 & B \log_2 B'
	\end{pmatrix},
\end{equation}
and taking the trace of this matrix completes the proof.
\end{proof}

Applying Lemma~\ref{lemma:block_diagonal} to $\infdiv*{\mathcal{S}(\rho_{AB})}{\mathcal{Z}_{A_v}^{Z} (\mathcal{S}(\rho_{AB}))}$, with the identification $M = \mathcal{S}(\rho_{AB})$ and $M' = \mathcal{Z}_{A_v}^{Z} (\mathcal{S}(\rho_{AB}))$, gives us
\begin{equation}
\label{eq:bb84_simplified_intermediate}
\begin{aligned}
	\infdiv*{\mathcal{S}(\rho_{AB})}{\mathcal{Z}_{A_v}^{Z} (\mathcal{S}(\rho_{AB}))} &= p_Z^2  \infdiv*{\rho_z}{\mathcal{Z}_{A_v}^{Z} (\rho_z)} \\
	& + p_X^2 \infdiv*{\rho_x}{\mathcal{Z}_{A_v}^{Z} (\rho_x)}.
\end{aligned}
\end{equation}

To obtain the simplified Eq.~\eqref{eq:bb84_simplified}, we show the following:
\begin{equation}
\begin{aligned}
	\infdiv*{\rho_z}{\mathcal{Z}_{A_v}^{Z} (\rho_z)} = \infdiv*{\rho_{AB}}{\mathcal{Z}_{A}^{Z} (\rho_{AB})}, \\
	\infdiv*{\rho_x}{\mathcal{Z}_{A_v}^{Z} (\rho_x)} = \infdiv*{\rho_{AB}}{\mathcal{Z}_{A}^{X} (\rho_{AB})}.
\end{aligned}
\end{equation}
The identification is straightforward. Let us write down $\rho_z$ in the basis of $A_v$ and $A$:
\begin{equation}
	\rho_z = 
	\begin{blockarray}{ccc}
 & \bra{0}_{A_v} \bra{0}_A & \bra{1}_{A_v} \bra{1}_A \\
\begin{block}{c(cc)}
  \ket{0}_{A_v} \ket{0}_A & \braket{0|\rho_{AB}|0}_A & \braket{0|\rho_{AB}|1}_A \\
  \ket{1}_{A_v} \ket{1}_A &\braket{1|\rho_{AB}|0}_A  & \braket{1|\rho_{AB}|1}_A \\
\end{block}
\end{blockarray}\;\; ,
\end{equation}
which is equivalent to $\rho_{AB}$ in the $Z$-basis:
\begin{equation}
	\rho_{AB} = 
	\begin{blockarray}{ccc}
 & \bra{0}_A & \bra{1}_A \\
\begin{block}{c(cc)}
  \ket{0}_A & \braket{0|\rho_{AB}|0}_A & \braket{0|\rho_{AB}|1}_A \\
  \ket{1}_A &\braket{1|\rho_{AB}|0}_A  & \braket{1|\rho_{AB}|1}_A \\
\end{block}
\end{blockarray}\;\; .
\end{equation}
Now, let us write $\mathcal{Z}_{A_v}^{Z} (\rho_z)$ in the basis of $A_v$ and $A$:
\begin{equation}
	\mathcal{Z}_{A_v}^{Z} (\rho_z) = 
	\begin{blockarray}{ccc}
 & \bra{0}_{A_v} \bra{0}_A & \bra{1}_{A_v} \bra{1}_A \\
\begin{block}{c(cc)}
  \ket{0}_{A_v} \ket{0}_A & \braket{0|\rho_{AB}|0}_A & 0 \\
  \ket{1}_{A_v} \ket{1}_A & 0 & \braket{1|\rho_{AB}|1}_A \\
\end{block}
\end{blockarray}\;\; ,
\end{equation}
which is equivalent to $\mathcal{Z}_{A}^{Z} (\rho_{AB})$ in the $Z$-basis:
\begin{equation}
	\mathcal{Z}_{A}^{Z} (\rho_{AB}) = 
	\begin{blockarray}{ccc}
 & \bra{0}_A & \bra{1}_A \\
\begin{block}{c(cc)}
  \ket{0}_A & \braket{0|\rho_{AB}|0}_A & 0 \\
  \ket{1}_A & 0 & \braket{1|\rho_{AB}|1}_A \\
\end{block}
\end{blockarray}\;\; ,
\end{equation}
that shows $\infdiv*{\rho_z}{\mathcal{Z}_{A_v}^{Z} (\rho_z)} = \infdiv*{\rho_{AB}}{\mathcal{Z}_{A}^{Z} (\rho_{AB})}$.

Similarly, $\rho_x$ in the $Z$-basis of $A_v$ and the $X$-basis of $A$ is:
\begin{equation}
	\rho_x = 
	\begin{blockarray}{ccc}
 & \bra{0}_{A_v} \bra{+}_A & \bra{1}_{A_v} \bra{-}_A \\
\begin{block}{c(cc)}
  \ket{0}_{A_v} \ket{+}_A & \braket{+|\rho_{AB}|+}_A & \braket{+|\rho_{AB}|-}_A \\
  \ket{1}_{A_v} \ket{-}_A &\braket{-|\rho_{AB}|+}_A  & \braket{-|\rho_{AB}|-}_A \\
\end{block}
\end{blockarray}\;\; ,
\end{equation}
which is equivalent to $\rho_{AB}$ in the $X$-basis:
\begin{equation}
	\rho_{AB} = 
	\begin{blockarray}{ccc}
 & \bra{+}_A & \bra{-}_A \\
\begin{block}{c(cc)}
  \ket{+}_A & \braket{+|\rho_{AB}|+}_A & \braket{+|\rho_{AB}|-}_A \\
  \ket{-}_A &\braket{-|\rho_{AB}|+}_A  & \braket{-|\rho_{AB}|-}_A \\
\end{block}
\end{blockarray}\;\; .
\end{equation}
Furthermore, $\mathcal{Z}_{A_v}^{Z} (\rho_x)$ in the $Z$-basis of $A_v$ and the $X$-basis of $A$ is:
\begin{equation}
	\mathcal{Z}_{A_v}^{Z} (\rho_x) = 
	\begin{blockarray}{ccc}
 & \bra{0}_{A_v} \bra{+}_A & \bra{1}_{A_v} \bra{-}_A \\
\begin{block}{c(cc)}
  \ket{0}_{A_v} \ket{+}_A & \braket{+|\rho_{AB}|+}_A & 0 \\
  \ket{1}_{A_v} \ket{-}_A & 0  & \braket{-|\rho_{AB}|-}_A \\
\end{block}
\end{blockarray}\;\; ,
\end{equation}
which is equivalent to $\mathcal{Z}_{A}^{X} (\rho_{AB})$ in the $X$-basis:
\begin{equation}
	\mathcal{Z}_{A}^{X} (\rho_{AB}) = 
	\begin{blockarray}{ccc}
 & \bra{+}_A &  \bra{-}_A \\
\begin{block}{c(cc)}
  \ket{+}_A & \braket{+|\rho_{AB}|+}_A & 0 \\
  \ket{-}_A & 0  & \braket{-|\rho_{AB}|-}_A \\
\end{block}
\end{blockarray}\;\; ,
\end{equation}
that gives us $\infdiv*{\rho_x}{\mathcal{Z}_{A_v}^{Z} (\rho_x)} = \infdiv*{\rho_{AB}}{\mathcal{Z}_{A}^{X} (\rho_{AB})}$.

Solving the SDP problem:
\begin{equation}
	\min_{\rho_{AB}} \left[ \infdiv*{\mathcal{S}(\rho_{AB})}{\mathcal{Z}_{A_v}^{Z} (\mathcal{S}(\rho_{AB}))} \right]
\end{equation}
is therefore equivalent to solving:
\begin{equation}
\begin{aligned}
	\min_{\rho_{AB}} \left[p_Z^2 \infdiv*{\rho_{AB}}{\mathcal{Z}_{A}^{Z} (\rho_{AB})} + p_X^2 \infdiv*{\rho_{AB}}{\mathcal{Z}_{A}^{X} (\rho_{AB})} \right],
\end{aligned}
\end{equation}
which requires no dilation in the Hilbert space at all.

\subsubsection{The linearized dual SDP}
\label{sub:the_linearized_dual_sdp_1}

We also wish to show that the gradient $\nabla f(\rho)^T$, defined in Eq.~\eqref{eq:grad_f_rho_sifted}, has a similar simple structure:
\begin{equation}
\label{eq:grad_rho_sifted_app}
\begin{aligned}
\nabla f(\rho_{AB})^T &= \mathcal{S}^{\dag} \left( \log_2 \mathcal{S} (\rho_{AB}) \right) - \mathcal{S}^{\dag} \left(\log_2 \mathcal{Z}^Z_{A_v}(\mathcal{S}(\rho_{AB}))\right) \\
		&= p_Z^2 \left[\log_2 \rho_{AB} - \log_2\left(\mathcal{Z}_A^Z(\rho_{AB})\right) \right] \\
		& \quad + p_X^2 \left[\log_2 \rho_{AB} - \log_2\left(\mathcal{Z}_A^X(\rho_{AB})\right) \right].
\end{aligned}
\end{equation}

We start from the result of Eq.~\eqref{eq:sift_blocked}:
\begin{equation}
	\mathcal{S}(\rho_{AB}) = \kb{00}_{A_b B_b} \otimes p_Z^2 \rho_{z} + \kb{11}_{A_b B_b} \otimes p_X^2 \rho_{x},
\end{equation}
thus
\begin{equation}
\label{eq:first_term}
\begin{aligned}
	\log_2\left( \mathcal{S}(\rho_{AB}) \right) &= \kb{00}_{A_b B_b} \otimes \log_2 (p_Z^2 \rho_{z}) \\
	& \quad + \kb{11}_{A_b B_b} \otimes \log_2(p_X^2 \rho_{x}),
\end{aligned}
\end{equation}
which gives us the first term before applying the adjoint map $\mathcal{S}^{\dag}$.
We can also obtain the second term (before the adjoint map):
\begin{equation}
\label{eq:second_term}
\begin{aligned}
	& \log_2\left( \mathcal{Z}^Z_{A_v}(\mathcal{S}(\rho_{AB})) \right) = \kb{00}_{A_b B_b}  \otimes \log_2 \left(p_Z^2 \mathcal{Z}^Z_{A_v}(\rho_{z})\right) \\
	& + \kb{11}_{A_b B_b} \otimes \log_2 \left(p_X^2 \mathcal{Z}^Z_{A_v}(\rho_{x})\right).
\end{aligned}
\end{equation}

Now, let us apply the adjoint map to the first term (Eq.~\eqref{eq:first_term}):
\begin{equation}
\begin{aligned}
	\mathcal{S}^{\dag} (\log_2\left( \mathcal{S}(\rho_{AB}) \right)) &= (\bar{K}^0_A \bar{K}^0_B)^{\dag} \log_2 (p_Z^2 \rho_{z}) (\bar{K}^0_A \bar{K}^0_B) \\
	& + (\bar{K}^1_A \bar{K}^1_B)^{\dag} \log_2(p_X^2 \rho_{x}) (\bar{K}^1_A \bar{K}^1_B),
\end{aligned}
\end{equation}
where we have defined $K^i_X = \bar{K}^i_X \otimes \ket{i}_{X_b}$ for $i \in \set{0,1}$ and $X \in \set{A,B}$. Explicitly,
\begin{equation}
\label{eq:K0_mapping}
	\bar{K}^0_A \bar{K}^0_A = p_Z \sum_{i,j \in \set{0,1}} \ket{i_Z}_{A_v} \kb{i_Z j_Z}_{AB},
\end{equation}
and
\begin{equation}
\label{eq:K1_mapping}
	\bar{K}^1_A \bar{K}^1_A = p_X \sum_{i,j \in \set{0,1}} \ket{i_Z}_{A_v} \kb{i_X j_X}_{AB},
\end{equation}
where we have used the shorthand: $\ket{0_Z} = \ket{0}$, $\ket{1_Z} = \ket{1}$, $\ket{0_X} = \ket{+}$, $\ket{1_X} = \ket{-}$. In the previous section, we have also shown that:
\begin{enumerate}
	\item  $\rho_z$ in the $Z$-basis of $A_v$ and $A$ is equivalent to $\rho_{AB}$ in the $Z$-basis, and
	\item  $\rho_x$ in the $Z$-basis of $A_v$ and the $X$-basis of $A$ is equivalent to $\rho_{AB}$ in the $X$-basis,
\end{enumerate}
which are exactly the mapping described by the Kraus operators in Eqs.~\eqref{eq:K0_mapping} and~\eqref{eq:K1_mapping}. We therefore have
\begin{equation}
\label{eq:first_term_after}
	\mathcal{S}^{\dag} (\log_2\left( \mathcal{S}(\rho_{AB}) \right)) = p_Z^2 \log_2 (p_Z^2 \rho_{AB}) + p_X^2 \log_2 (p_X^2 \rho_{AB}).
\end{equation}

Now, let us apply the adjoint map to the second term (Eq.~\eqref{eq:second_term}):
\begin{equation}
\begin{aligned}
	&\mathcal{S}^{\dag} \left(\log_2\left( \mathcal{Z}^Z_{A_v}(\mathcal{S}(\rho_{AB})) \right)\right) \\
	&= (\bar{K}^0_A \bar{K}^0_B)^{\dag} \log_2 \left(p_Z^2 \mathcal{Z}^Z_{A_v}(\rho_{z})\right) (\bar{K}^0_A \bar{K}^0_B) \\
	& \quad + (\bar{K}^1_A \bar{K}^1_B)^{\dag} \log_2\left(p_X^2 \mathcal{Z}^Z_{A_v}(\rho_{x})\right) (\bar{K}^1_A \bar{K}^1_B).
\end{aligned}
\end{equation}
Similarly, we have previously shown that:
\begin{enumerate}
	\item  $\mathcal{Z}^Z_{A_v}(\rho_{z})$ in the $Z$-basis of $A_v$ and $A$ is equivalent to $\mathcal{Z}^Z_{A}(\rho_{AB})$ in the $Z$-basis, and
	\item  $\mathcal{Z}^Z_{A_v}(\rho_{x})$ in the $Z$-basis of $A_v$ and the $X$-basis of $A$ is equivalent to $\mathcal{Z}^X_{A}(\rho_{AB})$ in the $X$-basis.
\end{enumerate}
Therefore, 
\begin{equation}
\label{eq:second_term_after}
\begin{aligned}
	\mathcal{S}^{\dag} \left(\log_2\left( \mathcal{Z}^Z_{A_v}(\mathcal{S}(\rho_{AB})) \right)\right) &= p_Z^2 \log_2 \left(p_Z^2 \mathcal{Z}^Z_{A}(\rho_{AB})\right) \\
	&\quad + p_X^2 \log_2 \left(p_X^2 \mathcal{Z}^X_{A}(\rho_{AB})\right).
\end{aligned}
\end{equation}

Combining Eqs.~\eqref{eq:first_term_after} and~\eqref{eq:second_term_after} give us:
\begin{equation}
\begin{aligned}
\nabla f(\rho_{AB})^T &= \mathcal{S}^{\dag} \left( \log_2 \mathcal{S} (\rho_{AB}) \right) - \mathcal{S}^{\dag} \left(\log_2 \mathcal{Z}^Z_{A_v}(\mathcal{S}(\rho_{AB}))\right) \\
		&= p_Z^2 \left[\log_2 (p_Z^2 \rho_{AB}) - \log_2\left(p_Z^2 \mathcal{Z}_A^Z(\rho_{AB})\right) \right] \\
		& \quad + p_X^2 \left[\log_2 (p_X^2 \rho_{AB}) - \log_2\left(p_X^2 \mathcal{Z}_A^X(\rho_{AB})\right) \right] \\
		&= p_Z^2 \left[\log_2 \rho_{AB} - \log_2\left(\mathcal{Z}_A^Z(\rho_{AB})\right) \right] \\
		& \quad + p_X^2 \left[\log_2 \rho_{AB} - \log_2\left(\mathcal{Z}_A^X(\rho_{AB})\right) \right].
\end{aligned}
\end{equation}
To eliminate the factors of $p_Z^2$ and $p_X^2$ inside the logarithms we made use of a couple of properties of matrix logarithm for a density matrix $\rho_{AB}$, which is a positive definite operator: $\rho_{AB} \succ 0$~\footnote{More accurately, a density matrix $\rho$ is a positive semidefinite operator. We can consider the density matrix after a depolarizing channel with small depolarizing error/probability $\epsilon \geq 0$, then the resulting density operator $\rho \succ 0$. Calculating the key rate using this density matrix incurs some errors that can be quantified, see~\citep{Winick2017} for the complete details.}. First, the fact that $\log(A^{-1}) = - \log A$ for any $A \succ 0$. Secondly, we use the fact that $\rho_{AB}$ commutes with $\mathcal{Z}_A^Z(\rho_{AB})^{-1}$ and $\mathcal{Z}_A^X(\rho_{AB})^{-1}$, as the pinched density matrices are diagonal in the basis they are pinched on. This concludes our proof of the simplified form of $\nabla f(\rho)^T$.


\subsection{Prepare-and-measure BB84} 
\label{sec:prepare_and_measure_bb84}

The simplification described here can be used for the generic prepare-and-measure BB84 QKD protocol and for the case with Trojan horse attack. The general procedure is similar to the procedure described in Sec.~\ref{sec:bb84}, so we will only show the important steps.

The Kraus operators related to Alice and Bob's announcements are:
\begin{equation}
\begin{aligned}
	K^0_A &= \kb{0}_A \otimes \ket{0}_{A_b} \otimes \ket{0}_{A_v} + \kb{1}_A \otimes \ket{0}_{A_b} \otimes \ket{1}_{A_v}  \\
	K^1_A &=\kb{2}_A \otimes \ket{1}_{A_b} \otimes \ket{0}_{A_v} + \kb{3}_A \otimes \ket{1}_{A_b} \otimes \ket{1}_{A_v}  \\
	K^0_B &= \sqrt{p_Z} \left[ \kb{0}_B \otimes \ket{0}_{B_b} + \kb{1}_B \otimes \ket{0}_{B_b}\right] \\
	&= \sqrt{p_Z} \id_{B} \otimes \ket{0}_{B_b} \\
	K^1_B &= \sqrt{p_X} \left[ \kb{+}_B \otimes \ket{1}_{B_b} + \kb{-}_B \otimes \ket{1}_{B_b}\right] \\
	&= \sqrt{p_X} \id_{B} \otimes \ket{1}_{B_b},
\end{aligned}
\end{equation}
where there is we did not keep track $B_v$. With the postselection operator $\Pi = \kb{0}_{A_b} \otimes \kb{0}_{B_b} + \kb{1}_{A_b} \otimes \kb{1}_{B_b}$, we obtain:
\begin{equation}
	\mathcal{S}(\rho_{AB}) = \kb{00}_{A_b B_b} p_Z \rho_z + \kb{11}_{A_b B_b} p_X \rho_x.
\end{equation}

\subsubsection{The key rate SDP}
\label{sub:the_key_rate_sdp}

We will show that the key rate problem simplifies as follows:
\begin{equation}
\label{eq:pm_bb84_simplified}
\begin{aligned}
	\infdiv*{\mathcal{S}(\rho_{AB})}{\mathcal{Z}_{A_v}^{Z} (\mathcal{S}(\rho_{AB}))} &= p_Z \infdiv*{\rho_{AB}^{01}}{\mathcal{Z}_{A}^{Z} (\rho_{AB}^{01})} \\
	&+ p_X \infdiv*{\rho_{AB}^{23}}{\mathcal{Z}_{A}^{Z} (\rho_{AB}^{23})},
\end{aligned}
\end{equation}
where
\begin{equation}
	\rho_{AB}^{01} = \sum_{i,j=\{0,1\}} \ket{i}\bra{j}_A \braket{i|\rho_{AB}|j}_A,
\end{equation}
describing the upper-left block of $\rho_{AB}$ in the standard basis of $A$, i.e. in the $\left\{ \ket{0}_A, \ket{1}_A \right\}$ subspace. And,
\begin{equation}
	\rho_{AB}^{23} = \sum_{i,j=\{2,3\}} \ket{i}\bra{j}_A \braket{i|\rho_{AB}|j}_A,
\end{equation}
describing the lower-right block of $\rho_{AB}$ in the standard basis of $A$, i.e. in the $\left\{ \ket{2}_A, \ket{3}_A \right\}$ subspace.

Applying Lemma~\ref{lemma:block_diagonal} to $\infdiv*{\mathcal{S}(\rho_{AB})}{\mathcal{Z}_{A_v}^{Z} (\mathcal{S}(\rho_{AB})}$ gives us:
\begin{equation}
\begin{aligned}
	\infdiv*{\mathcal{S}(\rho_{AB})}{\mathcal{Z}_{A_v}^{Z} (\mathcal{S}(\rho_{AB}))} &= p_Z \infdiv*{\rho_z}{\mathcal{Z}_{A_v}^{Z} (\rho_z)} \\
	& + p_X \infdiv*{\rho_x}{\mathcal{Z}_{A_v}^{Z} (\rho_x)}.
\end{aligned}
\end{equation}
We then write down $\rho_z$ in the basis of $A_v$ and $A$:
\begin{equation}
	\rho_z = \sum_{i,j \in \set{0,1}} \ket{i}_A\ket{i}_{A_v} \bra{j}_A \bra{j}_{A_v} \braket{i| \rho_{AB}|j}_{A},
\end{equation}
which is equivalent to the upper-left block of $\rho_{AB}$. Similarly, applying the pinching channel on $\rho_z$ gives us:
\begin{equation}
	\mathcal{Z}_{A_v}^{Z} (\rho_z) = \sum_{i \in \set{0,1}} \ket{i}_A\ket{i}_{A_v} \bra{i}_A \bra{i}_{A_v} \braket{i| \rho_{AB}|i}_{A}.
\end{equation}
Thus, we conclude that
\begin{equation}
	\infdiv*{\rho_z}{\mathcal{Z}_{A_v}^{Z} (\rho_z)} = \infdiv*{\rho_{AB}^{01}}{\mathcal{Z}_{A}^{Z} (\rho_{AB}^{01})}.
\end{equation}
Next, we can also write $\rho_x$:
\begin{equation}
	\rho_x = \sum_{i,j \in \set{2,3}} \ket{i}_A\ket{i}_{A_v} \bra{j}_A \bra{j}_{A_v} \braket{i| \rho_{AB}|j}_{A},
\end{equation}
which is the lower-right block of $\rho_{AB}$. Applying the pinching channel on $\rho_x $ gives us:
\begin{equation}
	\mathcal{Z}_{A_v}^{Z} (\rho_x) = \sum_{i\in\set{2,3}} \ket{i}_A\ket{i}_{A_v} \bra{i}_A \bra{i}_{A_v} \braket{i| \rho_{AB}|i}_{A}.
\end{equation}
Finally,
\begin{equation}
	\infdiv*{\rho_x}{\mathcal{Z}_{A_v}^{Z} (\rho_x)} = \infdiv*{\rho_{AB}^{23}}{\mathcal{Z}_{A}^{Z} (\rho_{AB}^{23})}.
\end{equation}

\subsubsection{The linearized dual SDP}
\label{sub:the_linearized_dual_sdp_2}

Here we will show that
\begin{equation}
\label{eq:grad_rho_sifted_pm_bb84}
\begin{aligned}
\nabla f(\rho_{AB})^T &= \mathcal{S}^{\dag} \left( \log_2 \mathcal{S} (\rho_{AB}) \right) - \mathcal{S}^{\dag} \left(\log_2 \mathcal{Z}^Z_{A_v}(\mathcal{S}(\rho_{AB}))\right) \\
		&= p_Z \left[\log_2 \rho_{AB}^{01} - \log_2\left(\mathcal{Z}_A^Z(\rho_{AB}^{01})\right) \right] \\
		& \quad + p_X \left[\log_2 \rho_{AB}^{23} - \log_2\left(\mathcal{Z}_A^Z(\rho_{AB}^{23})\right) \right].
\end{aligned}
\end{equation}

The first term before the adjoint map $\mathcal{S}^{\dag}$ is:
\begin{equation}
\begin{aligned}
	\log_2\left( \mathcal{S}(\rho_{AB}) \right) &= \kb{00}_{A_b B_b} \otimes \log_2 (p_Z \rho_{z}) \\
	& + \kb{11}_{A_b B_b} \otimes \log_2(p_X \rho_{x}).
\end{aligned}
\end{equation}
Applying the map:
\begin{equation}
\label{eq:first_term_pm_bb84}
\begin{aligned}
	\mathcal{S}^{\dag} (\log_2\left( \mathcal{S}(\rho_{AB}) \right)) &= (\bar{K}^0_A \bar{K}^0_B)^{\dag} \log_2 (p_Z \rho_{z}) (\bar{K}^0_A \bar{K}^0_B) \\
	& \quad + (\bar{K}^1_A \bar{K}^1_B)^{\dag} \log_2(p_X \rho_{x}) (\bar{K}^1_A \bar{K}^1_B) \\
	&= p_Z \log_2 (p_Z \rho_{AB}^{01}) + p_X \log_2 (p_X \rho_{AB}^{23}),
\end{aligned}
\end{equation}
as the Kraus operators are
\begin{equation}
\label{eq:K0_mapping_pm_bb84}
	\bar{K}^0_A \bar{K}^0_A = \sqrt{p_Z} \sum_{i,j \in \set{0,1}} \ket{i_Z}_{A_v} \kb{i_Z j_Z}_{AB},
\end{equation}
and
\begin{equation}
\label{eq:K1_mapping_pm_bb84}
	\bar{K}^1_A \bar{K}^1_A = \sqrt{p_X} \sum_{i\in \set{2,3}} \sum_{j \in \set{0,1}}\ket{i_Z}_{A_v} \kb{i_Z j_X}_{AB}.
\end{equation}

Similarly, the second term before the adjoint map is:
\begin{equation}
\begin{aligned}
	\log_2\left( \mathcal{Z}^Z_{A_v}(\mathcal{S}(\rho_{AB})) \right) &= \kb{00}_{A_b B_b} \otimes \log_2\left(p_Z \mathcal{Z}^Z_{A_v}(\rho_{z}) \right) \\
	& + \kb{11}_{A_b B_b} \otimes \log_2\left(p_X \mathcal{Z}^Z_{A_v}(\rho_{x}) \right),
\end{aligned}
\end{equation}
and applying the map gives us
\begin{equation}
\label{eq:second_term_pm_bb84}
\begin{aligned}
	&\mathcal{S}^{\dag} \left(\log_2\left( \mathcal{Z}^Z_{A_v}(\mathcal{S}(\rho_{AB})) \right)\right) \\
	&= (\bar{K}^0_A \bar{K}^0_B)^{\dag} \log_2 \left(p_Z \mathcal{Z}^Z_{A_v}(\rho_{z})\right) (\bar{K}^0_A \bar{K}^0_B) \\
	& \quad + (\bar{K}^1_A \bar{K}^1_B)^{\dag} \log_2\left(p_X \mathcal{Z}^Z_{A_v}(\rho_{x})\right) (\bar{K}^1_A \bar{K}^1_B) \\
	&= p_Z \log_2 \left(p_Z \mathcal{Z}^Z_{A}(\rho_{AB}^{01})\right) + p_X \log_2\left(p_X \mathcal{Z}^Z_{A}(\rho_{AB}^{23})\right).
\end{aligned}
\end{equation}

Combining Eqs.~\eqref{eq:first_term_pm_bb84} and~\eqref{eq:second_term_pm_bb84}, and then eliminating the probability terms inside the logarithms, we obtain Eq.~\eqref{eq:grad_rho_sifted_pm_bb84}.

\subsection{B92}
In analyzing the B92 protocol, we consider a simplified postselection framework introduced in~\citep{Coles2016}. Let $\mathcal{S}$ be the completely positive linear map corresponding to postselection that is only given by a single Kraus operator $S$, i.e. $\mathcal{S}(\rho) = S \rho S^{\dag}$ such that $S^{\dag} S = \id$. In this case, we can consider a new key rate problem:
\begin{equation}
	H(Z_A | E) = \frac{1}{\ppass} \infdiv*{\mathcal{S}(\rho_{AB})}{\sum_j Z_A^j \mathcal{S}(\rho_{AB}) Z_A^j}.
\end{equation}
In this case, let us define $\tilde{\rho}_{AB} \equiv \mathcal{S}(\rho_{AB})$ such that we can translate the optimization problem to one that involves $\tilde{\rho}_{AB}$ instead of $\rho_{AB}$, i.e.
\begin{equation}
\begin{aligned}
	\min_{\rho_{AB}} H(Z_A | E) &= \min_{\rho_{AB}} \frac{1}{\ppass} \infdiv*{\mathcal{S}(\rho_{AB})}{\sum_j Z_A^j \mathcal{S}(\rho_{AB}) Z_A^j} \\
	&\equiv \min_{\tilde{\rho}_{AB}} \frac{1}{\ppass} \infdiv*{\tilde{\rho}_{AB}}{\sum_j Z_A^j \tilde{\rho}_{AB} Z_A^j},
\end{aligned}
\end{equation}
which can be solved using the SDP formulation without postselection. Similarly, in the min-entropy formulation, we have
\begin{equation}
\begin{aligned}
	&\min_{\rho_{AB}} H_{\min}(Z_A | E) \\
	&\qquad =-\log_2 \max_{\rho_{AB}} \max_{\sigma_{AB}} F\left(\mathcal{S}(\rho_{AB}), \sum_j Z^j_A \sigma_{AB} Z^j_A \right) \\
	&\qquad\qquad + \log_2 \ppass \\
	&\qquad \equiv -\log_2 \max_{\tilde{\rho}_{AB}} \max_{\sigma_{AB}} F\left(\tilde{\rho}_{AB}, \sum_j Z^j_A \sigma_{AB} Z^j_A \right) \\
	&\qquad\qquad + \log_2 \ppass.
\end{aligned}
\end{equation}
The constraints for $\tilde{\rho}_{AB}$ can be computed from:
\begin{equation}
\begin{aligned}
	\Tr(\rho_{AB} \Gamma_i) &= \Tr(S^{\dag} S \rho_{AB} S^{\dag} S \Gamma_i) \\
		&= \Tr(\tilde{\rho}_{AB} S \Gamma_i S^{\dag}) \\
		&= \Tr(\tilde{\rho}_{AB} \tilde{\Gamma}_i),
\end{aligned}
\end{equation}
where $\tilde{\Gamma}_i \equiv S \Gamma_i S^{\dag}$.

The Kraus operator for the postselection procedure is:
\begin{equation}
	S = \id_A \otimes \sqrt{\frac{1}{2} \left( \kb{\bar{\phi_0}}_B + \kb{\bar{\phi_1}}_B \right)},
\end{equation}
and the key maps are:
\begin{equation}
	Z_A = \{\kb{0}_A, \kb{1}_A \}.
\end{equation}

\section{Other Examples}
\label{sec:other_examples}

\subsection{BB84 with Efficiency Mismatch}
\label{sub:bb84_with_efficiency_mismatch}

We again consider the entanglement-based BB84 protocol. However, we now quantify the efficiency of each of Bob's detectors. In the $Z$-basis, his detector efficiencies are $\eta_{Z_0}$ and $\eta_{Z_1}$, and in the $X$-basis, his detector efficiencies are $\eta_{X_0}$ and $\eta_{X_1}$. This is an asymmetric scenario that one often encounters with practical QKD systems: no two detectors have the same exact detection efficiency. As has been suggested in Ref.~\cite{Winick2017}, Bob's measurement operators are the same as in the entanglement-based picture, except we now model his system as a qutrit, where the single photon exists in a single qubit subspace, and the third dimension describes the vacuum which will contribute to the no-click event. Bob's $Z$-basis states are $\{\ket{0}, \ket{1}, \ket{\emptyset} \}$ and his $X$-basis states are $\{\ket{+}, \ket{-}, \ket{\emptyset} \}$, with $\braket{\emptyset|0} = \braket{\emptyset|1} = 0$.

We can model the problem with the following POVMs for Alice:
\begin{center}
\begin{tabular}{c c c}
	\hline
	 $M_A^{(a_b, a_v)}$ & $a_b$ & $a_v$ \\
	 \hline	
	 $p_Z \ket{0}\bra{0}$ & 0 & 0 \\
	 $p_Z \ket{1}\bra{1}$ & 0 & 1 \\
	 $p_X \ket{+}\bra{+}$ & 1 & 0 \\
	 $p_X \ket{-}\bra{-}$ & 1 & 1 \\
   	\hline
\end{tabular}
\end{center}

Bob's POVMs are
\begin{center}
\begin{tabular}{c c c}
	\hline
	 $M_B^{(b_b, b_v)}$ & $b_b$ & $b_v$ \\
	  \hline	
	 $p_Z \eta_{Z_0} \ket{0}\bra{0}$ & 0 & 0 \\
	 $p_Z \eta_{Z_1} \ket{1}\bra{1}$ & 0 & 1 \\
	 $p_X \eta_{X_0} \ket{+}\bra{+}$ & 1 & 0 \\
	 $p_X \eta_{X_1} \ket{-}\bra{-}$ & 1 & 1 \\
	 $M_B^{\emptyset}$ & 2 & 0 \\
   	\hline
\end{tabular}
\end{center}
Here, 
\begin{equation}
	M_B^{\emptyset} = \id - \sum_{i,j=0}^{1} M_B^{(i,j)}.
\end{equation}
We generate keys from both the $Z$ and $X$-bases using the key-map POVM:
\begin{equation}
	Z_{A_v} = \{\kb{0}_{A_v}, \kb{1}_{A_v} \}.
\end{equation}

Similar to the previous example, we model the statistics with a depolarizing channel:
\begin{equation}
	\rho_{AB}' = (\id_A \otimes \mathcal{E}_{B}^{\text{dep}}(p)) (\kb{\psi}_{AB}).
\end{equation}
Due to the asymmetry in Bob's measurements, in addition to the error constraints, we must also include the constraints in which both parties measure the same values in the same basis. Thus, we have a total of four coarse-grained constraints: 
\begin{equation}
\begin{aligned}
	\Gamma^{(=)}_{Z} &= p_Z \left(\eta_{Z_0} \ket{0}\bra{0}_A \otimes \ket{0}\bra{0}_B \right. \\
	& \quad \left. + \eta_{Z_1} \ket{1}\bra{1}_A \otimes \ket{1}\bra{1}_B \right), \\
	\Gamma^{(=)}_{X} &= p_X \left(\eta_{X_0} \ket{2}\bra{2}_A \otimes \ket{+}\bra{+}_B \right. \\
	& \quad \left. + \eta_{X_1} \ket{3}\bra{3}_A \otimes \ket{-}\bra{-}_B \right), \\
	E_Z &= p_Z \left( \eta_{Z_1} \ket{0}\bra{0}_A \otimes \ket{1}\bra{1}_B \right. \\
	& \quad \left.+ \eta_{Z_0} \ket{1}\bra{1}_A \otimes \ket{0}\bra{0}_B \right),  \\
	E_X &= p_X \left( \eta_{X_1} \ket{+}\bra{+}_A \otimes \ket{-}\bra{-}_B \right. \\
	&\quad \left. + \eta_{X_0} \ket{-}\bra{-}_A \otimes \ket{+}\bra{+}_B \right).
\end{aligned}
\end{equation}

The key rate in the asymptotic limit is simple if we assume all detectors have the same efficiency. The situation is more interesting if we consider $\eta_{Z_0} = \eta_{X_0} = \eta_0$ and $\eta_{Z_1} = \eta_{X_1} = \eta_1$, in which there is detector efficiency mismatch within a single basis. This QKD configuration has been treated analytically by~\cite{Fung2009} in the asymptotic regime, and the key rate analytical formula derived is
\begin{equation}
\label{eq:eff_mismatch_analytic}
	r_1 = \min(\eta_0, \eta_1) \left[1-h_2(Q) \right] - h_2(Q).
\end{equation}
The numerical analysis in the asymptotic limit has been considered in Ref.~\cite{Winick2017}.


Here, we consider this problem in the nonasymptotic regime---taking into account the effects of statistical fluctuations during the QKD operations. Similar to the case of entanglement-based BB84, we assume equal security parameter $\varepsilon'$, and the value of each relevant parameter is tabulated in Tab.~\ref{tab:security_params}. 
Fig.~\ref{fig:bb84_mismatch_finite} plots the secret key rate per pulse in terms of the number of pulses are generated by Alice, assuming $\epssec = 10^{-10}$ and $\epscor = 10^{-15}$. We consider the case where $\eta_0 = 1$ and $\eta_1 = 25\%$ or $75\%$. We see that, in this case, the bound from von Neumann entropy consistently outperforms the bound from min-entropy.

\begin{figure}[H]
    \centering
    \begin{subfigure}[b]{\columnwidth}
	\includegraphics[width=\columnwidth]{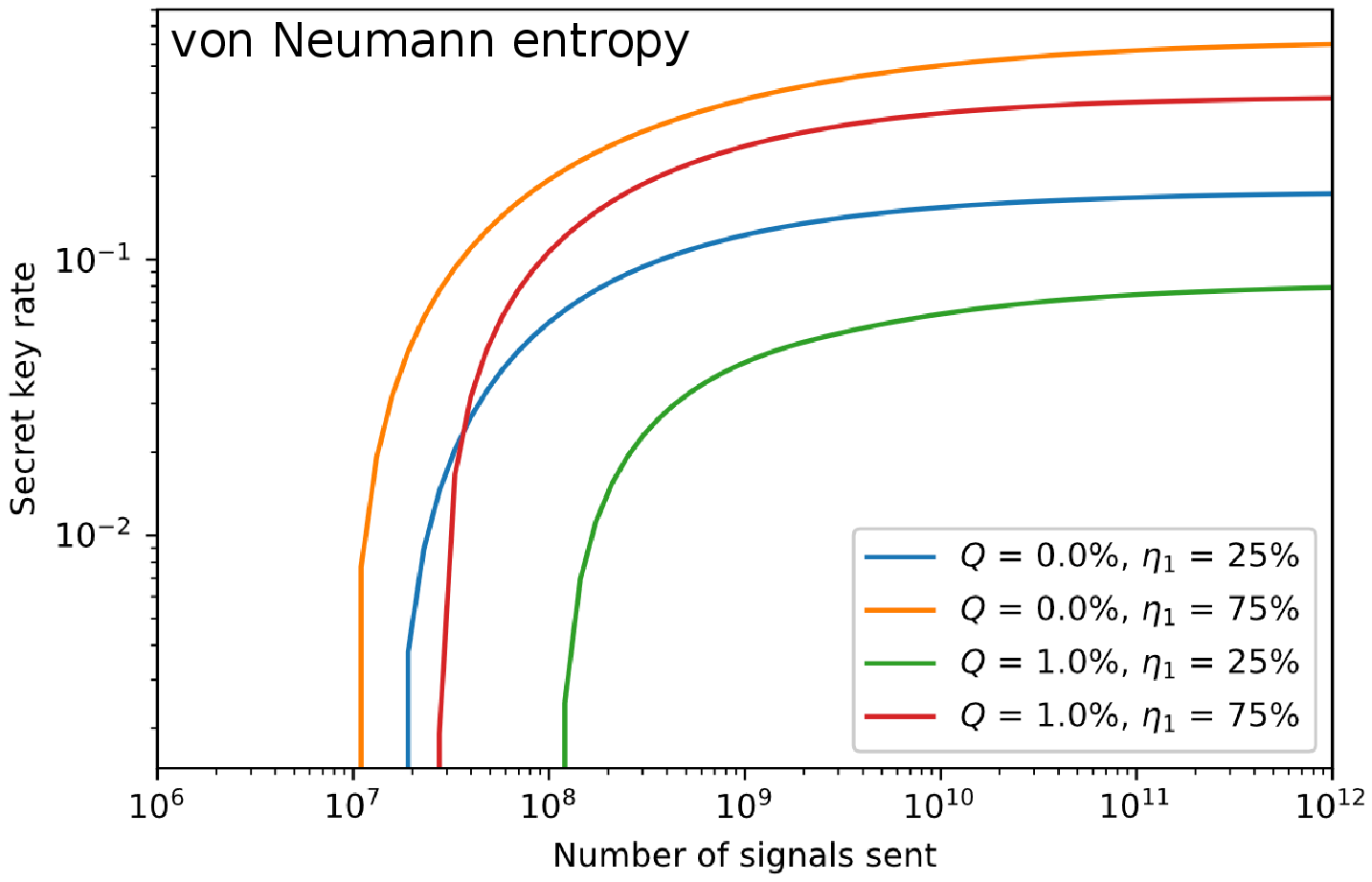}
	\end{subfigure}
	\begin{subfigure}[b]{\columnwidth}
	\includegraphics[width=\columnwidth]{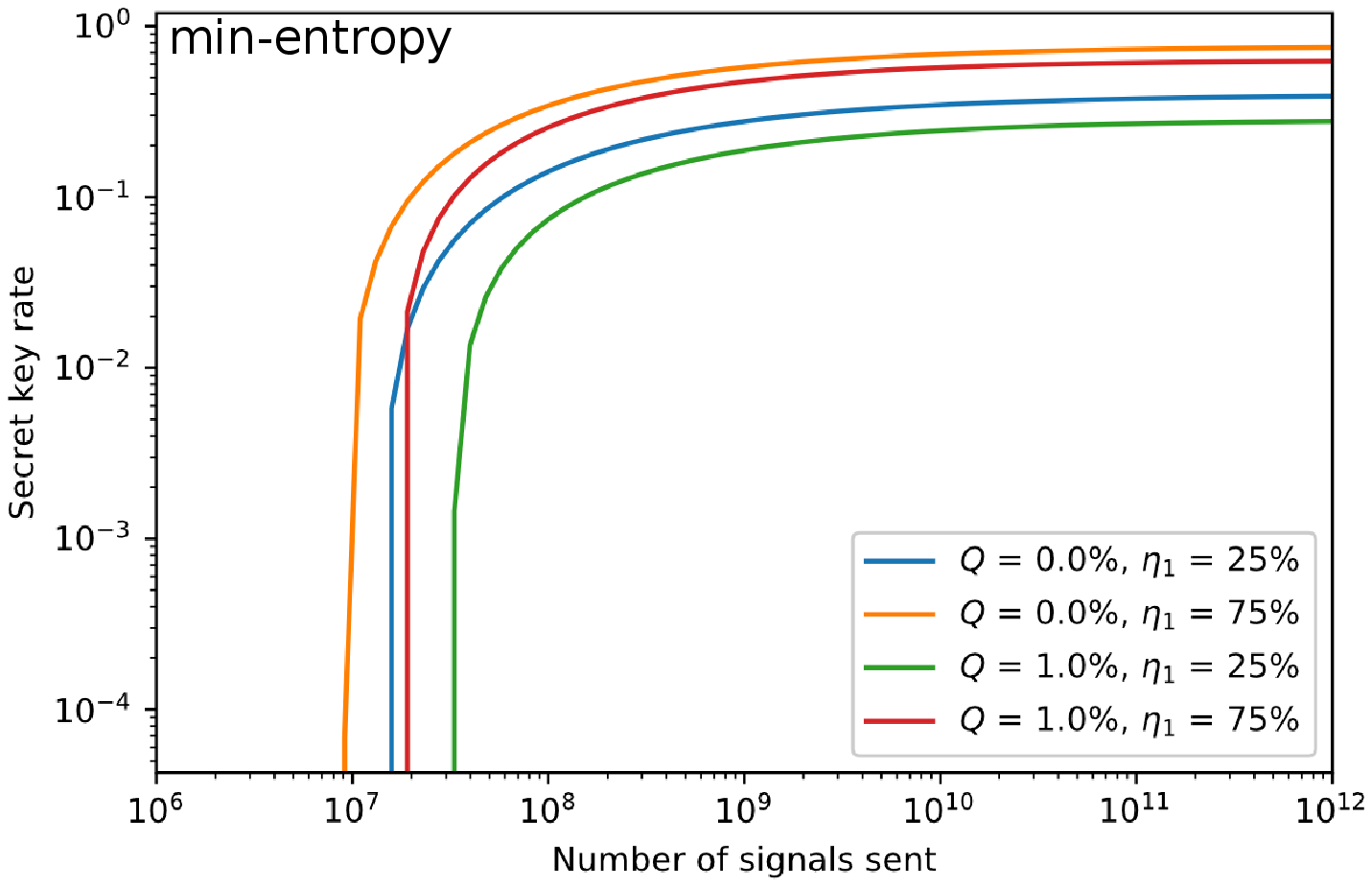}
	\end{subfigure}
    \caption{Secret key rate per pulse for the BB84 protocol calculated using the von Neumann entropy (top) and the min-entropy (bottom). We consider different values of error rate $Q$ and efficiency of the ``1'' detector $\eta_1$. We assume that the ``0'' detector has unit efficiency, i.e. $\eta_0 = 1$. For all plots, the lines are the results of our numerical method. }
    \label{fig:bb84_mismatch_finite}
\end{figure}

\subsection{Trojan-horse attack} 
\label{sub:trojan_horse_attack}

A common assumption for QKD analyses is that Eve cannot access Alice's laboratory. As its name suggests, the Trojan-horse attack is a side-channel attack where Eve tries to infiltrate Alice's laboratory to obtain information about the state Alice has sent towards Bob. In particular, Eve uses the optical link between Alice and Bob to launch a bright light pulse into Alice's supposedly secure module. The light pulse will reach Alice's encoding device and is encoded with the same information, e.g. the phase value $\varphi$, as the signal prepared by Alice. Some of these Trojan photons are reflected back to Eve. Although the information $\varphi$ is meant to be kept private by Alice, Eve can perform measurements on these back-reflected photons that may allow her to unambiguously learn about the value of $\varphi$. At the end of the QKD session, Eve can in principle obtain the same key as Alice and Bob---without her presence being detected by either Alice or Bob. The security of a QKD protocol can be seriously compromised if components are not installed to prevent these possible back-reflected lights. It has in fact been shown that the phase values $\varphi$ from an encoding device can be discriminated with higher than 90\% success probability using only three photons~\citep{Jain2014}.

Different solutions have been proposed to counteract the Trojan-horse attack. Alice could install an active phase randomizer~\citep{Gisin2006,Zhao2007} to remove the phase reference from Eve's hands or she could install a watchdog detector~\citep{Muller1997,Stucki2002} that alerts her when a bright pulse is injected into her setup. Countermeasures can also be realized with only passive components, e.g. optical fiber loops, filters, and isolators, which are simple to implement and to characterize experimentally~\citep{Lucamarini2015b}.

The security against the Trojan-horse attack using these passive countermeasures is based on the laser induced damage threshold (LIDT) of the optical components. Let us assume that Eve injects a coherent state $\ket{\sqrt{\mu_{\tin}}}$ with an average photon number $\mu_{\tin}$ into Alice's system. The Trojan photons will then acquire a phase modulation information $\varphi$ that will return to Eve as $\ket{e^{i\varphi} \sqrt{\mu_{\tout}}}$, where $\mu_{\tout} = \gamma \mu_{\tin}$, with $\gamma \ll 1$ describing the isolation factor of Alice's devices. If the value of $\mu_{\tin}$ is unbounded, the QKD protocol is insecure against a Trojan-horse attack. Fortunately, the value of $\mu_{\tin}$ is bounded by the LIDT of Alice's components.

Let us call $\dot{N}$ the maximum number of photons per second Eve is allowed to inject into Alice's lab without burning any of Alice's components. We assume that Alice has characterized this value of $\dot{N}$ very well. For a QKD system with a clock rate of $f$, Alice can assume the worst case scenario in which Trojan photons with a mean photon number $\mu_{\tout} = \dot{N} \gamma / f$ are emitted back to Eve at each transmission. Here, Alice has assumed that Eve distributes all her Trojan photons evenly in each round of transmission. The validity of this assumption relies on the convexity of the key rate as a function of $\mu_{\tout}$, which was shown in Ref.~\cite{Lucamarini2015b}.

Let us assume that Alice prepares the states she sends to Bob using a single-photon source in the prepare-and-measure scheme. She encodes her information in the phase difference $\varphi$ between the leading and the trailing single-photon pulse. To model the Trojan-Horse Attack in the BB84 protocol, we use the approach outlined in Ref.~\citep{Winick2017}. In this case, the state she prepares can be written as
\begin{equation}
\begin{aligned}
	\ket{\psi}_{ABE} &= \sqrt{\frac{p_Z}{2}} \left[ \ket{0}_A \ket{\phi_{z_+}}_{BE} + \ket{1}_A \ket{\phi_{z_-}}_{BE} \right] \\
	&+ \sqrt{\frac{p_X}{2}} \left[ \ket{2}_A \ket{\phi_{x_+}}_{BE} + \ket{3}_A \ket{\phi_{x_-}}_{BE} \right],
\end{aligned}
\end{equation}
where
\begin{equation}
\begin{aligned}
\ket{\phi_{z_{\pm}}}_{BE} &= \ket{z_{\pm}}_B \ket{\pm\sqrt{\mu_{\tout}}}_E \\
\ket{\phi_{x_{\pm}}}_{BE} &= \ket{x_{\pm}}_B \ket{\pm i \sqrt{\mu_{\tout}}}_E.
\end{aligned}
\end{equation}
Here, the states $\ket{z_{\pm}}$ and $\ket{x_{\pm}}$ are defined as
\begin{equation}
\begin{aligned}
\ket{z_{\pm}} &\equiv \frac{1}{\sqrt{2}} \left[ \ket{1}_l \ket{0}_t \pm \ket{0}_l \ket{1}_t\right], \\
\ket{x_{\pm}} &\equiv \frac{1}{\sqrt{2}} \left[ \ket{1}_l \ket{0}_t \pm i\ket{0}_l \ket{1}_t\right],
\end{aligned}
\end{equation}
where $\ket{n}_l$ and $\ket{n}_t$ denote an $n$-photon state in the leading and trailing pulse, respectively. Here, $p_Z$ denotes the probability she prepares a state in the $Z$-basis and $p_X = 1-p_Z$ denotes the probability she prepares a state in the $X$-basis. She sends the $B$ system to Bob, and Bob measures his state either in the $Z = \{\ket{0}, \ket{1}\}$ with probability $p_Z$ or in the $X = \{\ket{+}, \ket{-}\}$ basis with probability $p_X$. For simplicity, we again assume Alice and Bob make the their basis choices with the same probabilities $p_Z$ and $p_X$.

The protocol under consideration here is the prepare-and-measure BB84, thus the measurement POVMs for Alice are the standard basis:
\begin{center}
\begin{tabular}{c c c}
	\hline
	 $M_A^{(a_b, a_v)}$ & $a_b$ & $a_v$ \\
	  \hline	
	 $\ket{0}\bra{0}$ & 0 & 0 \\
	 $\ket{1}\bra{1}$ & 0 & 1 \\
	 $\ket{2}\bra{2}$ & 1 & 0 \\
	 $\ket{3}\bra{3}$ & 1 & 1 \\
   	\hline
\end{tabular}
\end{center}
and Bob's POVMs are
\begin{center}
\begin{tabular}{c c c}
	\hline
	 $M_B^{(b_b, b_v)}$ & $b_b$ & $b_v$ \\
	  \hline	
	 $p_Z \ket{z_+}\bra{z_+}$ & 0 & 0 \\
	 $p_Z \ket{z_-}\bra{z_-}$ & 0 & 1 \\
	 $p_X \ket{x_+}\bra{x_+}$ & 1 & 0 \\
	 $p_X \ket{x_-}\bra{x_-}$ & 1 & 1 \\
   	\hline
\end{tabular}
\end{center}

We again consider the depolarizing channel to simulate the statistics in our calculations:
\begin{equation}
	\rho'_{AB} = (\id_A \otimes \mathcal{E}^{\text{dep}}_B (p)) \Tr_E(\kb{\psi}_{ABE}),
\end{equation}
and we postselect for those cases where Alice and Bob choose the same basis:
\begin{equation}
	\Pi = \ket{0}\bra{0}_{A_b} \otimes \ket{0}\bra{0}_{B_b} + \ket{1}\bra{1}_{A_b} \otimes \ket{1}\bra{1}_{B_b}.
\end{equation}

Due to the asymmetry of this problem, we must include four constraints: two for the case when Alice and Bob obtain the same values and two for the case when they do not. They are
\begin{equation}
\begin{aligned}
	\Gamma^{(=)}_{Z} &= p_Z \left(\ket{0}\bra{0}_A \otimes \ket{0}\bra{0}_B + \ket{1}\bra{1}_A \otimes \ket{1}\bra{1}_B \right), \\
	\Gamma^{(=)}_{X} &= p_X \left(\ket{2}\bra{2}_A \otimes \ket{+}\bra{+}_B + \ket{3}\bra{3}_A \otimes \ket{-}\bra{-}_B \right), \\
	\Gamma^{(\neq)}_{Z} &= p_Z \left(\ket{0}\bra{0}_A \otimes \ket{1}\bra{1}_B + \ket{1}\bra{1}_A \otimes \ket{0}\bra{0}_B \right), \\
	\Gamma^{(\neq)}_{X} &= p_X \left(\ket{2}\bra{2}_A \otimes \ket{-}\bra{-}_B + \ket{3}\bra{3}_A \otimes \ket{+}\bra{+}_B \right).
\end{aligned}
\end{equation}

In addition to these constraints, we assume that Alice has characterized her source well. In other words, the state $\rho_{A} = \Tr_{BE}(\rho_{ABE}) = \Tr_B(\rho'_{AB})$ is known exactly to her, such that we can add a set of tomographically complete observables $\{\Omega_A^j\}$ on system $A$, and also add the calculated corresponding expectation values $\{\omega_j\}$ into the set of constraints. We add the constraints:
\begin{equation}
	\Tr\left[\rho_{AB}' (\Omega^j_A \otimes \id_B) \right] = \omega_j,
\end{equation}
which are known exactly with a failure probability of zero even in the nonasymptotic regime. Since $\rho_A$ is a valid normalized density operator, we find the values of $\{\Omega_A^j\}$ and $\{\omega_j\}$ in our simulations by computing the spectral decomposition of $\rho_A = \sum_j p_j \kb{\psi_j}_A$, such that $\Omega_A^j \equiv \kb{\psi_j}_A$ and $\omega_j \equiv p_j$. Under the Trojan-horse attack, Alice's state $\rho_{A}$ is constrained to be the following:
\begin{equation}
	\rho_{A} = \begin{pmatrix}
		p_Z/2 & 0 & \kappa &  \kappa^{*} \\
		0 & p_Z/2 & \kappa^{*} &  \kappa \\
		\kappa^{*} & \kappa & p_X/2 & 0 \\
		\kappa & \kappa^{*} & 0 & p_X/2 
	\end{pmatrix}, 	
\end{equation}
where $\kappa = \frac{1}{4}(1-i)\sqrt{p_Z p_X}\exp(-\mu_{\tout}[1+i])$ and $\kappa^*$ is the complex conjugate of $\kappa$.



\begin{figure}[H]
    \begin{subfigure}[b]{\columnwidth}
	\includegraphics[width=\columnwidth]{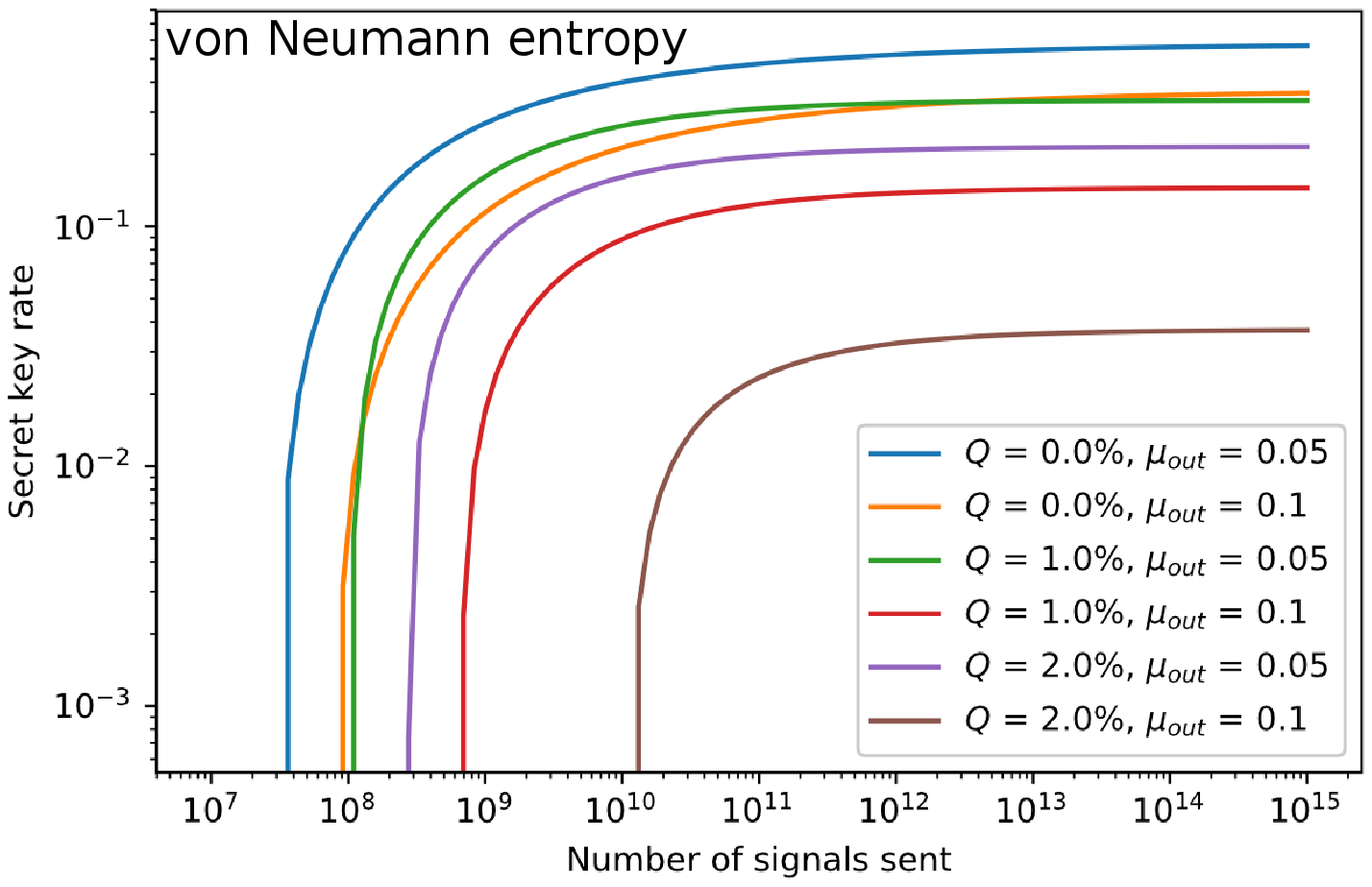}
	\end{subfigure}
	\begin{subfigure}[b]{\columnwidth}
	\includegraphics[width=\columnwidth]{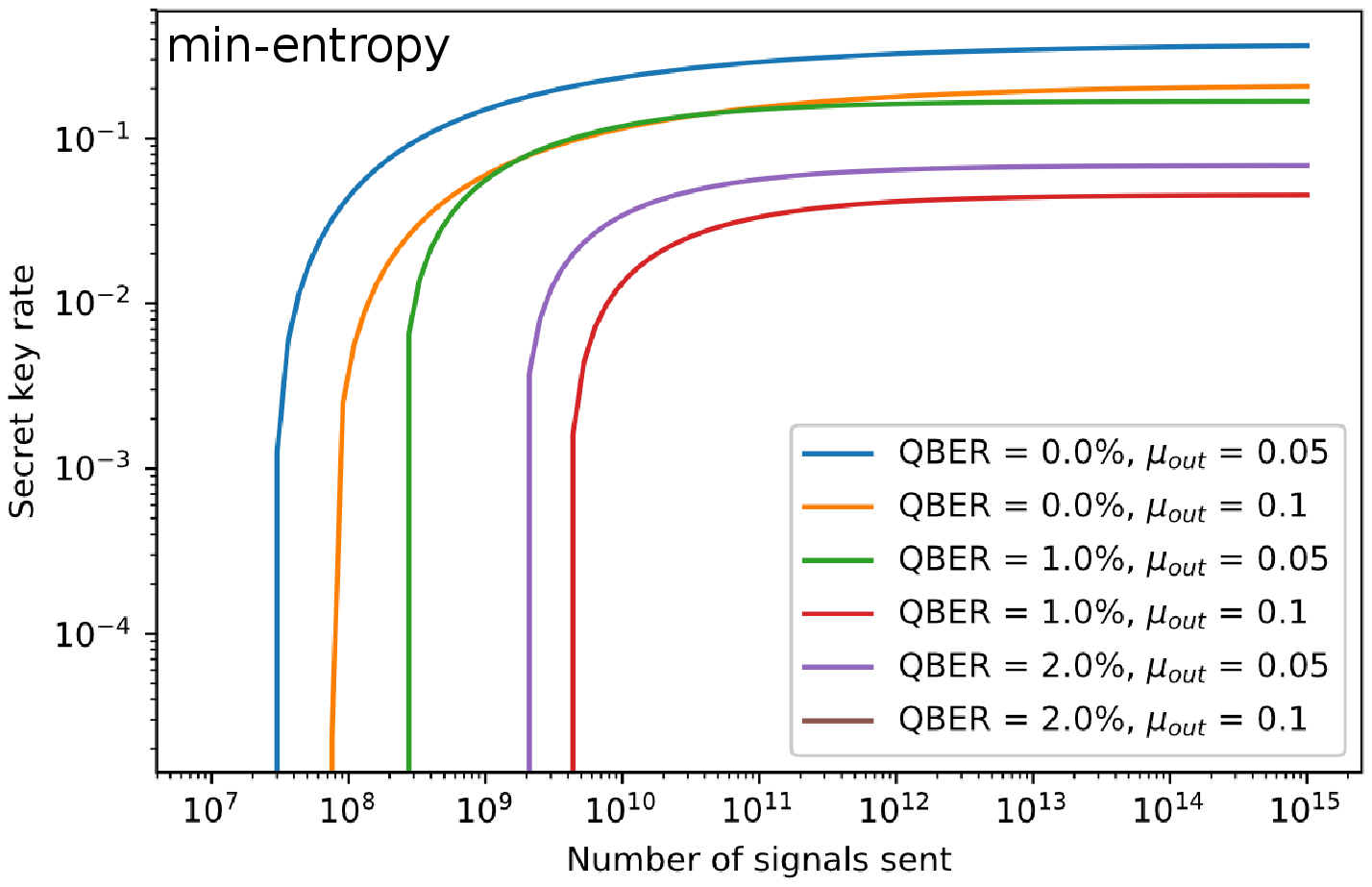}
	\end{subfigure}
    \caption{Nonasymptotic secret key rate per pulse for the BB84 protocol under a Trojan-horse attack for different values of error rate $Q$ and mean number of reflected Trojan photons $\mu_{\tout}$. The key rates are calculated using the von Neumann entropy (top) and the min-entropy (bottom). All lines are calculated using our numerical methods.}
    \label{fig:tha_finite}
\end{figure}

The key rate calculation for the asymptotic regime was considered in Ref.~\cite{Winick2017}. We consider the nonasymptotic regime, where we evaluate the security of an $\epssec$-secret and $\epscor$-correct QKD protocol with $\epssec = 10^{-10}$ and $\epscor = 10^{-15}$. The breakdown of the security parameters are tabulated in Tab.~\ref{tab:security_params}. 
Fig.~\ref{fig:tha_finite} shows the secret key rate as a function of the number of pulses sent by Alice. The bounds from the von Neumann entropy calculations outperform the bounds from the min-entropy except for the case of zero QBER. We note this is the first time nonasymptotic security of a QKD protocol under Trojan-horse attack has ever been studied.

\end{document}